\documentclass[%
reprint,
superscriptaddress,
amsmath,amssymb,
aps,
pra,
longbibliography,
]{revtex4-2}

\usepackage{amsthm}
\usepackage{physics}
\usepackage{amsmath}
\usepackage{graphicx}
\usepackage{qcircuit}
\usepackage{ragged2e} 

\usepackage{booktabs}
\usepackage{multirow}
\usepackage[english]{babel}

\usepackage{hyperref}
\hypersetup{colorlinks=true, citecolor=blue, urlcolor=blue, linkcolor=blue}

\newcommand{\superproj}[1]{|#1\rangle\rangle\langle\langle#1|}

\newtheorem{definition}{Definition}
\newtheorem{theorem}{Theorem}

\newtheorem{lemma}{Lemma}

\begin{document}

\preprint{APS/123-QED}

\title{Characterization of Unlearnable Noise with Mid-Circuit-Measurement-Based Cycle Benchmarking 
}

\author{M. H. Cheng}
 \email{mhc2617@ic.ac.uk}
\affiliation{Blackett Laboratory, Imperial College London, London, SW7 2AZ, UK\\
}
\affiliation{
 Fraunhofer Institute for Industrial Mathematics, Fraunhofer-Platz 1, 67663 Kaiserslautern, Germany\\
}

\author{Stefano Mangini}
\affiliation{Algorithmiq Ltd, Kanavakatu 3C 00160 Helsinki, Finland\\}
\affiliation{Department of Physics, University of Helsinki, P.O. Box 43, FI-00014 Helsinki, Finland.}

\author{V. Bartsch}
\affiliation{Fraunhofer Center for Maritime Logistics, Blohmstrasse 32, 21079 Hamburg, Germany\\}

\author{A. C. Medina}
\affiliation{
 Fraunhofer Institute for Industrial Mathematics, Fraunhofer-Platz 1, 67663 Kaiserslautern, Germany\\
}

\author{Sergey N. Filippov}
\affiliation{Algorithmiq Ltd, Kanavakatu 3C 00160 Helsinki, Finland\\}

\author{Matteo A. C. Rossi}
\affiliation{Algorithmiq Ltd, Kanavakatu 3C 00160 Helsinki, Finland\\}

\author{M. S. Kim}
\affiliation{QOLS, Blackett Laboratory, Imperial College London, London, SW7 2AZ, UK\\
}
\date{\today}

\begin{abstract}
    Noise characterization of multi-qubit entangling Clifford operations is a key practical bottleneck for quantum error mitigation and for the calibration, validation, and optimization of quantum error-correction protocols, especially in the presence of state preparation and measurement (SPAM) errors. Although cycle benchmarking can isolate some Pauli error components, it cannot resolve the problem of coupled error parameters, which leads to unlearnable degrees of freedom even in simple noisy gates, not to mention general $n$-qubit Clifford gates. Here we introduce mid-circuit-measurement-based generalized cycle benchmarking, a framework that makes otherwise unidentifiable Pauli fidelities and non-Markovian noise learnable via repeated measurements and classical post-processing. Applying the deferred feed-forward principle to generalized cycle benchmarking, we show that an insertion of mid-circuit measurements can reverse Pauli cycles induced by a general Clifford gate. This fact enables us to reveal a Pauli-noise learnability condition for Clifford gates. Assuming sufficient state preparation quality, we numerically demonstrate the feasibility of characterizing the previously unlearnable noise components. We implement the protocol on superconducting quantum processing units and validate its effectiveness in disambiguating the coupled noise components, benchmarked against conventional tomography. Finally, we observe consistent measurement-induced bit-flip bias and non-Markovian correlations, which define a range of applicability for the Pauli noise model and the proposed noise-characterization protocol.
\end{abstract}

\maketitle

Quantum noise is a fundamental obstacle to the practical application of quantum computers, and noise reduction schemes have become central to approaching quantum advantage in solving classically intractable problems~\cite{Kim2023utility, RobledoMoreno2025chemistry, Cai2023error}. Quantum noise tomography therefore emerges as an important tool for scaling up quantum hardware. This will allow hardware precision benchmarking, error mitigation~\cite{temme_2023prob, Filippov2023tensor, Filippov2024scalability, Mangini2024, Fischer2026}, and the advancement of efficient quantum error correction (QEC)~\cite{Bennett_1996qmix-state-entanglement}. State preparation and measurement (SPAM) errors remain a challenge for reliable noise isolation~\cite{Onorati2023fittingquantumnoise, Jayakumar_2024}. Cycle benchmarking (CB) and gate set tomography were subsequently introduced to account for unwanted extra noise couplings during tomography~\cite{Erhard2019-CBog, Nielsen_2021gatesettomography}.

The underlying principle of CB is simple: to isolate the SPAM error, one can magnify the quantum gate error via a repeated gate operation in between state preparations and measurements. The varying gate noise magnitude can be fitted away from the constant contribution of the SPAM errors. CB excels in characterizing a Pauli channel, where the channel's eigenvalues can be unambiguously amplified, and the collection of noise coefficients is consistent for each cycle. However, CB is limited by its Markovianity assumption and the types of gates that it can effectively characterize~\cite{Erhard2019-CBog}. A general $n$-qubit Clifford gate induces an irreducible cycle of Pauli propagation, which cannot be canceled with single-qubit Clifford gates~\cite{Chen_2023learnability}. For a general $n$-qubit Clifford gate, the associated Pauli transfer graph grows exponentially with $n$, which can lead to an exponentially large number of unlearnable (gauge) degrees of freedom. As a result, standard CB becomes increasingly difficult to apply as a full noise-characterization tool for large multi-qubit Clifford gates on near-term devices. The bottleneck to CB characterization is the assumption that single-qubit Clifford gates are the only noiseless tomographic primitives, but the rapid improvement of quantum hardware now offers a new option. Mid-circuit measurements (MCMs) have recently emerged as an important single-qubit capability on multiple quantum computing platforms~\cite{Bluvstein2024, qfvd-93lw, Bengtsson2024willowmcm}. Although known for their role in QEC syndrome measurement, MCMs have also found applications in quantum tomography~\cite{Zhuravlev2020}, non-Hermitian Hamiltonian dynamics~\cite{Luchnikov2017, Grimaudo2020}, and quantum algorithms through dynamic circuits~\cite{PhysRevLett.127.100501, lemelin2025mcmprimitive}. Error characterization schemes for MCMs have been proposed in learning composite Clifford gates and MCMs~\cite{Chen_2023generalized, Hines_2025}, and as fidelity assessment for hardware MCMs~\cite{Hothem2025errorratenatcomm}.

In this paper, we present a comprehensive study of MCMs as tomographic primitives for generalized CB. The learnability of a given Pauli fidelity depends on how the Clifford gate rearranges the operator's Pauli weight structure. Based on this, we introduce the \textit{Pauli weight mismatch} --- a combinatorial quantity capturing the difference in support between the input and output Pauli operators. Building on Ref.~\cite{Chen_2023learnability}, we show that the MCMs cost of learning all Pauli fidelities of an entangling $n$-qubit Clifford gate, including the previously unlearnable ones, is fully characterizable by this single quantity. The Pauli weight mismatch plays a dual role: it identifies which Pauli fidelities are learnable in standard CB, and it specifies precisely where MCMs must be inserted in generalized CB. Hence, the Hamming weight of the mismatch is the MCM resource cost of breaking a given Clifford cycle. The technical insight that enables MCMs-based CB is the deferred feed-forward principle, a method that allows the implementation of classical feed-forward of MCMs in post-processing. Our method generalizes the post-processing techniques introduced in Refs.~\cite{Chen_2023generalized, Hines_2025}. The complete generalized CB protocol is summarized in Figure~\ref{fig:benchmarking_and_graph}. The Pauli weight mismatch framework also admits graph-theoretic pictures that generalize Ref.~\cite{Chen_2023learnability}, illustrated in Figure~\ref{fig:pauli_transfer_plot}.

We study the requirements on noisy MCMs for reliable Clifford noise learning and the situation where the Clifford gate consists of local noise models. The learnability condition of the Pauli weight mismatch holds when MCMs realize high-fidelity post-measurement projections onto $\ket{0}$ and $\ket{1}$ in the $z$-basis. All noise components associated with both MCMs and Clifford gates are learnable, including the classical readout error. As a corollary, we show that Pauli fidelities of the CNOT gate remain approximable even when only the preparation of $\ket{0}$ is effectively noiseless. Meanwhile, we numerically demonstrate how to exploit the noise locality constraints on the Pauli weight mismatch to reduce the MCMs resource cost of learning the local noise models of the Clifford gate.

Beyond its role as a resource for resolving unlearnable noise, MCMs serve a second function in our framework: as a diagnostic for the Markovianity assumption that underlies CB. We show that under separable Markovian noise, the flip statistics of repeated MCM readouts follow a binomial distribution governed by a single Pauli fidelity. Any deviation from this distribution is therefore a direct, quantitative signature of non-Markovian dynamics --- invisible to standard CB, which collapses MCM information into a single final measurement. This binomial analysis turns MCMs into a probe of the very assumption on which the rest of the protocol relies, defining the regime in which the Pauli noise model is operationally valid.

We validate both roles experimentally on \texttt{ibm-aachen} and \texttt{ibm-pittsburgh}. For the resource role, we extract unlearnable Pauli fidelity pairs of CNOT gates and achieve within-CPTP-bound characterization on both \texttt{ibm-aachen} and \texttt{ibm-pittsburgh}, surpassing the precision of direct fidelity measurements. For the diagnostic role, the binomial analysis reveals a measurement-induced bit-flip bias attributable to classical readout asymmetry, and a leakage-induced exponential tail consistent with non-Markovian memory in the readout process. Together, our results demonstrate the potential for integrating MCMs with CB for the characterization of larger Clifford gates, combining the sampling efficiency of CB with partial SPAM robustness and an explicit handle on the breakdown of Markovianity.

In Section~\ref{section: Framework}, we review the existing CB framework, introduce quantum instruments with classical feed-forward (QICF), and expose the learnability of MCM noise. We establish the theoretical basis for generalized CB with MCMs for $n$-qubit entangling Clifford gates, encompassing a framework for Pauli noise decoupling and the detection of non-Markovianity. In Section~\ref{section: algorithm}, we condense these theoretical tools into a set of algorithmic primitives that implement the quantum circuit and post-processing pipeline depicted in Figure~\ref{fig:benchmarking_and_graph}. In Section~\ref{section: numerics}, we perform numerical noise characterizations of CNOT gates under varying noise constraints on the MCMs, and demonstrate the scalability of the protocol for noise learning of CNOT ladders. Finally, in Section~\ref{section: Experiment}, we perform generalized CB with MCMs on CNOT gates implemented on the \texttt{ibm-aachen} and \texttt{ibm-pittsburgh} quantum processors. Beyond demonstrating a partially SPAM-robust noise characterization of unlearnable noise pairs, we perform an analysis of non-Markovian deviations associated with leakage errors.

\section{Theoretical Framework}
\label{section: Framework}
We briefly define the conventions and superoperators used throughout this paper~\cite{greenbaum2015introductionquantumgateset}. An \textit{unspecified} $n$-qubit Pauli is labeled with $\alpha, \beta$. An \textit{unspecified but single-qubit} Pauli is labeled with $i,j,k$. We reserve the indices $l, m, n$, with the exception of Figure~\ref{fig:pauli_transfer_plot}, to respectively denote the number of measured qubits, unmeasured qubits, and total qubits in a multi-qubit MCM process. We use $p, q, r, s$ to denote the qubit locations on which the \textit{specified} Pauli operators act. We avoid superoperator representation on Pauli observables to keep the presentation clear. The bit-flip channel of a single-qubit Pauli-$X$ at site $p$ is written as $\mathcal{X}_p(\cdot) = X_p(\cdot)X_p$. For binary vectors $\vec{v} \in \mathbb{Z}_2^N$ and $\vec{k} \in \mathbb{Z}_2^l$, we adopt the vectorized notation $\mathcal{X}^{\vec{v}}(\cdot) = X^{\vec{v}}(\cdot)X^{\vec{v}}$, where $X^{\vec{v}} = \prod_{p=1}^N X_p^{\vec{v}[p]}$ is a product of Pauli-$X$ operators over the $p$-th qubits. The vector sign is removed when $\vec{v}$ is one-bit. The projection superoperator onto the computational basis state $\ket{\vec{k}}$ is denoted $\superproj{\vec{k}} = \dyad{\vec{k}}(\cdot)\dyad{\vec{k}}$. A Pauli channel is denoted as $\Lambda(\cdot)$ and indexed if the channel is conditional.

\subsection{Cycle Benchmarking}

In quantum noise characterization, the noise dynamics of each quantum gate are typically assumed to be mutually independent. A noisy $n$-qubit Clifford gate can be modeled as the composition of an ideal Clifford gate $G$ followed by a $4^n$-dimensional noise channel $\Omega$, decomposed on the Pauli basis $P_\alpha, P_\beta \in \mathcal{P}^n = \{I, X, Y, Z\}^{\otimes n}$:
\begin{equation}
    (\Omega \circ \mathcal{G})(\cdot) = \sum_{P_\alpha, P_\beta \in \mathcal{P}^n}c_{\alpha, \beta}P_{\alpha}G(\cdot)G^\dagger P_{\beta},
\end{equation}
where $c_{\alpha, \beta}$ is a positive semi-definite scalar, and $\mathcal{G}(\cdot) = G(\cdot)G^\dagger$. The goal of noise characterization is to extract information about $c_{\alpha, \beta}$ so that an inverse channel $\Omega^{-1}$ can be constructed, which finds application in, for example, quantum error mitigation~\cite{RevModPhys.95.045005,temme_2023prob,Kim2023utility,Filippov2023tensor,Filippov2024scalability,Fischer2026}.

Given the limited $4^n$ Pauli measurement basis, a SPAM-free characterization of a complete noise channel with $4^n\times4^n$ degrees of freedom is extremely challenging, if not impossible~\cite{Erhard2019-CBog}. Pauli twirling is introduced to obtain a simplified Pauli channel from the original~\cite{PhysRevA.88.012314}:
\begin{equation}
    (\mathcal{T}(\Omega) \circ \mathcal{G})(\cdot) = \frac{1}{4^n}\sum_{T_\alpha \in \mathcal{P}^n}G(T_\alpha)^\dagger(\Omega \circ G)\{T_\alpha(\cdot)T_\alpha\}G(T_\alpha),
    \label{eqn: twirled Clifford gate}
\end{equation}
where the channel is sandwiched and averaged over $4^n$ Pauli operators, canceling the coherent degrees of freedom. Given a random sequence of Paulis $\{T_1, T_2, \ldots, T_n\}$, Pauli twirling can be approximated when noise channels are local:
\begin{equation}
    (\mathcal{T}_N(\Omega) \circ \mathcal{G})(\cdot) = \frac{1}{N}\sum_{\alpha = 1}^NG(T_\alpha)^\dagger(\Omega \circ G)\{T_\alpha(\cdot)T_\alpha\}G(T_\alpha),
    \label{eqn: twirled approx Clifford gate}
\end{equation}
yielding the effective channel $\mathcal{T}(\Omega) = \Lambda$. A Pauli channel is convenient to characterize because its eigenbasis consists of the Pauli operators, with eigenvalues known as Pauli fidelities:
\begin{align}
    \Lambda(\cdot) &= \sum_{P_\alpha \in\mathcal{P}^n}c_{\alpha, \alpha}P_{\alpha}(\cdot)P_{\alpha}, \label{eqn: pauli channel}\\
    \text{Tr}(P_\beta\Lambda(P_\alpha)) &= \lambda_{P_\alpha}\delta_{\alpha, \beta},
    \label{eqn: PTM}
\end{align}
where $\lambda_{P_\alpha}\delta_{\alpha,\beta}$, together with a Kronecker delta, is the diagonal Pauli transfer matrix (PTM). This structure allows stabilizer states to serve as probes that accumulate noise eigenvalues during quantum state tomography, after which the Pauli channel is reconstructed via the Walsh--Hadamard transform of the PTM~\cite{10.1145/3408039}.

A Pauli noise representation is essential for eliminating SPAM contamination. In typical noise channel tomography, the Pauli noise channel of a quantum gate is sandwiched between state preparation $\mathcal{S}(\cdot)$ and measurement $\mathcal{M}(\cdot)$ errors~\cite{Chen_2023learnability}:
\begin{equation}
    \mathcal{M}(\Lambda(\mathcal{S}(P_{\alpha}))) = (A\lambda_{P_\alpha}+B)P_{\alpha},
\end{equation}
for some SPAM errors $A$ and $B$, so that the factor $(A\lambda_{P_\alpha}+B)$ prevents unbiased extraction of $\lambda_{P_\alpha}$. CB addresses this by introducing an additional degree of freedom --- the cycle --- defined as the sequential application of the channel $\Lambda$. By fitting the observable $f(N)$ as a function of the number of cycles~\cite{Jayakumar_2024, greenbaum2015introductionquantumgateset}:
\begin{equation}
    f(N) = \mathcal{M}(\Lambda^N(\mathcal{S}(P_{\alpha}))) = (A\lambda_{P_\alpha}^N+B)P_{\alpha},
\end{equation}
the contributions $A$, $B$, and $\lambda_{P_\alpha}$ are decoupled, enabling effective characterization of the Pauli fidelities $\lambda_{P_\alpha}$.

In the special case of Pauli channel characterization, CB is sufficient for complete SPAM-free noise tomography. However, when CB is applied to a noisy Clifford gate that admits the gate-noise decomposition $\Lambda\circ\mathcal{G}$, distinct Pauli channel coefficients are actively coupled together. For example, in the CNOT gate, a weight-1 Pauli eigenvalue couples with a weight-2 Pauli eigenvalue. Let $i, j, k$ index single-qubit Paulis in $\{I, Z, X, Y\}$, so that $P_{0i} = I\otimes P_i$ and $P_{jk} = P_j\otimes P_k$. The CB of CNOT yields:
\begin{equation}
    \mathcal{M}((\Lambda\circ \mathcal{G})^{2N} (\mathcal{S}(P_{0i}))) = (A(\lambda_{0i}\lambda_{jk})^N+B)P_{0i}.
    \label{eqn: dilemma}
\end{equation}
Such pairs are termed \emph{unlearnable noise}: the Pauli eigenvalues are identifiable only up to a multiplicative gauge factor $g$, since $\lambda_{0i}\lambda_{jk} = (\lambda_{0i}g) \cdot (\lambda_{jk}/g)$ along Clifford-induced Pauli cycles. Due to the pair structure, $\lambda_{0i}\lambda_{jk}$ is learned in $2N$ cycles, reflecting the fact that CB of CNOT gates is implemented in pairs. In general, let us define the Pauli weight mismatch.
\begin{definition}[Pauli Weight Mismatch]
    Let $P_\alpha$ and $P_\beta$ be two $n$-qubit Paulis. Let $\mathrm{pt}(P_\alpha)\in\mathbb{Z}_2^n$ be the Pauli weight vector that records the locations of non-identity Paulis in $P_\alpha$. The Pauli weight mismatch is the bitwise summation between the Pauli weight vectors of $P_\alpha$ and $P_\beta$:
    \begin{equation}
        \delta(P_\alpha, P_\beta):=\mathrm{pt}(P_\beta)\oplus\mathrm{pt}(P_\alpha).
        \label{eqn: mismatch}
    \end{equation}
\end{definition}
Standard CB, illustrated as the repeated operation of circuit block (a) in Figure~\ref{fig:benchmarking_and_graph}, has the following limitation when it comes to characterizing $n$-qubit Clifford gates $\mathcal{G}$~\cite{Chen_2023learnability}.
\begin{theorem}[Learnability of Pauli Fidelity in CB]
    \label{theorem: learnability}
    Let $P_\alpha$ and $P_\beta$ be two $n$-qubit Paulis, and $\mathcal{G}$ an $n$-qubit Clifford gate such that $\mathcal{G}(P_\alpha)=P_\beta$. When single-qubit Clifford gates are assumed to be noiseless, the Pauli fidelity of $P_\beta$, $\lambda_{P_\beta}$, is learnable if there is no Pauli weight mismatch between $P_\alpha$ and $P_\beta$:
    \begin{equation}
        \delta(P_\alpha, P_\beta)=\vec{\mathbf{0}}.
        \label{eqn: learnability}
    \end{equation}
\end{theorem}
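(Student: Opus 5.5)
The plan is to construct an explicit CB sequence whose decay rate is exactly $\lambda_{P_\beta}$ (times fidelities already known to be learnable). For this I will use the single-qubit Clifford layers, which are assumed noiseless, to close the Pauli cycle that $\mathcal{G}$ induces.

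First, observe what $\delta(P_\alpha,P_\beta)=\vec{\mathbf{0}}$ gives us. It says that $P_\alpha$ and $P_\beta$ have non-identity factors on exactly the same qubits. At each such site $p$, the single-qubit Cliffords act transitively on $\{X,Y,Z\}$. So we can choose a tensor product of single-qubit Cliffords $C=\bigotimes_p C_p$ with $\mathcal{C}(P_\beta)=\pm P_\alpha$. On sites where both factors are $I$, we take $C_p=I$.

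The cycle is the composite $\mathcal{C}\circ\Lambda\circ\mathcal{G}$, with $\mathcal{C}$ noiseless. Using the Pauli-diagonal form of the twirled channel, Eq.~\eqref{eqn: PTM}, we have
\begin{equation}
    \mathcal{C}\circ\Lambda\circ\mathcal{G}(P_\alpha)=\mathcal{C}(\lambda_{P_\beta}P_\beta)=\pm\lambda_{P_\beta}P_\alpha .
\end{equation}
Thus $P_\alpha$ is an eigenoperator of the cycle, and the cycle length is one. Iterating $N$ times between SPAM gives
\begin{equation}
    f(N)=A(\pm\lambda_{P_\beta})^N+B ,
\end{equation}
where the sign is tracked classically. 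Fitting $f(N)$ over several $N$ then separates $A$ and $B$ from $\lambda_{P_\beta}$, exactly as in the standard CB argument. Pauli twirling, which conjugates by random Paulis commuting up to sign with $C$, is absorbed into the sign bookkeeping because $C$ maps Paulis to Paulis.

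The step needing most care is justifying that inserting $\mathcal{C}$ does not introduce gauge freedom. In the gauge picture of Ref.~\cite{Chen_2023learnability}, any multiplicative gauge $g$ assigns factors to Paulis, and noiseless single-qubit Cliffords fix the gauge factors to be equal on Paulis with the same support. The closed cycle $P_\alpha\to P_\beta\to P_\alpha$ therefore contributes $g_{\mathrm{pt}(P_\beta)}/g_{\mathrm{pt}(P_\alpha)}=1$, so $\lambda_{P_\beta}$ is gauge invariant and hence learnable. I would state this as the key lemma: gauge transformations consistent with noiseless local Cliffords depend only on the support pattern $\mathrm{pt}(\cdot)$. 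This is precisely why a zero mismatch suffices.
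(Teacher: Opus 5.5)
Your proposal is correct and is essentially the argument the paper relies on. The paper gives no proof of its own; it attributes the statement to Ref.~\cite{Chen_2023learnability}, where a zero-mismatch (self-loop) edge of the pattern transfer graph is shown to be learnable in exactly your way, by closing the cycle with a noiseless local Clifford and fitting the CB decay, and where gauge factors depend only on $\mathrm{pt}(\cdot)$, matching the loop-space/cut-space picture sketched after the theorem. Your explicit CB construction already proves the claim on its own, so the closing gauge paragraph is a consistency check rather than a needed step (gauge invariance alone would not imply learnability without such a construction).
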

For an arbitrary $n$-qubit Clifford gate $\mathcal{G}$, Theorem~\ref{theorem: learnability} asserts that most Pauli fidelities are in fact not learnable, since $\delta(P_\alpha, P_\beta)\neq \vec{\textbf{0}}$ in the general case.
This learnability structure admits a natural graph-theoretic interpretation~\cite{Chen_2023learnability}. The Pauli transitions induced by $\mathcal{G}$ form a cycle graph, whose edges decompose
into two complementary subspaces.
Pauli operators $P_\beta$ satisfying Eq.~(\ref{eqn: learnability}) span the \emph{loop-space}, corresponding to closed loops in the graph, while those violating it span the \emph{cut-space},
corresponding to open lines.
Figure~\ref{fig:pauli_transfer_plot} illustrates these cycle graphs for the CNOT and SWAP gates
on the left-hand side of the large arrows.

As a consequence of this theorem, CB has found only limited applications,
typically restricted to characterizing a small set of two-qubit entangling Clifford gates.
Moreover, the broader significance of $\delta(P_\alpha, P_\beta)$ beyond delimiting
CB learnability has remained unclear. Introducing MCMs as a tomographic primitive resolves both bottlenecks. This observation motivates the general theory of CB with MCMs developed in the following section.

\subsection{Generalized Cycle Benchmarking with MCMs}
\label{section: generalized cycle benchmarking}

\begin{figure*}
    \centering
    \includegraphics[width=1\linewidth]{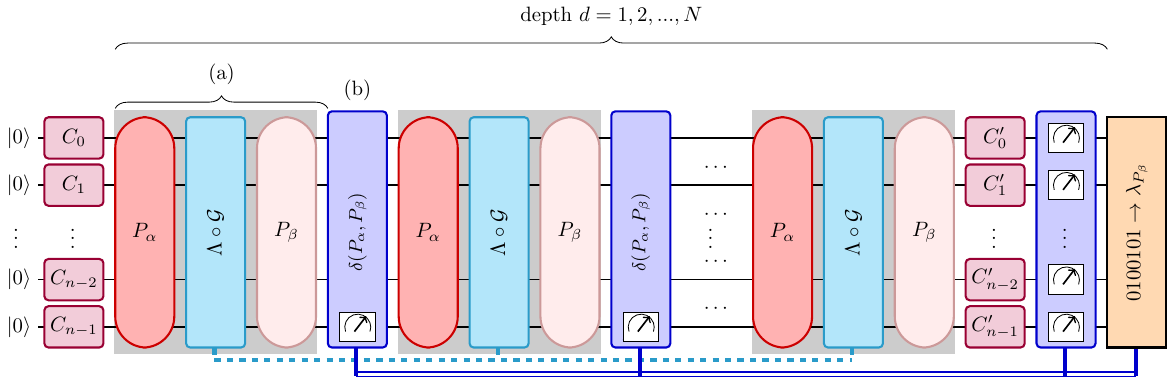}
    \caption{\justifying
Pauli noise learning of $\lambda_{P_\beta}$ of noisy Clifford gate $\Lambda\circ\mathcal{G}$ (cyan) with MCMs with classical feed-forwards (purple). (a) Traditional CB protocol for learning $\Lambda$ suffers from unintended noise coupling induced by $\mathcal{G}$ that transforms $n$-qubit Pauli operator $P_\alpha$ (red) to $P_\beta$ (light pink) as $P_\alpha$ propagates to the measurements for final post-processing (orange). (b) By inserting single-qubit Clifford gates and partial MCMs in between $\Lambda\circ\mathcal{G}$, arbitrary $P_\beta$ is reset back to $P_\alpha$. The Pauli weight mismatch $\delta(P_\alpha,P_\beta)$ of Eq.~(\ref{eqn: mismatch}) computes the location of the partial MCMs. Additionally, non-Markovian processes (dashed cyan) can be characterized by computing the deviation from Markovianity. Classical feed-forwards can be deferred to classical post-processing (purple line) to realize CTRL-$M$ in Eq.~(\ref{eqn: QICF with cerr}). In practice, noise of MCMs can be characterized with (b) alone to enhance the characterization quality.}
    \label{fig:benchmarking_and_graph}
\end{figure*}
\begin{figure*}
    \centering
    \includegraphics[width=1.0\linewidth]{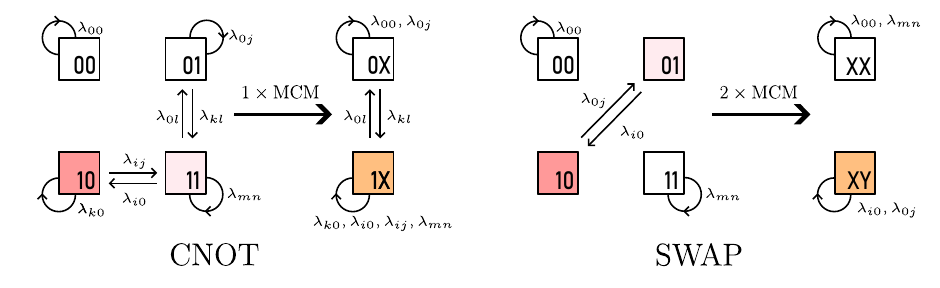}
    \caption{\justifying CNOT/SWAP gates pattern transfer graph under CB with MCMs. A Pauli transfer graph captures the structure of coupled Paulis in terms of their Pauli weight~\cite{Chen_2023learnability}. Coupled Pauli weights between $P_\alpha, P_\beta$ (red, light pink as in Figure~\ref{fig:benchmarking_and_graph}) generate coupled noise coefficients that lead to unintended noise coupling, which limits tomographic precision in CB. For CNOT, the example coupling noise pairs have weight $\text{pt}(P_\alpha)= [1,0], \text{pt}(P_\beta)= [1,1]$. By inserting MCMs to the second qubit $\text{pt}(P_\alpha) \oplus \text{pt}(P_\beta) = [0,1]$, the Pauli noise pair $\{\lambda_{i0}, \lambda_{ij}\}$ following the convention in Eq.~(\ref{eqn: dilemma}) can be decoupled in classical post-processing (orange). Paulis with weight $\text{pt}(P) = [1,X]$ for $X = 0, 1$ become learnable. For SWAP, the coupling noise pairs have weight $\text{pt}(P_i)= [1,0], \text{pt}(P_j)= [0,1]$. MCMs inserted at both qubits $\text{pt}(P_\alpha) \oplus \text{pt}(P_\beta) = [1,1]$ can decouple the noise pair $\lambda_{0j},\lambda_{k0}$. Paulis with weight $\text{pt}(P) = [X,Y]$ for $X = 0, 1$, $Y = 1,0$ become learnable.}
    \label{fig:pauli_transfer_plot}
\end{figure*}

MCMs, a key diagnostic tool in QEC, have recently
been identified as algorithmic primitives~\cite{lemelin2025mcmprimitive}. They have applications to dynamic circuits~\cite{PhysRevLett.127.100501}, but their other roles in quantum computation remain an open question. In this work, we identify that MCMs can also serve as an algorithmic primitive for generalized CB. Specifically, MCMs can be treated as a resource for state preparation that deciphers noise parameters
otherwise unidentifiable through standard CB, even when the MCMs themselves are
susceptible to noise. Meanwhile, MCMs provide an additional degree of freedom to probe the very assumption underlying a CB protocol: Markovianity.

In this section, we establish the theoretical foundations
of this idea, identify realistic conditions under which unlearnable noise of MCMs
and Clifford gates becomes accessible, and motivate two protocols that form the
basis of the algorithms presented later.

\subsubsection{Uniform Stochastic Instruments}

In recent years, uniform stochastic instruments (USIs) have been introduced as a standardized quantum model for noisy MCMs~\cite{beale2023randomizedcompilingsubsystemmeasurements}. Similarly to twirled generic quantum gates, a USI is a noisy MCM under randomized compilation, assuming Markovianity between events of MCMs. In a quantum circuit of $n$ qubits, a single round of USI operation divides the circuit into $m$ unmeasured qubits and $l$ measured qubits ($n = m+l$). Building on the original definition of USI~\cite{mclaren2023stochasticerrorsquantuminstruments}, we define a general form of USI as follows:
\begin{align}
    \mathcal{M}_{\textbf{e}}(\cdot) &=  \sum_{\vec{r}, \vec{k}\in\mathbb{Z}_2^l}q_{\vec{r}}\mathcal{M}_{\vec{k}}(\cdot)\otimes\mathbf{e}_{\vec{k}+\vec{r}}, \label{eqn: USI with cr-err.}\\
    \mathcal{M}_{\vec{k}}(\cdot) &= \sum_{\vec{a}, \vec{b} \in \mathbb{Z}_2^l}p_{\vec{a}, \vec{b}}\left[\Lambda_{\vec{a}, \vec{b}}\otimes \mathcal{X}^{\vec{b}}\superproj{\vec{k}}\mathcal{X}^{\vec{a}}\right](\cdot)
    \label{eqn: standard USI},
\end{align}
where the subscript $\textbf{e}$ is used to denote the classical read-out bits. In this set-up, the classical read-out assignment error in Eq.~(\ref{eqn: USI with cr-err.}) is isolated from the purely quantum errors in Eq.~(\ref{eqn: standard USI}). In other words, $\mathcal{M}_{\mathbf{e}}$ encodes noisy \textit{classical} events $\{\vec{r}\}$ that occur with probability $q_{\vec{r}}$ and contaminate the classical register $\mathbf{e}_{\vec{k}+\vec{r}}$. In the classical noise model, independent classical read-out assignment events are assumed:
\begin{equation}
    q_{\vec{r}} = q_{\vec{r}[0]}\cdots q_{\vec{r}[l]}.
\end{equation}
Each $q_{\vec{r}[i]}$ is the read-out error rate for \textit{both} the $0\rightarrow 1$ and $1\rightarrow 0$ error processes. Randomized compilation naturally averages over the two types of read-out error events (see the derivation in Appendix~\ref{appendix: Read-out error derivation}).

Meanwhile, the purpose of $\mathcal{M}_{\vec{k}}$ is to describe the \textit{quantum} events acting on the $(m+l)$ qubits given an $l$-qubit projection $\superproj{{\vec{k}}}$. Specifically, $\mathcal{M}_{\vec{k}}$ consists of events $\{\Lambda_{\vec{a}, \vec{b}}, \mathcal{X}^{\vec{a}/\vec{b}}\}$ that occur with normalized probability $p_{\vec{a},\vec{b}}$ ($\sum_{\vec{a},\vec{b} \in \mathbb{Z}_2^l}p_{\vec{a}, \vec{b}} = 1$). $\Lambda_{\vec{a}, \vec{b}}$ are Pauli channels (Eq.~(\ref{eqn: pauli channel})), and $\mathcal{X}^{\vec{a}/\vec{b}}$ are bit-flip events. In Figure~\ref{fig:benchmarking_and_graph}, a USI is represented as purple blocks without classical post-processing. Randomized compilation decouples the dependency of error events $\vec{a}, \vec{b}$ from read-outs $\vec{k}+\vec{r}$.

Given the structure of USI, we may derive a stochastic Pauli propagation rule when averaged over measurements:
\begin{theorem}[Pauli Propagation for USIs]
\label{theorem: USI pauli propagation.}
A uniform stochastic instrument (USI) measuring a single qubit $p$, when averaged over all measurement outcomes, acts as a Pauli channel on $Z$-type observables and commutes with bit-flip operations. Formally, define $\mathcal{M}(\cdot) = \sum_{k\in \mathbb{Z}_2}\mathcal{M}_{k}(\cdot)$ as the outcome-averaged USI. For $c\in \mathbb{Z}_2$:
\begin{align}
    \mathcal{M}(Z_p^cZ_q) &= \delta_{Z_p^cZ_q}\,Z_p^cZ_q,
    \label{eqn: stochastic Pauli channel}\\
    \mathcal{M}(\mathcal{X}_p^{c_1}\mathcal{X}_q^{c_2}(\cdot)) &= \mathcal{X}_p^{c_1}\mathcal{X}_q^{c_2}(\mathcal{M}(\cdot)),
    \label{eqn: stochastic channel prop}
\end{align}
where the fidelity coefficient is:
\begin{align}
    &\delta_{Z_p^cZ_q} = p_{00}\lambda_{00}+(-1)^cp_{10}\lambda_{10}+(-1)^{c}p_{01}\lambda_{01}+p_{11}\lambda_{11}, \nonumber\\
    &\Lambda_{ab}(Z^c_pZ_q) = \lambda_{ab}Z^c_pZ_q. \nonumber
\end{align}
\end{theorem}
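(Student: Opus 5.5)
Since the outcome-averaged USI is linear, I would prove both parts by substituting the definition in Eq.~(\ref{eqn: standard USI}) with $l=1$ (measured qubit $p$) and working term by term. Each term in $\mathcal{M}(\cdot)=\sum_{k}\sum_{a,b}p_{ab}\,[\Lambda_{ab}\otimes\mathcal{X}_p^{b}\superproj{k}\mathcal{X}_p^{a}](\cdot)$ factorizes into a Pauli channel $\Lambda_{ab}$ on the unmeasured qubits (including $q$) and a single-qubit map on $p$. I will therefore handle the two tensor factors separately. Here I read $\Lambda_{ab}(Z_p^cZ_q)=\lambda_{ab}Z_p^cZ_q$ as $\Lambda_{ab}$ having eigenvalue $\lambda_{ab}$ on $Z_q$; since it is a Pauli channel, that eigenvalue depends only on the Pauli and is common to both values of $c$.

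For Eq.~(\ref{eqn: stochastic Pauli channel}), the only genuine computation is on qubit $p$. Using $\mathcal{X}_p^{a}(Z_p^c)=(-1)^{ac}Z_p^c$, I evaluate $\sum_k \superproj{k}(Z_p^c)=\sum_k (-1)^{ck}\dyad{k}=Z_p^c$, and then $\mathcal{X}_p^{b}(Z_p^c)=(-1)^{bc}Z_p^c$. The $p$-factor of term $(a,b)$ is therefore $(-1)^{c(a+b)}Z_p^c$. Multiplying by $p_{ab}\lambda_{ab}$ from the $q$-factor and summing over $a,b$ gives $\delta_{Z_p^cZ_q}$, where the sign $(-1)^c$ appears exactly for $(a,b)=(1,0)$ and $(0,1)$. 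This matches the stated coefficient.

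For Eq.~(\ref{eqn: stochastic channel prop}), I split into the two flips. On the unmeasured qubit $q$, $\mathcal{X}_q$ commutes with $\Lambda_{ab}$ because Pauli channels commute with Pauli conjugations: both are diagonal in the Pauli basis up to signs. On $p$, I note that $\sum_k\superproj{k}$ is the dephasing channel, which commutes with $\mathcal{X}_p$ because $X$ permutes the basis projectors, $X\dyad{k}X=\dyad{k\oplus 1}$, and the sum over $k$ is symmetric under this relabelling. $\mathcal{X}_p^{a}$ and $\mathcal{X}_p^{b}$ commute with $\mathcal{X}_p^{c_1}$ trivially. Hence each term commutes with $\mathcal{X}_p^{c_1}\mathcal{X}_q^{c_2}$, and by linearity so does $\mathcal{M}$.

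I expect the main obstacle to be bookkeeping rather than any deep step. The identity on $p$ requires summing over the outcome $k$ before pushing the flip through, because an individual $\superproj{k}$ does \emph{not} commute with $\mathcal{X}_p$; it maps to $\superproj{k\oplus1}$. The statement is about the averaged instrument, so I will make this relabelling of $k$ explicit. If the $\mathbf{e}$-registers of Eq.~(\ref{eqn: USI with cr-err.}) are included, they are traced out in the average, and the weights $q_r$ then sum to one and drop out.
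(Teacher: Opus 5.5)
Your proposal is correct and follows essentially the same route as the paper. The paper proves this by rerunning its stochastic-CNOT computation (Theorem~\ref{theorem: 1}) with the feed-forward $\mathcal{X}_q^{k+r}$ deleted: a term-by-term evaluation gives the sign $(-1)^{c(a+b)}$ on the measured qubit, and relabelling $k\to k\oplus 1$ in the outcome sum gives the bit-flip commutation. Your version makes these steps explicit, including the points that the commutation holds only after summing over $k$ and that the weights $q_r$ drop out when the classical register is traced out.
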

\begin{proof}
    See Section~\ref{subpendix: pauli propagation} in Appendix~\ref{appendix: all proofs}
\end{proof}
Notice that we provide details on both the Pauli propagation of the Pauli observable in Eq.~(\ref{eqn: stochastic Pauli channel}) and the Pauli bit-flip channel in Eq.~(\ref{eqn: stochastic channel prop}). This is a necessary distinction because measurement projects the anti-commutation relation between the Pauli-$X$ and Pauli-$Z$ operators onto a computational hierarchy. Reversing the roles of Pauli-$X$ and Pauli-$Z$ will result in a loss of quantum information, namely:
\begin{align}
    \mathcal{M}(X_p Z_q) &= 0,\\
    \mathcal{M}(\mathcal{Z}_p^{c_1}\mathcal{X}_q^{c_2}(\cdot))&=\mathcal{I}_p^{c_1}\mathcal{X}_q^{c_2}(\mathcal{M}(\cdot)).
\end{align}
This unique property of USI will allow us to connect classical feed-forward and post-processing while maintaining the invariance of Pauli fidelity.

The complete characterization of USI noise is the prerequisite for a characterization of an $n$-qubit entangling Clifford gate. Refs.~\cite{Chen_2023generalized, Hines_2025} have introduced the learnability theory of USI noise. Here, we develop a generalized criterion on state preparation in USI in order to learn all noise components of a USI when the number of measured qubits $l = 1$. In particular:
\begin{theorem}[Exact Noise Extraction Principle]
\label{theorem: exact}
All Pauli fidelities of a single-qubit USI are exactly learnable from measurement data, provided that no quantum errors occur \emph{after} the measurement projection. Formally, given $\mathcal{M}(\cdot)$ that obeys Eq.~(\ref{eqn: stochastic Pauli channel})--(\ref{eqn: stochastic channel prop}), all Pauli fidelities are learnable if:
\begin{equation}
    p_{01} = p_{11} = 0.
\end{equation}
Classical readout assignment errors $q_{\vec{r}}$ do not affect learnability.
\end{theorem}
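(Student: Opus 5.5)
We will show that under $p_{01}=p_{11}=0$ the parameters $\{p_{00}\Lambda_{00},\,p_{10}\Lambda_{10}\}$ and the readout rate $q$ are determined by SPAM-robust decay rates. The USI is $\mathcal{M}_{k}(\cdot)=\sum_{a,b}p_{ab}[\Lambda_{ab}\otimes\mathcal{X}^{b}\superproj{k}\mathcal{X}^{a}](\cdot)$. Setting $p_{01}=p_{11}=0$ leaves only $b=0$, so every term ends in the projector $\dyad{k}$ on qubit $p$. The post-measurement state on $p$ is therefore exactly $\ket{k}$, while the prior flip $\mathcal{X}^a$ only relabels which input component is projected. This is the structural fact the whole argument rests on: the USI acts as a perfect reset/preparation of $\ket{k}$ on the measured qubit. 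That reset serves as a noiseless state preparation, which breaks the SPAM gauge that normally obstructs learnability in Refs.~\cite{Chen_2023generalized, Hines_2025}.

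\textbf{Step 1: pure fidelity decays.} Using Theorem~\ref{theorem: USI pauli propagation.}, the outcome-averaged channel acts on $Z_p^cZ_q$ with eigenvalue $\delta_{Z_p^cZ_q}=p_{00}\lambda_{00}+(-1)^c p_{10}\lambda_{10}$. Repeating the USI $N$ times in a CB sequence and fitting $A\,\delta^N+B$ for $c=0$ and $c=1$ separately gives the sum and the difference. Hence we learn $p_{00}\lambda_{00}(Z_q)$ and $p_{10}\lambda_{10}(Z_q)$ individually for every $Z_q$. Taking $Z_q=I$, $c=0$ gives $p_{00}+p_{10}=1$. For the $X/Y$ components on the unmeasured qubits, Eq.~(\ref{eqn: stochastic channel prop}) lets us conjugate with noiseless single-qubit Cliffords, which rotate arbitrary Paulis on $q$ into $Z$-type ones. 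This extends the argument to all Paulis on the unmeasured register.

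\textbf{Step 2: separate the outcome-resolved parameters from readout.} Consider the classical record of a single USI, resolved by outcome and multiplied by the readout sign $(-1)^{k+r}$. Because the post-measurement state is exactly $\ket{k}$, a second USI measures this known state. Correlating consecutive outcomes therefore yields, with no state-preparation unknown, the product of a quantum flip probability ($p_{10}$) and the classical flip factor $(1-2q)$. Correlations between outcomes $t$ steps apart decay with $t$ in the same CB fashion: classical readout errors enter only at the endpoints as a constant prefactor $(1-2q)^2$, whereas each intermediate step contributes $(p_{00}-p_{10})$. Fitting the decay therefore isolates $p_{00}-p_{10}$; combined with normalization, this gives $p_{00}$ and $p_{10}$. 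The prefactor then gives $q$, so the readout error does not spoil learnability.

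\textbf{Step 3: conclude.} Dividing the products from Step 1 by $p_{00}$ and $p_{10}$ yields every $\lambda_{00}$ and $\lambda_{10}$, i.e.\ all Pauli fidelities. The main obstacle is Step 2: rigorously showing that the exact-reset property removes the gauge freedom. Concretely, one must show that without $p_{01}=p_{11}=0$ a rescaling $(p_{a b})\to$ gauge-transformed values would leave all statistics invariant, while with the condition the endpoint factors are fixed. I would first try to prove this by writing the full multi-round outcome distribution as a product of $2\times2$ transfer matrices. Under the condition these have one known eigenvector (the exact reset), so their eigenvalues, and hence all parameters, are identifiable.
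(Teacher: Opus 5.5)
Your argument is correct, but it obtains the readout error by a different route from the paper.

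Your Step~1 is the paper's constraint system from the proof of Theorem~\ref{theorem: classical read-out error}: with $p_{01}=p_{11}=0$, the sum and difference of $\delta_{IZ},\delta_{ZZ}$ give $p_{00}\lambda_{00}$ and $p_{10}\lambda_{10}$. It already gives $p_{00}-p_{10}=\delta_{Z_p}$ if you put the identity on the unmeasured register ($\lambda_{a0}(I)=1$), so Step~2 is only needed for $q$.

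The paper gets $q$ as follows. The condition $p_{01}=p_{11}=0$ makes the instrument separable, so $\omega_{IZ}\omega_{ZZ}=\delta_{IZ}\delta_{ZZ}$. Then $q$ is read off from the ratio of the deferred-feed-forward (QICF) decay, which carries one factor $(1-2q)$ per MCM, to the plain USI decay. The paper then uses the stronger identity $\omega_{Z^{c+1}Z}=\delta_{Z^cZ}$ to divide the known MCM factor out of the interleaved CNOT--QICF decay, Eq.~(\ref{eqn: new coupling}), and so isolates the Clifford coefficient $\gamma$. That last step is what the paper actually needs. You do not address it; this does not affect the literal statement about the USI.

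You instead use the exact reset on the raw record. With $\hat k_i=k_i+r_i$, you get $\mathbb{E}\bigl[(-1)^{\hat k_i+\hat k_{i+t}}\bigr]=(1-2q)^2(p_{00}-p_{10})^t$, independent of the initial state. This is more elementary, needs no feed-forward formalism, and makes SPAM-independence transparent; it is essentially the paper's flip-variable statistics. The cost is that $q$ sits in a depth-independent prefactor instead of being amplified by circuit depth.

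Two minor points:
\begin{enumerate}
\item Access to $X/Y$ components on the unmeasured qubits does not follow from Eq.~(\ref{eqn: stochastic channel prop}), which concerns only bit flips. It follows from $\Lambda_{ab}$ being a Pauli channel, so $Z_p^c\otimes P$ is an eigenoperator for every $P$ and you simply change the SPAM basis.
\item The ``main obstacle'' you flag is not one. Steps~1--2 already express every parameter through SPAM-independent quantities, which is all learnability requires. Exhibiting a gauge freedom when the condition fails would be a necessity claim, and the theorem does not make it.
\end{enumerate}
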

\begin{proof}
    See Section~\ref{subpendix: learnable dof} in Appendix~\ref{appendix: all proofs}.
\end{proof}
The intuition behind Theorem~\ref{theorem: exact} is as follows. State preparation errors are the fundamental obstacle to unbiased noise tomography: when the quantum state entering the measurement projection is contaminated, the pre- and post-projection noise parameters become entangled in the measurement record and cannot be individually resolved. The condition $p_{01} = p_{11} = 0$ eliminates all post-projection quantum errors, decoupling the noise parameters and making each Pauli fidelity independently accessible. The independence from the number of unmeasured qubits $m$ follows directly: those qubits do not participate in the measurement event and therefore do not contribute to the SPAM coupling.

A particularly striking consequence is that classical readout assignment errors $q_{\vec{r}}$ play no role in the learnability condition. A classical readout error is a mislabelling of the measurement outcome --- it does not disturb the quantum state and therefore does not create the noise coupling that obstructs tomography. Theorem~\ref{theorem: exact} is therefore more precisely stated as: all measurement noise is learnable provided the residual state preparation imperfection is of classical readout type. This condition is naturally satisfied in trapped-ion platforms, where fluorescence-based state detection is highly accurate and residual errors are dominated by classical photon-counting noise rather than quantum back-action~\cite{PhysRevLett.100.200502}. When the condition holds, the \emph{learnable} noise triplets are:
\begin{equation}
    \{p_{\vec{a}, \vec{0}},\; \Lambda_{\vec{a}, \vec{0}},\; q_{\vec{r}}\}.
\end{equation}
The extraction procedure is given in the algorithms of Section~\ref{subsection: subroutine 1}, and a numerical validation that extracts MCM and CNOT gate noise is shown in Figure~\ref{fig:noise_characterization_top}.

Meanwhile, different devices suffer from different hardware noise constraints. In superconducting platforms, the read-out of the $\ket{1}$ state suffers from leakage, energy relaxation, and measurement back-action~\cite{Bengtsson2024willowmcm}. However, in this special case, the measurement noise and the noise of a two-qubit Clifford gate can be approximated with a noisy USI:
\begin{theorem}[Approximate Noise Extraction Principle]
\label{theorem: approx}
With high-fidelity preparation of $\ket{0}$ alone, the Pauli fidelities of CNOT gate can still be approximated. Formally, given a USI that obeys Eq.~(\ref{eqn: stochastic Pauli channel})--(\ref{eqn: stochastic channel prop}) with $p_{01} \approx 0$, the Pauli fidelity pairs of CNOT gate are approximable via post-selection on the $\ket{0}$ outcome.
\end{theorem}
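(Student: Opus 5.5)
Our plan is to reduce the statement to the exact case of Theorem~\ref{theorem: exact} by post-selecting on the $\ket{0}$ readout, and then to bound the bias introduced by the residual post-projection error $p_{01}\approx 0$. Concretely, we consider the CNOT block of Figure~\ref{fig:benchmarking_and_graph}(b) with the MCM placed on the target qubit, as dictated by the Pauli weight mismatch $\delta(P_\alpha,P_\beta)=[0,1]$ for the coupled pair $\{\lambda_{0i},\lambda_{jk}\}$ of Eq.~(\ref{eqn: dilemma}). Writing the single-qubit USI of Eq.~(\ref{eqn: standard USI}) with $l=1$, we keep only the branch $\mathcal{M}_{0}$ whose (classical) record reads $0$. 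The term $p_{11}$ corresponds to post-projection flips followed by re-flips. In the $k=0$ branch, the post-projection flip $\mathcal{X}^{b}$ with $b=1$ maps $\ket{0}$ to $\ket{1}$. Post-selection on the $\ket{0}$ outcome therefore prepares $\ket{0}$ on the target up to an error of order $p_{01}+p_{11}$. Since only $\ket{0}$ needs to be high fidelity, we will argue that the physically relevant term is $p_{01}$, with $p_{11}$ absorbed into an effective relabelling of $p_{10}$ via Eq.~(\ref{eqn: stochastic channel prop}).

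First, using Theorem~\ref{theorem: USI pauli propagation.}, we propagate $Z$-type observables through the composite cycle $\mathcal{M}_0\circ\Lambda\circ\mathcal{G}$. The MCM on the target resets the weight of $P_\beta=P_jP_k$ back to $P_\alpha=P_{j}\otimes I$, up to a single-qubit Clifford rotating $P_k$ to $Z$. The CB decay then becomes $A(\lambda_{jk}\,\delta^{(0)})^N+B$, where $\delta^{(0)}$ is the post-selected USI fidelity. In parallel, the ordinary CB sequence gives the product $\lambda_{0i}\lambda_{jk}$. Second, we learn $\delta^{(0)}$ separately by running block (b) alone, repeated MCMs with no gate. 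Under post-selection on $0$ and $p_{01}=0$, this is exactly the setting of Theorem~\ref{theorem: exact} restricted to the $k=0$ branch. Hence $p_{a,0}$, $\Lambda_{a,0}$ and the readout rate $q$ are learnable, with $q$ cancelling as in Appendix~\ref{appendix: Read-out error derivation}. Dividing out $\delta^{(0)}$ yields $\lambda_{jk}$, and hence $\lambda_{0i}$ from the CB product, which breaks the gauge $g$.

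Third, we restore $p_{01}\neq 0$ and treat it perturbatively. The post-selected channel becomes $\mathcal{M}_0=\mathcal{M}_0^{\mathrm{exact}}+p_{01}\mathcal{E}$, where $\mathcal{E}$ is a bounded Pauli-diagonal term (again by Theorem~\ref{theorem: USI pauli propagation.}). The fitted decay rate then satisfies $\hat\lambda=\lambda(1+O(p_{01}))$, uniformly in $N$, because the rate rather than the amplitude is extracted; the amplitude shift is absorbed into $A,B$. The main obstacle is this step. The post-selected map is not trace-preserving, so normalisation by the acceptance probability must be handled. That probability itself depends on $p_{01}$ and on the input state, and could couple across cycles. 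We plan to control it by noting that, with the twirl and the $\ket{0}$ reset, the acceptance probability per cycle is a state-independent constant to first order in $p_{01}$. It therefore factors out of the exponential fit, leaving an $O(p_{01})$ bias on each Pauli fidelity pair.
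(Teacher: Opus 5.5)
There is a genuine gap in your first step. Post-selecting on the $\ket{0}$ outcome does not implement the weight reversal $P_\beta\to P_\alpha$. On the measured qubit the projection sends $I$ and $Z$ to the same operator, $\dyad{0}=(I+Z)/2$. After every accepted MCM, the coupled pair therefore re-enters the next CNOT in the fixed combination $(P_\alpha+P_\beta)/2$. Because $\Lambda\circ\mathcal{G}$ swaps the two members of the pair, each cycle multiplies this combination by $(\lambda_{0i}+\lambda_{jk})/2$. The acceptance probability contributes a separate factor $(1+\lambda_{\mathrm{L}})/2$ per cycle, where $\lambda_{\mathrm{L}}$ is the learnable fidelity of the Pauli measured by the MCM (e.g.\ $\lambda_{ZI}$ for the pair $\{\lambda_{IZ},\lambda_{ZZ}\}$). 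So the post-selected decay is governed by the arithmetic mean of the pair, not by an isolated $\lambda_{jk}\delta^{(0)}$.

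The isolated rate you wrote down is produced by the sign-weighted combination $\sum_k(-1)^k\mathcal{M}_k$, which maps $I_p\otimes Q\leftrightarrow Z_p\otimes Q$. That is the stochastic CNOT / deferred feed-forward of Theorem~\ref{theorem: 1} and Lemma~\ref{lemma: generalized post-processing}, and it needs the $\ket{1}$ branch that the hypothesis excludes. For the same reason your second step is not ``Theorem~\ref{theorem: exact} restricted to $k=0$''. That theorem learns the MCM noise and $q$ by comparing decays with and without feed-forward, which a single outcome cannot supply. The readout-error averaging of Appendix~\ref{appendix: Read-out error derivation} likewise uses both outcomes.

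You already have the other ingredient: standard CB gives the product, i.e.\ the geometric mean $\lambda_{\mathrm{gm}}=\sqrt{\lambda_{0i}\lambda_{jk}}$. The paper's proof takes the post-selected arithmetic mean $\lambda_{\mathrm{am}}$ as the second equation. The pair is then recovered as the two roots $\lambda_{\mathrm{am}}\pm\sqrt{\lambda_{\mathrm{am}}^2-\lambda_{\mathrm{gm}}^2}$. Two further pieces are forced, and neither is in your plan.

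First, in practice $\lambda_{0i}\approx\lambda_{jk}$, and $\lambda_{\mathrm{am}}-\lambda_{\mathrm{gm}}$ is second order in their difference. A regulator rotation with $\lambda_{\mathrm{Reg}}(\theta)=\cos\theta$ is therefore placed before each MCM to rescale one member of the pair and open the gap.

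Second, an $O(p_{01})$ bias on the fitted rate, which is all your third step would deliver, is inadequate. Any first-order error in $\lambda_{\mathrm{am}}$, whether from $p_{01}$ or from the pre-measurement errors, swamps the discriminant. The paper instead divides the post-selected CNOT+MCM decay by a post-selected MCM-only decay, $\gamma_1(\theta)(1+\lambda_{\mathrm{Reg}}(\theta))/\gamma_0(\theta)=\lambda_{\mathrm{am}}(\theta)+O(\delta^2)$. This ratio cancels the pre-measurement errors and the $(1-\epsilon_0)^N$ acceptance factor to first order, and is robust to the $\ket{0}$-preparation error. The $O(\delta^2)$ residual is why the theorem only claims approximability rather than reducing to the exact case.
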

\begin{proof}
    See Appendix~\ref{appendix: scheme 2 proof}.
\end{proof}
In Appendix~\ref{appendix: scheme 2 proof}, we provide a detailed account of the implementation, but let us outline the procedure for Pauli noise approximation of a two-qubit Clifford gate when we have access only to a clean $\ket{0}$ state in noisy MCMs. This idea relies on the fact that when we post-select the MCMs read-outs to $\ket{0}$, given the unlearnable pairs $\{\lambda_{0i},\lambda_{jk}\}$, besides the geometric mean in Eq.~(\ref{eqn: dilemma}) between the pair, we also have access to the approximate value of the arithmetic mean $\frac{\lambda_{0i}+\lambda_{jk}}{2}$, since measurement projection creates equivalence between the coupling Paulis $P_{0i}$ and $P_{jk}$ defined around Eq.~(\ref{eqn: dilemma}). Therefore, MCMs provide an additional degree of freedom from which we may mathematically isolate $\lambda_{0i}$ and $\lambda_{jk}$ from the geometric and arithmetic means. Figure~\ref{fig:scheme2} provides a validation of the special case noise learning on the CNOT gate. This approximation is useful when state preparation by projection produces a higher-fidelity $\ket{0}$ state than state preparation by cooling.

\subsubsection{Quantum Instruments with Classical Feed-Forward}

In addition to generalizing the Pauli noise learning theory of USI in Refs.~\cite{Chen_2023generalized, Hines_2025}, we take a step further to employ USI as a noise tomographic tool. In QEC and quantum algorithms, noisy MCMs are active components of quantum circuits that contribute to error correction and quantum computation through classical feed-forward~\cite{Quek2024-limitationqec, lemelin2025mcmprimitive}. Here, we will show that feed-forward also plays a crucial role in breaking gauge noise in generalized CB. To establish this point, we introduce QICF:

\begin{align}
    \text{Ctrl-}M(\cdot) &= \sum_{\vec{r},\vec{k}\in \mathbb{Z}_2^n} q_{\vec{r}}\mathcal{C}_{\vec{k}, \vec{r}}(\cdot)\otimes\mathbf{e}_{\vec{k}+\vec{r}}\label{eqn: QICF with cerr}, \\
    \mathcal{C}_{\vec{k}, \vec{r}}(\cdot) &= \mathcal{M}_{\vec{k}}\left[\left(\mathcal{X}^{M(\vec{k}+\vec{r})}\otimes \mathcal{I}^{\otimes l} \right)(\cdot)\right]. \label{eqn: QICF}
\end{align}
A QICF inherits all basic noise compositions in Eq.~(\ref{eqn: USI with cr-err.}) and (\ref{eqn: standard USI}). In addition, $\mathcal{C}_{\vec{k}, \vec{r}}$ implements a classical feed-forward on the $m$ unmeasured qubits via an $m\times l$ linear transformation $M$. The matrix converts classical read-out instructions $\textbf{e}_{\vec{k}+\vec{r}}$ into a conditional bit-flip $\mathcal{X}^{M(\vec{k}+\vec{r})}$. $\mathcal{C}_{\vec{k}, \vec{r}}$ also inherits the classical read-out assignment error $q_{\vec{r}}$ through feed-forwards. Operationally, $\text{Ctrl-}M$ is a control gate operator from the $l$-registers to the $m$-registers. It acts like a Clifford gate and preserves some Clifford transformations. However, a measurement projection inevitably destroys some coherence. For Pauli noise propagation, this means that some noise information propagates through $\text{Ctrl-}M$ as if it is propagating through a Clifford gate, while the basis choice $\superproj{\vec{k}}$ destroys noise information with incompatible bases. We note that a QICF can generalize to non-linear Boolean logic (as opposed to $\text{Ctrl-}M$) such as the Toffoli gate, but Eq.~(\ref{eqn: QICF}) suffices for the scope of our investigation. To see how a QICF transforms Pauli operators, let us consider the stochastic CNOT gate for $l = 1$:
\begin{align}
    \text{CX}_{\text{sto}, p, q; \textbf{e}}(\cdot)&=\sum_{r, k\in \mathbb{Z}_2}q_r\mathcal{CX}_{k, r; p, q}(\cdot)\otimes\mathbf{e}_{k+r}, \\
    \mathcal{CX}_{k, r; p, q}(\cdot)&= \mathcal{M}_{k}\left[\left(\mathcal{X}^{k+r}_q\otimes \mathcal{I}_p\right)(\cdot)\right]
    \label{eqn: stochastic CNOT}.
\end{align}
We remove the vector sign for $k$ and $r$ as binary variables. The term $\mathcal{X}^{k+r}_q$ emphasizes that the feed-forwards act on qubit $q$ and $\mathcal{I}_p$ is where the measurement projection occurs. $\textbf{e}$ denotes the classical read-out bits and $p, q$ denote the control and target qubits.
\begin{theorem}[Pauli Propagation for Stochastic CNOT]
\label{theorem: 1}
A stochastic CNOT, formed by tracing out the classical register of the QICF, replicates the Pauli transformation of a coherent CNOT gate on $Z$-type observables, up to a scalar pre-factor from the classical readout error. Formally, tracing out the classical register gives:
\begin{equation}
    \text{CX}_{\text{sto}, p, q}(\cdot) = \sum_{r, k\in \mathbb{Z}_2}q_r\,\mathcal{CX}_{k, r; p,q}(\cdot).
\end{equation}
For $c \in \mathbb{Z}_2$, the stochastic CNOT preserves the Pauli transformation structure of a coherent CNOT in the $z$-basis:
\begin{align}
    \text{CX}_{\text{sto}, p, q}(Z_p^cZ_q) &= (1-2q_{1})\,\omega_{Z_p^{c+1}Z_q}\,Z_p^{c+1}Z_q,
    \label{eqn: ZZ propagation}\\
    \text{CX}_{\text{sto}, p, q}(\mathcal{X}_p\mathcal{X}^c_q(\cdot)) &= \mathcal{X}_p\mathcal{X}^{c+1}_q\bigl(\text{CX}_{\text{sto}, p, q}(\cdot)\bigr),
\end{align}
where the classical readout error $(1-2q_1)$ is factored out independently of the Pauli structure, and:
\begin{align}
    &\omega_{Z_p^{c+1}Z_q} = p_{00}\lambda_{00}+(-1)^cp_{10}\lambda_{10}
    -\bigl((-1)^{c}p_{01}\lambda_{01}+p_{11}\lambda_{11}\bigr), \nonumber\\
    &\Lambda_{ab}(Z^c_pZ_q) = \lambda_{ab}Z^c_pZ_q. \nonumber
\end{align}
\end{theorem}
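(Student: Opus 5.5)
The plan is to compute the action of $\text{CX}_{\text{sto},p,q}$ directly from Eq.~(\ref{eqn: stochastic CNOT}). The computation follows the order of operations: the classical feed-forward bit-flip $\mathcal{X}_q^{k+r}$ acts first, then the USI $\mathcal{M}_k$ from Eq.~(\ref{eqn: standard USI}). The key structural point is that the classical index $r$ enters only through the feed-forward exponent. It should therefore factor out as a scalar before the quantum part is touched.

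\emph{Step 1 (feed-forward and readout factor).} Since $X_q Z_q X_q = -Z_q$, the feed-forward gives $\mathcal{X}_q^{k+r}(Z_p^cZ_q) = (-1)^{k+r}Z_p^cZ_q$. Summing over $r$ yields $\sum_r q_r(-1)^r = q_0-q_1 = 1-2q_1$, which produces the readout prefactor. What remains is to evaluate
\[
\sum_{k\in\mathbb{Z}_2}(-1)^k\,\mathcal{M}_k(Z_p^cZ_q).
\]

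\emph{Step 2 (USI term by term).} I will expand $\mathcal{M}_k$ over the events $(a,b)$ and act on the measured qubit $p$ and the unmeasured qubit $q$ separately.
\begin{itemize}
\item On qubit $q$, $\Lambda_{ab}$ returns $\lambda_{ab}Z_q$. Here $\lambda_{ab}$ is read off as in Theorem~\ref{theorem: USI pauli propagation.}, since $\Lambda_{ab}$ acts only on the unmeasured register.
\item On qubit $p$, the pre-measurement flip gives $X^aZ^cX^a = (-1)^{ac}Z^c$.
\item The projection gives $\dyad{k}Z^c\dyad{k} = (-1)^{kc}\dyad{k}$.
\item The post-measurement flip maps $\dyad{k}$ to $\dyad{k\oplus b}$.
\end{itemize}
Substituting $j = k\oplus b$, the sum over $k$ becomes
\[
(-1)^{ac}\sum_j(-1)^{(j+b)(c+1)}\dyad{j} = (-1)^{ac+b(c+1)}Z_p^{c+1}.
\]
This is the mechanism by which the stochastic CNOT mimics $Z_p^c Z_q\mapsto Z_p^{c+1}Z_q$. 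The factor $(-1)^k$ from the feed-forward converts the dephased projector into one more power of $Z_p$. Reading off the signs for $(a,b)=(0,0),(1,0),(0,1),(1,1)$ gives $+1$, $(-1)^c$, $-(-1)^c$ and $-1$. These reproduce $\omega_{Z_p^{c+1}Z_q}$ exactly, and together with Step 1 they establish Eq.~(\ref{eqn: ZZ propagation}).

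\emph{Step 3 (bit-flip covariance).}
\begin{itemize}
\item $\mathcal{X}_q^c$ commutes with the feed-forward flip, which is also an $X_q$. It also commutes with $\mathcal{M}_k$ by Eq.~(\ref{eqn: stochastic channel prop}) of Theorem~\ref{theorem: USI pauli propagation.}, applied outcome-wise since $\Lambda_{ab}$ is a Pauli channel.
\item For $\mathcal{X}_p$ on the measured qubit, I will use $\superproj{k}\mathcal{X}_p = \mathcal{X}_p\superproj{k\oplus 1}$. The bit flips $\mathcal{X}^a,\mathcal{X}^b$ commute with $\mathcal{X}_p$, so $\mathcal{M}_k\circ\mathcal{X}_p = \mathcal{X}_p\circ\mathcal{M}_{k\oplus1}$.
\item Relabelling $k\to k\oplus1$ in the outcome sum turns the feed-forward $\mathcal{X}_q^{k+r}$ into $\mathcal{X}_q^{k+r+1}$. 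This extra factor supplies the additional $\mathcal{X}_q$, giving $\mathcal{X}_p\mathcal{X}_q^{c+1}$.
\end{itemize}

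The main obstacle I expect is bookkeeping rather than anything conceptual. The delicate points are tracking signs through the post-measurement flip $\mathcal{X}^b$ in Step 2, and confirming in Step 3 that the relabelling of $k$ is consistent with the classical index $r$ in the register. Once the register is traced out this is harmless, because $r$ is summed independently of $k$.
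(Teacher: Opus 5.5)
Your proposal is correct and follows essentially the same route as the paper. The paper's Step 1 does the same direct expansion: it pulls out $\sum_r q_r(-1)^r = 1-2q_1$ and collects the sign $(-1)^{ac+b(c+1)}$ to obtain $\omega_{Z_p^{c+1}Z_q}\,Z_p^{c+1}Z_q$, and its Step 4 proves bit-flip covariance with the same outcome relabelling $k\to k\oplus o$ that you use in your Step 3. Your version is somewhat more explicit, for instance the substitution $j=k\oplus b$ and the outcome-wise commutation of the Pauli channel $\Lambda_{ab}$ with $\mathcal{X}_q$.
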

\begin{proof}
    See Section~\ref{subpendix: pauli propagation} in Appendix~\ref{appendix: all proofs}
\end{proof}
Analogously to Eq.~(\ref{eqn: dilemma}), these rules match those of a coherent CNOT with a Pauli noise channel. The classical read-out error $(1-2q_1)$ is treated separately, as its error propagation is independent of $Z^c_pZ_q$. Importantly, CTRL-$M$ can be implemented as sequential operations of stochastic CNOTs on various qubit pairs. In other words, by performing a few noisy MCMs, we can implement a variety of highly non-local Clifford gates that execute the desired Pauli propagation rules. Furthermore, the costs of the noise contributions are strictly limited to these MCMs, because all classical feed-forward can be deferred to classical post-processing.

\subsubsection{Noise Learning with MCMs}

Active choices of measured qubits and classical feed-forward can reverse the transformation induced by a generic Clifford gate. It turns out that we can derive a stronger result due to a form of the deferred feed-forward principle:
\begin{lemma}[Deferred Feed-Forward Principle for Generalized CB]
\label{lemma: generalized post-processing}
In generalized CB interleaved with a Clifford gate and a QICF, no real-time quantum feed-forward is needed. All classical feed-forward operations that manipulate Pauli weights of the propagating Paulis can be deferred to classical post-processing without changing the measurement statistics.
\end{lemma}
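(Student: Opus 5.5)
The strategy is to show that the only effect of the feed-forward in a QICF is a Pauli-$X$ frame change on unmeasured qubits. Such a frame change can be pushed through every later layer of the circuit and absorbed into a relabelling of the final classical outcomes. Consider one generalized CB sequence
\[
\mathcal{M}_{\mathrm{fin}}\circ\bigl(\text{Ctrl-}M\circ\mathcal{C}_1\circ\Lambda\circ\mathcal{G}\bigr)^{N}\circ\mathcal{S},
\]
where $\mathcal{C}_1$ denotes the layer of single-qubit Cliffords. By Eqs.~(\ref{eqn: QICF with cerr})--(\ref{eqn: QICF}), each QICF splits into a sum over $(\vec{k},\vec{r})$ of a noisy USI branch $\mathcal{M}_{\vec{k}}$ preceded by the conditional flip $\mathcal{X}^{M(\vec{k}+\vec{r})}$ on the $m$ unmeasured qubits. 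The key observation is that $\mathcal{X}^{M(\vec{k}+\vec{r})}$ depends only on the recorded string $\vec{k}+\vec{r}$, which is available classically. The flip is therefore a Pauli operator determined by classical data, and it can be commuted toward the end of the circuit.

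The argument proceeds in three steps. First, I will show that a Pauli frame $P$ commutes through each circuit component up to a known updated Pauli $P'$:
\begin{itemize}
\item through the ideal Clifford $\mathcal{G}$ and the single-qubit Cliffords, since $\mathcal{G}\circ\mathcal{P}=\mathcal{P}'\circ\mathcal{G}$ with $P'=GPG^\dagger$ up to phase;
\item through the Pauli channel $\Lambda$, since Pauli channels commute with Pauli conjugation;
\item through subsequent USIs and QICFs, by Eq.~(\ref{eqn: stochastic channel prop}) of Theorem~\ref{theorem: USI pauli propagation.}.
\end{itemize}
For the USIs, $X$-components on measured qubits flip the outcome $\vec{k}\to\vec{k}\oplus\vec{v}$, which is itself a classical relabelling. $Z$-components on measured qubits are absorbed, as in the displayed rule $\mathcal{M}(\mathcal{Z}_p^{c_1}\mathcal{X}_q^{c_2}(\cdot))=\mathcal{I}_p^{c_1}\mathcal{X}_q^{c_2}(\mathcal{M}(\cdot))$. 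Second, I will argue by induction over the $N$ cycles, starting from the last QICF and working backwards. Each accumulated frame is a Pauli $F$ that is an affine $\mathbb{Z}_2$-function of all previously recorded strings, and it reaches the final $z$-basis measurement. There its $X$-part flips the final bits and its $Z$-part acts trivially. Third, I will conclude that the joint distribution of all MCM records and final outcomes in the feed-forward circuit equals that of the feed-forward-free circuit (the $M=0$ case, plain USIs) after the deterministic bijective relabelling $\vec{y}\mapsto\vec{y}\oplus F_X(\vec{e}_1,\dots,\vec{e}_N)$. Hence the statistics, and in particular the expectations of the $Z$-type observables used in CB, can be computed entirely in post-processing.

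The main obstacle is the noise inside $\mathcal{M}_{\vec{k}}$, namely the Pauli channels $\Lambda_{\vec{a},\vec{b}}$ and the flips $\mathcal{X}^{\vec{a}},\mathcal{X}^{\vec{b}}$. The conditional flip must commute with these without altering the weights $p_{\vec{a},\vec{b}}$ or the fidelities $\lambda$. For flips acting on unmeasured qubits, commutation with $\Lambda_{\vec{a},\vec{b}}$ holds exactly because the channel is Pauli, and the projector acts only on the measured register. I expect a brief check that this holds branch by branch in $\vec{k}$, not only after averaging. The second subtlety is the readout error $q_{\vec{r}}$. The real-time feed-forward uses the corrupted string $\vec{k}+\vec{r}$, and the deferred version must use the recorded string, which is the same string. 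The factor $(1-2q_1)$ of Theorem~\ref{theorem: 1} is therefore reproduced identically rather than changed. If the branchwise statement fails for some noise term, I would retreat to equality of the outcome-averaged $Z$-type expectations, which is all that CB fitting requires.
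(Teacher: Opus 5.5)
Your argument is correct within the paper's noise model, but it takes a different route from the paper.

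The paper never commutes feed-forward through the circuit. It adjoins a perfect ancilla $s$ and takes every $\vec{M}_i$ to be the all-ones vector aimed at $s$ alone. Each conditional flip $\mathcal{X}_s^{\vec{M}_i\cdot\vec{r}}$ then factorizes into stochastic CNOTs from the measured qubits onto $Z_s$ (Theorem~\ref{theorem: 1}), and this is what changes the Pauli weight of $Z_s\otimes P$ on the measured sites. Since nothing else ever acts on $s$, these flips commute and $s$ never entangles with the system. The ancilla can therefore be replaced by a classical bit holding the parity of the MCM records. The deferral step is then exact whatever the noise on the system is, but it covers only feed-forward onto such an idle ancilla.

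Your Pauli-frame tracking handles the general Ctrl-$M$ of Eq.~(\ref{eqn: QICF}), where the flips land on unmeasured system qubits that later pass through $\mathcal{G}$. It also yields equality of the full joint record distribution, not only of averaged expectations. The price is that every downstream element must be Pauli-covariant:
\begin{itemize}
\item Gate noise must be twirled Pauli noise.
\item The USI must be uniform. The $\vec{k}$-independence of $p_{\vec{a},\vec{b}}$, $\Lambda_{\vec{a},\vec{b}}$ and $q_{\vec{r}}$ is exactly what gives $\mathcal{M}_{\vec{k}}\circ\mathcal{X}^{\vec{v}}=\mathcal{X}^{\vec{v}}\circ\mathcal{M}_{\vec{k}\oplus\vec{v}}$ branch by branch, so your fallback to averaged expectations is unnecessary.
\item The final readout must be symmetric (e.g.\ twirled). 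An $X$-frame in front of an asymmetric final measurement is not a pure relabelling of $\vec{y}$.
\end{itemize}

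Finally, state the bijection on the whole record $(\vec{e}_1,\dots,\vec{e}_N,\vec{y})$, not just on the final bits. Intermediate records are also XORed with functions of earlier records; this map is triangular and hence invertible.
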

\begin{proof}
    Let us give a proof sketch of this claim. Given QICF, its role in the interleaving quantum circuit is to manipulate the Pauli weight given a Pauli $P_\alpha$ that propagates through the quantum circuit to our favor. According to Eq.~(\ref{eqn: ZZ propagation}), the measured qubit is precisely where the Pauli weight change takes place given a stochastic CNOT. Let us introduce a perfect ancillary qubit, so that the Pauli operator is expanded to $Z_0\otimes P_\beta$. By performing the stochastic CNOT between the qubits of your choice and $Z_0$, and suitable operations of single-qubit Clifford gates, we may transform the Pauli weight to our satisfaction, including the reversal map:
    \begin{equation}
        \mathcal{R}_{\delta(P_\alpha, P_\beta)}(Z_0\otimes P_\beta) = \lambda Z_0\otimes P_\alpha. \label{eqn: reversal map}
    \end{equation}
    where $\delta(P_\alpha, P_\beta)$ is the Pauli weight mismatch that computes the locations where MCMs are inserted. When extended to multiple cycles, the ancillary qubit will receive a collection of Pauli-$X$ feed-forward operations $\prod_iX^i\ket{0}_0$. Since they commute with each other locally and are not entangled with the logical process of the CB, the final state of the ancillary qubit for each shot will be invariant if we defer the feed-forward to the final measurement process. Since the ancillary qubit does not entangle with the system, it may be replaced with a classical bit that stores the conditional bit-flip instruction. Statistically, we recover the desired sequence of Clifford-QICF operations.

    See Section~\ref{subpendix: deferred feed-forward principle} in Appendix~\ref{appendix: all proofs}
\end{proof}
In other words, to enforce a reversal of a Clifford transformation of a Pauli operator, we only need to consider a clever selection of MCMs on the quantum circuit. This result simplifies the framework introduced in Refs.~\cite{Chen_2023generalized, Hines_2025} to retrieve learnable information, as now we concentrate the post-processing information on one ancillary qubit. This result is independent of the number of cycles of a generalized CB. QICF unlocks a powerful method for implementing control gate logic, as arbitrary Clifford reversal, whether or not the transformation is non-local, can be implemented with single-qubit Clifford gates, MCMs, and post-processing.
Given this information, we wish to implement as few MCMs as possible. The
following result specifies the minimal number of state preparation and measurement
operations needed to reverse the action of a Clifford gate:

\begin{theorem}[State Preparation Cost from Pauli Weight-Pattern Mismatch]
\label{theorem: main result}
Let $\mathcal{G}$ be an $n$-qubit Clifford gate, and let $P_\alpha$, $P_\beta$
be a pair of Pauli operators satisfying $\mathcal{G}(P_\alpha) = P_\beta$.
Then the Pauli fidelity $\lambda_{P_\beta}$ is learnable provided that USIs
satisfying Theorem~\ref{theorem: exact} are inserted between successive applications
of $\mathcal{G}$ in the CB protocol at the qubit locations specified by:
\begin{equation}
    \delta(P_\alpha, P_\beta).
\end{equation}
The number of MCMs required equals the Hamming weight $|\delta(P_\alpha, P_\beta)|$,
which also quantifies the overhead of generalized CB relative to standard CB.
\end{theorem}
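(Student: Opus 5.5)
We construct, for each pair with $\mathcal{G}(P_\alpha)=P_\beta$, a reversal map $\mathcal{R}_{\delta(P_\alpha,P_\beta)}$ built from single-qubit Cliffords and QICFs acting only on the qubits flagged by $\delta(P_\alpha,P_\beta)$. Interleaving this map with $\mathcal{G}$ turns the CB cycle into a closed loop $P_\alpha\to P_\beta\to P_\alpha$, to which the standard CB decay fit of the Cycle Benchmarking subsection applies. Learnability then follows exactly as in Theorem~\ref{theorem: learnability}, since the reversed cycle has no weight mismatch.

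\emph{Step 1: reduce to $Z$-type operators.} Apply noiseless single-qubit Cliffords before and after $\mathcal{G}$ to rotate every non-identity tensor factor of $P_\alpha$ and $P_\beta$ to $Z$. This leaves the weight vectors $\mathrm{pt}(\cdot)$, and hence $\delta(P_\alpha,P_\beta)$, unchanged. We may therefore assume $P_\alpha=Z^{\vec u}$ and $P_\beta=Z^{\vec v}$ with $\vec u=\mathrm{pt}(P_\alpha)$, $\vec v=\mathrm{pt}(P_\beta)$, and $\vec u\oplus\vec v=\delta$.

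\emph{Step 2: correct each mismatched site.} Adjoin the perfect ancillary qubit of Lemma~\ref{lemma: generalized post-processing}, so the propagating operator is $Z_0\otimes Z^{\vec v}$. For every site $p$ with $\delta[p]=1$, apply a stochastic CNOT between $p$ and the ancilla. A single-qubit USI obeying Theorem~\ref{theorem: exact} realises each of these.

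By Theorem~\ref{theorem: 1} (Eq.~(\ref{eqn: ZZ propagation})), each such application flips the exponent $c$ of $Z_p$. It thereby adds or removes $Z_p$ from the support, while the ancilla $Z_0$ factor is retained. The only cost is a scalar prefactor $(1-2q_1)\,\omega$. Sites with $\delta[p]=0$ are left untouched; they need no measurement, because $Z_p$ is already correct there.

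After all $|\delta|$ insertions the operator is $\lambda\, Z_0\otimes Z^{\vec u}=\lambda Z_0\otimes P_\alpha$, which is exactly Eq.~(\ref{eqn: reversal map}). The accumulated scalar is
\[
\lambda=\prod_{p:\delta[p]=1}(1-2q_1^{(p)})\,\omega^{(p)}.
\]
Lemma~\ref{lemma: generalized post-processing} then replaces the ancilla by a classical parity bit, so no real-time feed-forward is needed.

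\emph{Step 3: fit the decay and separate the MCM factor.} One generalized cycle now maps $P_\alpha\mapsto\lambda_{P_\beta}\lambda\,P_\alpha$. Hence
\[
f(N)=A(\lambda_{P_\beta}\lambda)^N+B,
\]
and fitting this yields the product $\lambda_{P_\beta}\lambda$ free of SPAM. To isolate $\lambda_{P_\beta}$, the factors $\omega^{(p)}$ and $q^{(p)}$ must be known independently. This is exactly what Theorem~\ref{theorem: exact} guarantees: with $p_{01}=p_{11}=0$, every USI fidelity and readout rate is learnable from the MCM-only circuit of block (b), run as its own CB experiment. Dividing out $\lambda$ gives $\lambda_{P_\beta}$.

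\emph{Counting.} Exactly one MCM is used per mismatched site, so the count is $|\delta(P_\alpha,P_\beta)|$ per cycle. For minimality, we argue that each unmeasured site preserves its weight bit under single-qubit Cliffords. A site with $\delta[p]=1$ can therefore only be corrected by a weight-changing operation, and the only such primitive available is an MCM.

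\emph{Main obstacle.} The delicate step is Step 3: showing that the Clifford-gate noise and the MCM noise do not re-couple into a new gauge. The danger is that $\lambda$ itself depends on which Pauli propagates, through the sign $(-1)^c$ in $\omega$.

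My first attempt will be to check that this sign dependence is harmless. Each site's $c$ is fixed by $\vec u,\vec v$, so the relevant $\omega$ is one specific fidelity. Theorem~\ref{theorem: exact} guarantees that fidelity is learnable for both values of $c$. Consequently no free multiplicative gauge remains.
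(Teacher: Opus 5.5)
Your proposal is correct and takes essentially the same route as the paper's proof. You build the reversal $\mathcal{R}_{\delta(P_\alpha,P_\beta)}$ from stochastic CNOTs onto the ancilla (later a classical bit) at the sites flagged by $\delta(P_\alpha,P_\beta)$, together with single-qubit Cliffords, via Lemma~\ref{lemma: generalized post-processing}, and then divide out the MCM-layer factor learned separately under Theorem~\ref{theorem: exact} as in Section~\ref{section: algorithm}; under $p_{01}=p_{11}=0$ that factor is $\prod_p(1-2q_p)\,\delta_{P_\beta}$. Your explicit reduction to $Z$-type Paulis, the resulting decay law, and the minimality argument (the support at an unmeasured site is unchanged by single-qubit Cliffords and feed-forward $X$'s) go beyond the paper, whose proof only shows that $|\delta(P_\alpha,P_\beta)|$ MCMs suffice.
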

\begin{proof}
    This immediately follows from Lemma~\ref{lemma: generalized post-processing} by performing measurements at the locations where Pauli weights differ.

    See Section~\ref{subpendix: deferred feed-forward principle} in
    Appendix~\ref{appendix: all proofs}.
\end{proof}
Recall that Theorem~\ref{theorem: learnability} asserts learnability when $\delta(P_\alpha,P_\beta)= 0$. Theorem~\ref{theorem: main result} generalizes the learnability condition to an MCMs resource condition when $\delta(P_\alpha,P_\beta)\neq 0$. The Pauli weight mismatch $\delta(P_\alpha, P_\beta)$ serves a dual purpose: it specifies where MCMs must be inserted to reverse the action of the Clifford gate $\mathcal{G}$, and it measures the distance
from standard CB, quantifying the additional measurement resources required for
generalized CB. This has three implications:
\begin{itemize}
    \item \textbf{No full tomography is required.} The identifiability of
    unlearnable parameters depends only on how $G$ changes the Pauli weight
    pattern, rather than on global state preparation over all $n$ qubits as
    required by Ref.~\cite{Chen_2023learnability}.
    \item \textbf{No additional Clifford gates are needed.} Breaking Clifford
    cycles requires only MCMs, which affords partial SPAM-robustness even in
    multi-qubit circuits.
    \item \textbf{No post-selection is required.} All feed-forward corrections
    are deferred to classical post-processing.
\end{itemize}
Figure~\ref{fig:benchmarking_and_graph} summarizes the noise learning protocol of a generalized CB with injection of MCMs at the locations of $\delta(P_\alpha, P_\beta)$. Meanwhile, Figure~\ref{fig:pauli_transfer_plot} shows examples of two-qubit Clifford gates and how Theorem~\ref{theorem: main result} provides a systematic recipe
for the collapse of cut-space to cycle-space, by inserting MCMs at the qubit locations specified in Eq.~(\ref{eqn: mismatch})---shown as
circuit block (b) interleaved with block (a) in Figure~\ref{fig:benchmarking_and_graph}. In Section~\ref{subsection: delta function}, algorithmic procedures are introduced that implement the generalized CB protocol for arbitrary $\mathcal{G}$.

\subsection{Binomial Analysis}
\label{subsection: binomial analysis}

In the previous section, we characterized the cost of MCM state preparation and the
requirements on its quality in order to learn all Pauli fidelities for a general Clifford
gate without gauge errors. These results, like conventional Pauli noise characterization
methods, rest on the assumption of Markovianity. In real quantum devices, this assumption
holds only up to a certain accuracy threshold, beyond which the noise dynamics can no
longer be captured by a Markovian model. Ultimately, Pauli noise learning techniques will
encounter a fundamental noise floor due to non-Markovian dynamics. Noisy MCMs as tomographic primitives open new avenues for
diagnosing these effects.

In standard CB, the probability distribution sampled for a Pauli operator $P_\beta$ is expected to take the form $p(\vec{y}\in \mathbb{Z}_2^n) = p(y_1,y_2,\cdots,y_n)$. MCM-based CB extends to the intermediate readouts through the quantum circuit of Figure~\ref{fig:benchmarking_and_graph}:
\begin{equation}
    \prod_{i=1}^N\left[\prod_{j\in \delta (P_\alpha,
    P_\beta)}\mathcal{M}^{\mathrm{basis}}_{k_{j, i}}(\mathcal{G}(\rho))\right].
    \label{eqn: circuit configuration}
\end{equation}
where $\mathcal{M}^{\mathrm{basis}}_{k_{j, i}}$ is a USI model with single-qubit Clifford basis rotations that are designed to reverse $\mathcal{G}$. This introduces $N$ auxiliary MCMs and, hence, an additional sequence of read-outs $\vec{k}_{p} = [k_{p,0}, k_{p,1}, \ldots, k_{p,N}]$ to each qubit $p$.

Suppose that all USIs are post-processed according to Lemma~\ref{lemma: generalized post-processing}. The final probability distribution $p(\vec{y})$ then admits a decomposition $p(\vec{y};\vec{k}_p)$. This gives the following expression for the expectation value of $P_\beta$ in Figure~\ref{fig:benchmarking_and_graph}:
\begin{equation}
    \langle P_\beta\rangle = \sum_{\vec{y}\in \mathbb{Z}_2^n, \vec{k_p}\in
    \mathbb{Z}_2^N}p(\vec{y};\vec{k}_p)(-1)^{\mathrm{pt}(P_\beta)\cdot \vec{y}}.
    \label{eqn: extended probability}
\end{equation}
According to this summation, $\langle P_\beta\rangle$ can be expanded as a collection of functions of $\vec{k}_p$, through which time correlations can be detected.

\subsubsection{Flip Variable}

To test statistical independence between successive MCM readouts, we introduce the
\textit{flip variable}:
\begin{equation}
    f_{p,i} = k_{p, i-1}\oplus k_{p, i}.
    \label{eqn: the flip variable}
\end{equation}
$f_{p,i}$ takes value 1 when consecutive readouts in qubit $p$ disagree and 0 otherwise,
regardless of the absolute outcomes. Under Markovianity, each cycle of
$\Lambda\circ\mathcal{G}$ is statistically independent of its neighbors, so each
$f_{p,i}$ depends only on the local process in cycle $i$ --- the $\{f_{p,i}\}$ are
independent and identically distributed. Any deviation from this independence structure
signals non-Markovian time correlations between cycles. Computing the flip variable is equivalent to measuring in the MCMs read-out:
\begin{equation}
    \hat{f}_p = Z_{p-1}Z_p.
    \label{eqn: flip observable}
\end{equation}

For an important class of Clifford gates --- the control gates such as the CNOT and CZ gates --- it is possible to further derive a closed-form measurement statistic from the flip variables when Markovianity holds. Suppose that for a non-identity Pauli $P_p$ that constitutes the single-qubit Pauli of either of $\{P_\alpha, P_\beta\}$ at qubit $p$, corresponding to the MCMs that have not yet been post-processed according to Lemma~\ref{lemma: generalized post-processing}, the following identity holds:
\begin{equation}
    \Lambda\circ\mathcal{G}(P_p) =P_p.
    \label{label: invariance}
\end{equation}
Then, when $\Lambda\circ\mathcal{G}$ combines with the incomplete reversal channel of $\mathcal{M}^{\mathrm{basis}}_{k_{j, i}}$, they induce effective Clifford transformations between the pair of Pauli operators:
\begin{equation}
    \{P_{\gamma}\otimes I_{p}, P_\gamma\otimes
P_{p}\},
\end{equation}
with one Pauli weight difference.

The prerequisite for the detection of Markovianity is a sufficiently clean MCMs resource. However, a typical USI model contaminates the statistical independence of $\Lambda\circ\mathcal{G}$, because the read-out also contains information that couples the pre-collapse and post-collapse errors. Therefore, the correct quantum instrument for detecting Markovianity requires a restriction to the \textit{separable USI} model, in which the pre-measurement and post-measurement effects decouple:
\begin{align}
    \Lambda_{ab} &= \Lambda_a\Lambda^\prime_b, \label{eqn: separable 1}\\
    p_{ab} &= p_ap^\prime_b. \label{eqn: separable 2}
\end{align}
In an experimental setting, this restriction works in our favor since correlations in USI read-outs are now rendered detectable.

\subsubsection{Binomial Distribution}
Denoting the
corresponding Pauli fidelities of $\{P_{\gamma}\otimes I_{p}, P_\gamma\otimes
P_{p}\}$ as $\{\lambda_{P_\gamma\otimes I_p},
\lambda_{P_\gamma\otimes P_p}\}$, and defining $f = \sum_i f_{p,i}$ as the total number
of flip events, it can be shown (Appendix~\ref{appendix: Read-out error derivation}) that
the expectation value of $P_{\gamma}$ expands as an unnormalized binomial distribution
in $f$:
\begin{align}
    &\lambda_{P_\gamma}(f) = \nonumber\\
    &\binom{N}{f}\left(\frac{\lambda_{P_\gamma\otimes I_p}+\lambda_{P_\gamma\otimes
    P_p}}{2}\right)^{N-f}\left(\frac{\lambda_{P_\gamma\otimes
    I_p}-\lambda_{P_\gamma\otimes P_p}}{2}\right)^{f},
    \label{eqn: exp of binomial distribution}
\end{align}
through which $\lambda_{P_{\gamma}\otimes I_p}^N$ and $\lambda_{P_{\gamma}\otimes P_p}^N$ are computed. In the special case where we consider only the statistics of $f$ for a single-qubit MCM
sequence, that is, $P_\gamma = I^{\otimes(n-1)}$, the flip distribution reduces to a
binomial:
\begin{align}
    p(f) &= \binom{N}{f}\left(\frac{1+\lambda_{P_p}}{2}\right)^{N-f}
    \left(\frac{1-\lambda_{P_p}}{2}\right)^f,
    \label{eqn: binom}
\end{align}
through which $\lambda_{P_p}^N$ is computed. This is the key diagnostic result of this section: under a Markovian separable USI model, $p(f)$ follows a clean binomial distribution. Any deviation from this
binomial signature therefore constitutes direct evidence of non-Markovian dynamics.
Different types of non-Markovian noise produce characteristically distinct
deviations, making the flip distribution a sensitive and interpretable diagnostic tool.
For example, when the separable USI assumption is relaxed to the general definition of USI, the binomial distribution becomes asymmetric, favoring even flip counts over odd ones, as illustrated in Figure~\ref{fig:asymmetry}. In Section~\ref{section: Experiment}, the closed-form expression of the binomial analysis provides the analytical tool to probe the non-Markovian behavior of entangling gates.

\begin{figure}
    \centering
    \includegraphics[width=\linewidth]{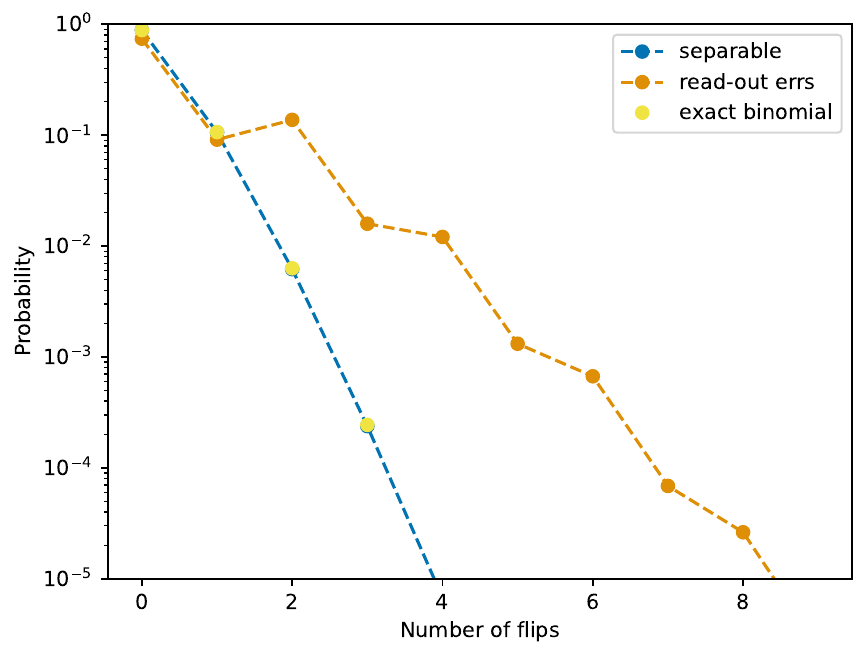}
    \caption{\justifying Deviation from binomial distribution in the presence of classical read-out assignment error. The separable (blue) USI is simulated with pre-measurement error channel and post-measurement error probability $p_1 = p_1^\prime =0.001$ from Eq.~(\ref{eqn: separable 2}), corresponding to the value $\lambda_{P_p} = 0.996004$. The theoretical model (yellow) is plotted to verify the consistency of the simulation. The presence of classical read-out assignment error $p_{r1}= 0.003$ (orange) statistically induces decay from the $f \in \{0, 1\}$ statistics to $f \in \{2, 3\}, \{4, 5\}, \ldots$. Therefore, a zig-zag linear-like statistical deviation emerges.}
    \label{fig:asymmetry}
\end{figure}

\section{Algorithms}

\label{section: algorithm}
In the previous section, we gave the formulation of generalized CB in the presence of noisy MCMs. In this section, we provide procedures that connect all of the above theorems into a generalized noise characterization protocol. In particular, the noise learning method, as shown in Figure~\ref{fig:benchmarking_and_graph}, is divided into learning the noise of the MCMs (b) and the noise of the Clifford-MCMs circuit (a) + (b) in the benchmarking protocol. In both cases, one classical ancillary bit represented as double lines is introduced to absorb a conditional post-processing. Post-processing converts a USI into a QICF noise model in the quantum circuit, through which we may manipulate the Pauli weight, a feature that is inaccessible to conventional CB~\cite{Chen_2023learnability}. Eventually, we may decouple the MCMs noise contributions from the Clifford gate's Pauli fidelity, and extract the noise profile of a given Clifford gate. In the following, we give the step-by-step implementation of the CB protocol with MCMs, and discuss the compatibility of the protocol with the landscape of Pauli noise learning.

\subsection{Noise Learning of MCMs}
\label{subsection: subroutine 1}

Before extracting the Clifford gate noise, a procedure is given to extract MCMs noise provided that the state preparation quality of MCMs is guaranteed (Theorem~\ref{theorem: exact}). At this stage, the quantum circuit from which we wish to sample consists of purely MCMs at the qubit locations $\delta(P_\alpha, P_\beta)$, illustrated as CB of (b) in Figure~\ref{fig:benchmarking_and_graph}. Naively, extracting the same information through an independent implementation of classical feed-forward will take $|\delta(P_\alpha, P_\beta)|+1$ distinct quantum circuits to process all quantum noise from MCMs. However, Lemma~\ref{lemma: generalized post-processing} shows that the same quantum information is encoded in CB of (b) in Figure~\ref{fig:benchmarking_and_graph}. In one characterization circuit, it is possible to learn all of the relevant noise information that involves the noisy MCMs: $P_\beta$, $\delta_{P_\beta}$, and the classical read-out error $\lambda_{q_p}$ at the measured qubit $p$.

Given the sampled data from the CB of (b), the extraction of $\delta_{P_\beta}$ is straightforward as it involves the preparation and measurement of $\langle P_\beta\rangle$. This step is tractable because USIs behave like a Pauli channel in the $z$-basis (Theorem~\ref{theorem: USI pauli propagation.}). Meanwhile, to extract the classical read-out errors, an iterative implementation of classical post-processing is required for each qubit location $p$ in the non-zero locations of $\delta(P_\alpha,P_\beta)$. Specifically, we incorporate the classical feed-forward $\left(\mathcal{X}^{k+r}_q\otimes \mathcal{I}_p\right)(\cdot)$ of Eq.~(\ref{eqn: QICF}) by post-processing to convert each USI into a QICF. The feed-forward effectively introduces a stochastic CNOT gate, which couples $P_\beta$ with an adjacent $P_{\gamma_p}$ whose Pauli weight differs from $P_\beta$ at $p$.

As a result of generalizing Eq.~(\ref{eqn: stochastic Pauli channel}) and~(\ref{eqn: ZZ propagation}), the fidelity extracted with feed-forward becomes $(1-2q_{p})\sqrt{\delta_{P_{\gamma_p}}\delta_{P_\beta}}$, where $q_{p}$ denotes the single-qubit classical readout assignment error rate of qubit $p$.

Combining both steps, we perform exponential fitting over the following fidelity decay curves for each qubit location $p$:
\begin{align}
    f_{\mathrm{USI}}(d) &= A\delta_{P_{\gamma_p}/P_\beta}^d+B,\label{eqn: CB USI}\\
    f_{\mathrm{QICF}}(d) &= A \left((1-2q_{p})^2\delta_{P_{\gamma_p}}\delta_{P_\beta}\right)^{d/2}+B, \label{eqn: CB QICF}
\end{align}
from which the classical and quantum noise parameters $\{\delta_{P_{\gamma_p}}, \delta_{P_{\beta}}, q_p\}$ for each $p$ can be jointly extracted. This two-step procedure separates the classical readout assignment error $q_p$ from the quantum Pauli fidelities $\delta_{P_\beta}$, which is the central operational result of this framework.

\subsection{Noise Learning of Multi-Qubit Clifford Gates}
\label{subsection: delta function}
Given the pre-measurement error channel and the collections of classical read-out errors for each MCM, we may now leverage Theorem~\ref{theorem: main result}, and the noise extraction procedure becomes relatively straightforward. Here, $P_\beta$ is characterized through the (a) + (b) circuit in Figure~\ref{fig:benchmarking_and_graph}. The independent components of (a) and (b) contribute to the following fidelities:
\begin{align}
    (\Lambda\circ\mathcal{ G})(P_\alpha) &= \lambda_{P_{\beta}}P_{\beta},\\
    R_{\delta(P_\alpha, P_\beta)}(P_{\beta})&=\prod_{p \in \delta(P_\alpha,P_\beta)}(1-2q_p)\delta_{P_{\beta}}P_{\alpha},
\end{align}
where the product of $(1-2q_p)$ emerges due to the fact that all MCMs in $R_{\delta(P_\alpha, P_\beta)}$ are now turned into QICFs acting on the ancillary classical bit. When combined, an (a) + (b) circuit results in the following decay curve:
\begin{equation}
    f_{\mathrm{GCB}}(d) = A\left[\prod_{p \in \delta(P_\alpha,P_\beta)}(1-2q_p)\delta_{P_{\beta}}\lambda_{P_{\beta}}\right]^d + B. \label{eqn: GCB}
\end{equation}
\subsection{Consistent Noise Learning of Multi-Qubit Clifford Gates}

So far, we have given two primitives that decouple unlearnable quantum noise provided that single-qubit MCMs exhibit sufficiently good $z$-basis projection quality, comparable to single-qubit gate fidelities. Depending on the type of quantum device, there exist regimes in which $z$-basis projections are no longer guaranteed. In this situation, we cannot rely on the assumption of Theorem~\ref{theorem: exact}. However, the two primitives in fact remain compatible with the consistent Pauli noise learning framework. Consequently, this significantly extends the applicability of CB to arbitrary Clifford circuits.

The consistent Pauli noise learning framework leverages gate set tomography to unify the unlearnable noise gauge of a gate set with Pauli noise, providing a self-consistent characterization of noise across a circuit~\cite{chen2025disambiguatingpaulinoisequantum}. A key insight of our work is that an MCM with classical feed-forward effectively acts as a Clifford gate, inheriting the same noise structure and gauge properties that Ref.~\cite{chen2025disambiguatingpaulinoisequantum} is designed to handle. As a consequence, MCMs can be incorporated into the consistent Pauli noise learning framework without breaking its theoretical guarantees. Specifically, the Pauli propagations in Eq.~(\ref{eqn: CB USI}), (\ref{eqn: CB QICF}), and (\ref{eqn: GCB}) can be embedded within the design matrix formalism of Ref.~\cite{chen2025disambiguatingpaulinoisequantum}. This result enables consistent CB of arbitrary Clifford circuits, provided that the noise is characterized consistently using single-qubit MCMs. Adopting the USI gate assumption, our framework effectively absorbs SPAM errors into a consistently learnable Pauli noise framework that leverages the noise magnification of CB. By establishing this compatibility, we expand the reach of CB beyond purely unitary Clifford gate sets to include the broader class of operations available in modern quantum processors that support mid-circuit measurement and real-time classical feed-forward.

\section{Numerics}
\label{section: numerics}

\begin{figure*}[t]
    \centering
    \begin{minipage}{0.45\linewidth}
        \centering
        \includegraphics[width=\linewidth, height=0.77\linewidth]{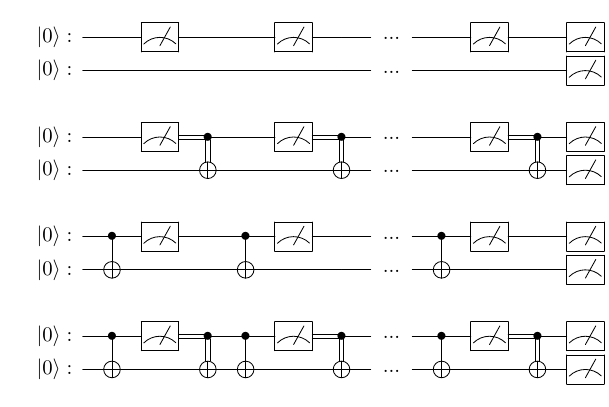}
        \label{fig:philosophy}
    \end{minipage}
    \hfill
    \begin{minipage}{0.5\linewidth}
        \centering
        \includegraphics[width=\linewidth, height=0.7\linewidth]{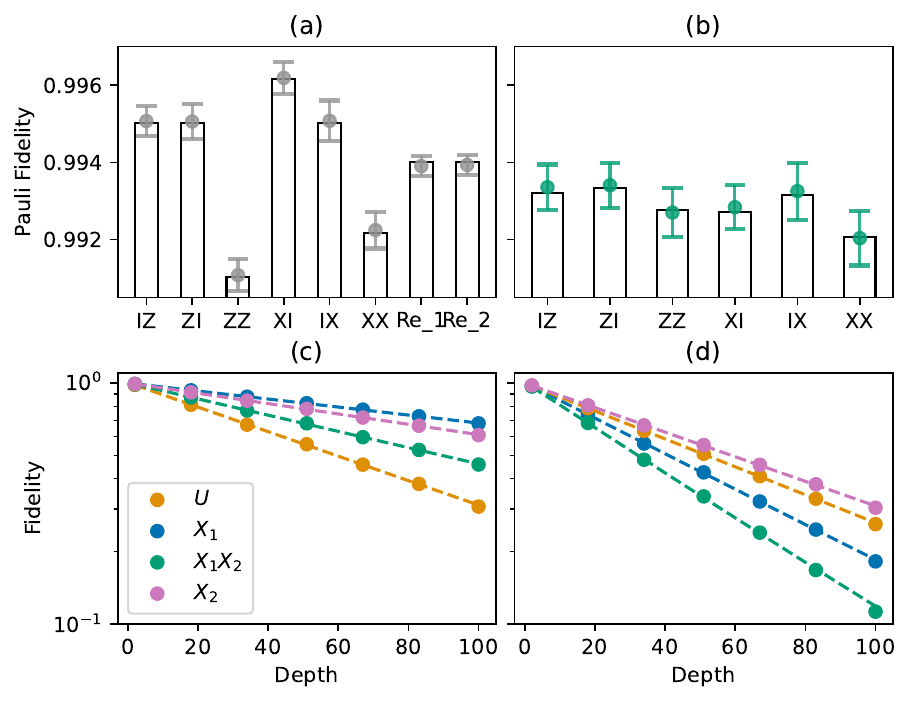}
        \label{fig:benchmark}
    \end{minipage}

    \caption{\justifying Unlearnable noise characterization of noisy CNOT gates using mid-circuit measurements. Characterization with $R = 1000, S=1000$ per data point in (c) and (d) using Markovian noisy USI satisfying Theorem~\ref{theorem: exact}. We simulate the extraction of unlearnable pairs $Z_2, Z_1Z_2$ and $X_1, X_1X_2$ of a noisy CNOT gate with the \texttt{fake-torino} noise model. \textbf{Left}: The four possible and effective MCMs and CNOT-MCMs circuits that one may implement through post-processing via Lemma~\ref{lemma: generalized post-processing}. These diagrams serve to show that MCMs can cancel the actions of coherent CNOT gates, thus breaking the unlearnable couplings. \textbf{Right}: CNOT gate noise characterization demonstration. In this characterization round, we perform unlearnable noise learning in the $z$/$x$-basis, using two types of MCMs. \texttt{Re$_{1/2}$} denotes the classical read-out errors extracted the MCMs used for the $z/x$-basis characterisation, while $U$ denotes the unlearnable noise pair extracted from regular CB. The characterization begins with MCMs noise learning with result displayed in panel (a). The learning process requires CB on MCMs that produces exponential decay of fidelity displayed in panel (c). Next, the unlearnable noise for the CNOT gate is extracted, displayed in panel (b). The characterisation of interleaving CNOT-MCMs quantum circuits produces the signature exponential decay of fidelity in (d).}
    \label{fig:noise_characterization_top}
\end{figure*}

\begin{figure*}[t]
    \centering
    \includegraphics[width=\linewidth, height=0.3\linewidth]{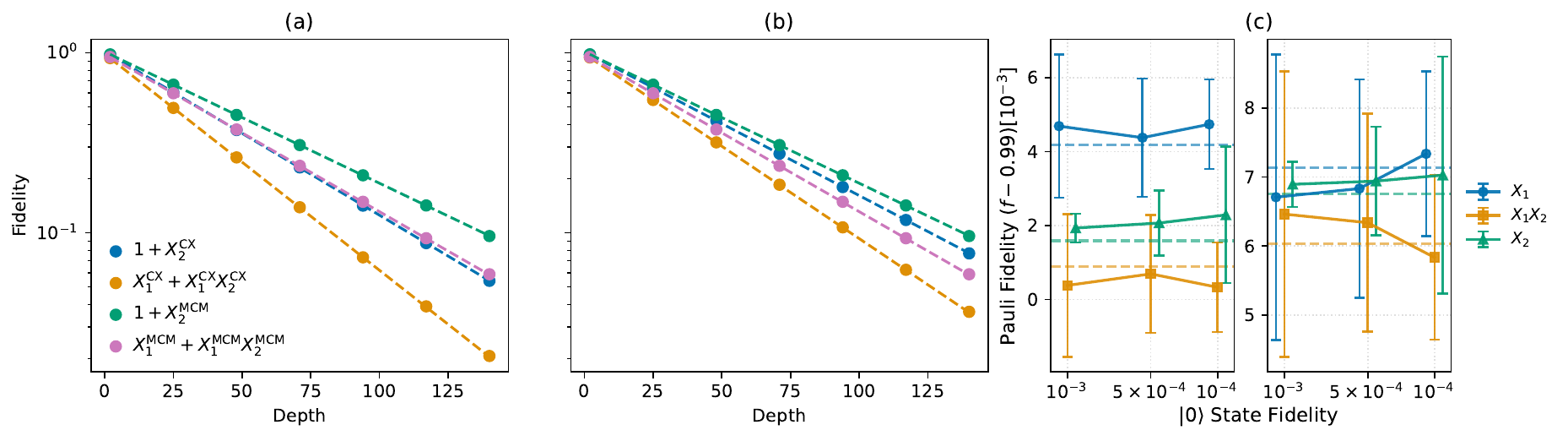}
    \caption{\justifying Unlearnable noise approximation of noisy CNOT gates using MCMs with unreliable $\ket{1}$ preparation. Characterization with $R=1000, S = 10000$ per data point in (a) and (b) using Markovian noisy MCMs satisfying Theorem~\ref{theorem: approx} in the $x$-basis, on qubit pairs $[20,21]$ and $[40,34]$ of \texttt{fake-torino}. (a) and (b) illustrate CB with $\ket{0}$ state projections performed on interleaving CNOT and MCMs on $[20, 21]$ and $[40, 34]$, respectively. In these examples, a single-qubit $y$-rotation $\lambda_r = \cos 14^\circ$ is performed prior to each MCM to implement Eq.~(\ref{eqn: arithmetic and geometric mean}). (c) summarizes the quality of Pauli noise extraction of the unlearnable pairs $X_1, X_1X_2$ (blue, orange) and the learnable coefficient $X_2$ (green) of $[20,21]$ (left) and $[40,34]$ (right), when the state preparation infidelity decreases. Dashed lines indicate exact solutions. When the unlearnable noise difference is pronounced, the left panel of (c) shows that noise fidelity can be effectively separated with the approximate technique, despite of the state-preparation fidelity by each MCM. Otherwise, the right panel of (c) shows that better state-preparation quality can improve on the accuracy of unlearnable noise detection. It is possible to tailor the single-qubit $y$-rotation that maximizes the difference of fidelities of Eq.~(\ref{eqn: arithmetic and geometric mean}) when performing the approximation, according to the $\ket{0}$ state fidelity. They are chosen to be, from left to right, $\cos 14^\circ, \cos 18^\circ, \cos 22^\circ$. }
    \label{fig:scheme2}
\end{figure*}

\begin{figure*}[t]
    \centering
    \includegraphics[width=1\linewidth]{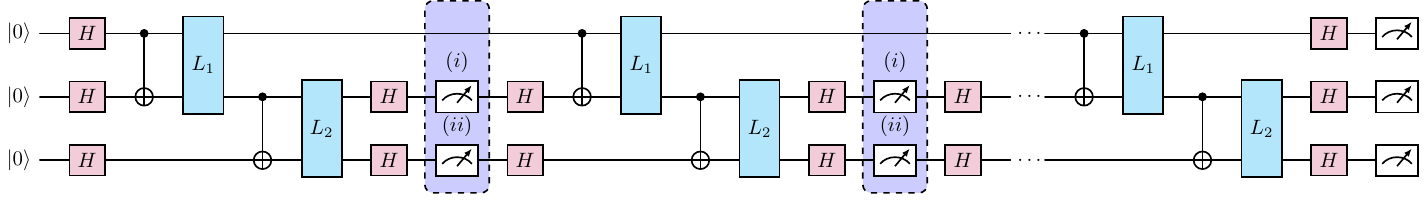}

    \vspace{1em}
    \includegraphics[width=1\linewidth]{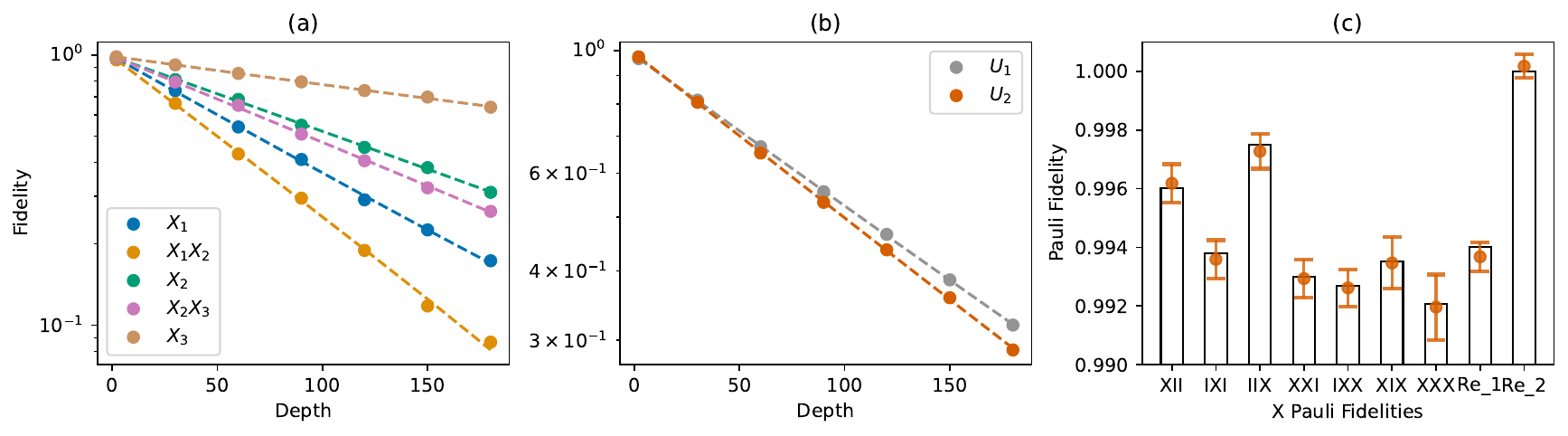}
    \caption{\justifying Three-qubit ladder CNOT gate coefficient characterization with $R = 250, S = 100$ per data point in (a) and (b) using Markovian noisy MCMs satisfying Theorem~\ref{theorem: exact}. CNOT gates are simulated with the noise model of \texttt{fake-torino}. \textbf{Top}: Circuit diagram for the characterization of the CNOT ladder in the $x$-basis in the presence of a local noise model. When noise locality is not guaranteed, Theorem~\ref{theorem: main result} asserts that MCMs must be inserted at (i) and (ii) simultaneously to learn all learnable Pauli fidelities. However, knowledge of the local noise structure $L_1$ and $L_2$ allows a separate implementation of (i) and (ii), reducing the MCMs overhead for noise tomography.
    \textbf{Bottom}: (a) demonstrates the information extracted from post-processing of interleaving CNOT-ladder and MCMs measurements; (b) demonstrates the coupled unlearnable Pauli fidelities $U_1$ and $U_2$ with standard CB when MCMs are not available. $U_1$ denotes the unlearnable pairs of Eq.~(\ref{eqn: cycle 2}) while $U_2$ denotes the unlearnable quadruple of Eq.~(\ref{eqn: cycle 1}); (c) summarizes the extracted noise information that is originally coupled due to Pauli cycles. Classical read-out errors are additionally demonstrated.}
    \label{fig:ladder-cnot}
\end{figure*}

We numerically estimate Pauli fidelities using the protocols of Theorems~\ref{theorem: exact} and~\ref{theorem: approx}. Subsection~\ref{subsection: CNOT gate} benchmarks the CNOT gate to demonstrate robustness against MCMs noise, and Subsection~\ref{subsection: CNOT ladder} extends to a 3-qubit CNOT ladder to demonstrate scalability. Within Subsection~\ref{subsection: CNOT gate}, we further separate \textit{exact} learning, which assumes the USI model of Theorem~\ref{theorem: exact}, from \textit{approximate} learning, which relaxes the USI model to allow amplitude damping of the $\ket{1}$ state and a noisy $\ket{0}$ preparation with infidelity $F(\ket{0})=1-\bra{0}\rho\ket{0}$. The CNOT gate noise model is taken from the \texttt{fake-torino} calibration snapshot dated \texttt{2024-02-28}. For all simulations, we implement the state preparation error as a depolarizing error:
\begin{equation}
    E(\rho) = (1 - \lambda) \rho + \lambda \text{Tr}[\rho] \frac{I}{2^n}.
\end{equation}
with $\lambda =0.005$.

Meanwhile, for both this section and Section~\ref{section: Experiment}, $R$ denotes the number of randomizations and $S$ denotes the number of shots per twirl across the plots, so that the total number of shots per cycle is $R\times S$. If we denote the total number of cycles as $C$, the total number of shots in the experiment corresponds to $C\times R\times S$. $[a,b]$ is used to specify $a$ as the control and $b$ as the target of a CNOT or a CZ gate.

\subsection{CNOT Gate}
\label{subsection: CNOT gate}
\subsubsection{Exact Learning}

In this benchmark, the unlearnable coefficients of the CNOT gate are extracted through the use of USIs that satisfy the condition of Theorem~\ref{theorem: exact}, summarized in Figure~\ref{fig:noise_characterization_top}. Specifically, the task is to disentangle the unlearnable noise pairs $\{\lambda_{x0},\lambda_{xx}\}$ and $\{\lambda_{0z}, \lambda_{zz}\}$, where both are only learnable as a product in standard CB. The learning process follows the procedure in Section~\ref{section: algorithm}. The noise of MCMs is first characterized, followed by the noise learning of the CNOT gate.

To emulate a realistic simulation for quantum gates, we prepare a noisy CNOT gate with a noise model taken from qubit $[0, 1]$ of \texttt{fake-torino}. The USI noise model consists of consecutive amplitude damping errors $T_1 = 50\,\mu\mathrm{s}$, $T_2 = 70\,\mu\mathrm{s}$, and gate time duration $t = 200\,\mathrm{ns}$, and a depolarizing error of $\lambda = 0.001$. Meanwhile, we assume that measurement errors are contributed by asymmetric classical read-out errors $\epsilon_{0} =0.001, \epsilon_1 = 0.005$, where $\epsilon_0$ denotes the read-out error event $\ket{0}\rightarrow \ket{1}$ and vice versa. Pauli twirling is applied independently to both the CNOT gate and the noisy MCMs. The twirling process converts both the CNOT and MCMs noise to a Pauli channel, and symmetrizes the classical read-out error to $\frac{\epsilon_0+\epsilon_1}{2} = 0.003$. Additionally, the classical read-out errors also contribute to the SPAM error in the final measurements.

The numerical results of Figure~\ref{fig:noise_characterization_top} can be interpreted as a two-step noise processing procedure as in Section~\ref{section: algorithm}. The left-hand side of the figure set illustrates the types of quantum circuit that we can implement in post-processing through the deferred feed-forward principle, and the right-hand side demonstrates the fitted results. The learnable MCMs noise parameters, in the $x$- and $z$-basis, are shown in panel (a), with the exponential fitting demonstrated in (c). They are obtained by implementing CB of the top two quantum circuits on the left-hand side through the deferred classical feed-forward principle (Lemma~\ref{lemma: generalized post-processing}). With the MCMs noise parameters, the unlearnable noise pairs $\{\lambda_{x0},\lambda_{xx}\}$ and $\{\lambda_{0z}, \lambda_{zz}\}$ become learnable in panel (b) with fittings in (d), through implementing CB of the bottom two quantum circuits on the left-hand side.

\subsubsection{Approximate Learning}

In approximate noise learning, the problem of interest is whether noisy MCMs may still be a reliable state preparation resource for approximating unlearnable noise pairs, when the state-projection quality of the $\ket{1}$ state is no longer guaranteed. With this premise, the exact learning protocol that we have been carefully developing is no longer applicable for isolating the measurement errors from the quantum gate errors. Nonetheless, the availability of the $\ket{0}$ state in the MCMs resource still provides a window for isolating unlearnable noise. Specifically, let us consider the propagation of the coupling Pauli pairs coupled by a CNOT gate:
\begin{equation}
    \frac{\lambda_{i0}P_{i0}+\lambda_{ij}P_{ij}}{2},
\end{equation}
with the corresponding unlearnable fidelities $\{\lambda_{i0}, \lambda_{ij}\}$. A projection on the second qubit in the $j$-basis alters the shape of the fidelity:
\begin{equation}
    \frac{(\lambda_{i0}+\lambda_{ij})(P_{i0}+P_{ij})}{2}.
\end{equation}
When CB is performed over this process and $P_i$ is measured on the first qubit, we obtain the fidelity estimate:
\begin{equation}
    f_{\text{approx.}}(d) = A\left(\frac{\lambda_{i0}+\lambda_{ij}}{2}\right)^d+B.
\end{equation}
When combined with the geometric mean learned from conventional CB, we obtain the system of equations:
\begin{equation}
    \lambda_{\text{am}} = \frac{\lambda_{i0}+\lambda_{ij}}{2}, \quad \lambda_{\text{gm}} = \sqrt{\lambda_{i0}\lambda_{ij}},
\end{equation}
through which the unlearnable fidelity can be solved. In noise characterization, $\lambda_{i0}\approx\lambda_{ij}$ implies that $\lambda_{\text{am}}\approx\lambda_{\text{gm}}$, but this is overcome with the trick that a single-qubit rotation, known as a regulator in our work, can alter $\lambda_{ij}^\prime = \lambda_{ij}\times\lambda_{\text{reg}}$ without altering $\lambda_{i0}$. In principle, $\lambda_{i0}$ can be solved exactly through the system of equations:
\begin{equation}
    \lambda_{\text{am}} = \frac{\lambda_{i0}+\lambda^\prime_{ij}}{2}, \quad \lambda_{\text{gm}} = \sqrt{\lambda_{i0}\lambda_{ij}^\prime}.
    \label{eqn: arithmetic and geometric mean}
\end{equation}
When the $\ket{0}$ state projection process is accompanied by noise effects prior to each MCM measurement, we can instead approximate $\lambda_{\text{am}}$ (Appendix~\ref{appendix: scheme 2 proof}), hence the name approximate learning. In the following, we describe a noise environment that emulates harsher noise conditions compared to exact learning.

In this environment, the amplitude damping errors before measurement become $T_1 = 20\,\mu\mathrm{s}$ and $T_2 = 30\,\mu\mathrm{s}$ with gate time duration $t = 200\,\mathrm{ns}$, while the consecutive depolarizing error remains $\lambda = 0.001$. In addition, a single-qubit amplitude damping error with $T_1 = 90\,\mu\mathrm{s}$ and $T_2 = 90\,\mu\mathrm{s}$ and gate time duration $t = 500\,\mathrm{ns}$ is prepared to prevent the use of the $\ket{1}$ resource state. We implement the classical read-out error rates $\epsilon_0 = 0.001, \epsilon_1= 0.005$ in noisy MCMs, and also $\epsilon_0 = 0.005, \epsilon_1= 0.01$ on the final measurements to simulate SPAM errors. Pauli twirling is applied to the CNOT gates as before. However, since the $\ket{1}$ state is no longer a useful quantum resource in this characterization set, full Pauli twirling cannot be applied to MCMs, except for their non-measuring neighbor. Through this set-up, the approximate learning benchmark simulates an environment with available $\mathcal{M}_0$ of Eq.~(\ref{eqn: standard USI}) but not $\mathcal{M}_1$ for $l=1$.

Figure~\ref{fig:scheme2} illustrates whether the noise approximation protocol is sensitive to the noise difference of unlearnable noise parameters. For this purpose, two noisy CNOT gates $[20,21]$ and $[40, 34]$ of \texttt{fake-torino}, with stronger and weaker noise differences respectively, are tested. Since this simulation is not used to demonstrate noise learning in different bases, we limit our investigation to decoupling the unlearnable pairs $\{\lambda_{x0}, \lambda_{xx}\}$. Additionally, we also approximate the learnable fidelity $\lambda_{0x}$ as a control to test the reliability of the approximation. In the benchmark, three values of the $\ket{0}$ state preparation fidelity are simulated through the depolarizing channel $F(\ket{0}) \in \{10^{-3}, 5\times10^{-4}, 10^{-4}\}$, spanning the range of hardware quality achievable. Correspondingly, regulators that alter $\lambda_{ij}$: $\lambda_r=\cos 14^\circ, \cos 18^\circ, \cos 22^\circ$ are chosen for each fidelity value $F(\ket{0})$ in the region of highest estimation precision (see Figure~\ref{fig:regulator} in Appendix~\ref{appendix: scheme 2 proof}). The fittings of $[20, 21]$ are shown in panel (a), $[40,34]$ in panel (b), and the comparison between the approximate unlearnable/learnable fidelities and the exact solution is shown in panel (c).

The results confirm the in-principle viability of approximate unlearnable noise differentiation. Two practical limitations emerge: the shot efficiency is substantially lower than in the exact regime, and the approximation degrades with deterioration in the $\ket{0}$ state preparation fidelity, consistent with the theoretical mechanism of Theorem~\ref{theorem: approx}.

\subsection{Characterization with Local Noise Model}
\label{subsection: CNOT ladder}

Following the benchmarks of CNOT noise learning in the presence of different types of MCMs noise, we investigate the learnability of a $3$-qubit CNOT Clifford gate when the quantum gate noise is local, summarized in Figure~\ref{fig:ladder-cnot}. In this benchmark, the USI model of Theorem~\ref{theorem: exact} is assumed, and since MCMs noise under this condition can be isolated prior to characterization of an $n$-qubit Clifford gate, no noise model is employed for MCMs, except for classical read-out errors that take the values $\epsilon_0 = 0.001, \epsilon_1= 0.005$. Meanwhile, the noise model of the CNOT gate is taken from the gate errors of consecutive qubit pairs $[26,27]$ and $[27, 28]$ of \texttt{fake-torino}. For the SPAM error, a 3-qubit version of the SPAM error in the exact learning of the CNOT gate is prepared with exactly the same parameterization.

The purpose of this benchmark is to demonstrate a generalized application of Theorem~\ref{theorem: main result} to MCMs noise learning when the noise model of a Clifford gate is local. Provided that all Pauli fidelities are independent degrees of freedom, the Pauli propagation of the CNOT ladder gate obeys the following transition rules:
\begin{align}
    \mathrm{cycle}_1 &: X_1X_2\rightarrow X_1 \rightarrow X_1X_2X_3\rightarrow X_1X_3\label{eqn: cycle 1},\\
    \mathrm{cycle}_2 &: X_2X_3\rightarrow X_2\rightarrow X_2X_3\rightarrow X_2\label{eqn: cycle 2},\\
    \mathrm{cycle}_3 &: X_3\rightarrow X_3\rightarrow X_3\rightarrow X_3.
\end{align}
The maximum fidelity difference of the CNOT ladder is $\delta(X_1, X_1X_2X_3) = [0, 1, 1]$, and any other fidelity difference lies within the binary vector $[0, 1, 1]$, such as $\delta(X_1X_2, X_1)$. Without noise locality, the generalized CB for characterizing all learnable information in the $x$-basis requires MCMs on both the second and third qubits, corresponding to (i) and (ii) in Figure~\ref{fig:ladder-cnot}.

With gate noise locality, the accumulation of Pauli fidelities can be divided into pairs $(\alpha_{L_1}, \alpha_{L_2})$ that correspond to what the local noise models $L_1$ and $L_2$ observe.
\begin{align}
    X_1 &\rightarrow (X_1, I),\\
    X_3 &\rightarrow (I, X_3),\\
    X_1X_3 &\rightarrow (X_1, X_3),\\
    X_2 &\rightarrow (X_2, X_2),\\
    X_1X_2 &\rightarrow (X_1X_2, X_2).
\end{align}
The pairs are homomorphisms and generators of the Pauli group on the left. Importantly, $X_1X_2X_3$ can be mapped to $(X_1X_2,X_2X_3)$, so $\mathrm{cycle}_1$ of Eq.~(\ref{eqn: cycle 1}) can be further decomposed to:
\begin{equation}
    (X_1X_2,X_2)\rightarrow (X_1, I)\rightarrow (X_1X_2, X_2X_3)\rightarrow (X_1, X_3).
\end{equation}
This changes the landscape of noise learning because the real unlearnable contributions now center on the left side of the pair. The characterization of the CNOT ladder can be decomposed into three sets of characterization that measure at qubit $\delta(X_3,X_3)=[0,0,0]$, $\delta(X_2X_3, X_2) = \delta(X_2, X_2X_3)= [0, 0, 1]$, and $\delta(X_1X_2, X_1) = [0, 1, 0]$. Through Theorem~\ref{theorem: main result}, this asserts the learnability of $X_3, X_2, X_2X_3$, and $X_1$. When performing standard CB on Eq.~(\ref{eqn: cycle 1}), we also extract $X_1X_2$. In the circuit diagram in Figure~\ref{fig:ladder-cnot}, noise locality implies that (i) and (ii) can be implemented separately, thus reducing the overhead of MCMs used for characterization. In the data set of Figure~\ref{fig:ladder-cnot}, the numerical benchmarks in panel (a) capture all of the learnable Pauli fidelities required to construct all fidelities in panel (c). We also demonstrate the computation of the classical read-out errors. Meanwhile, panel (b) contrasts panel (a) with the learnable information in standard CB when MCMs are not available.

This benchmark demonstrates the compatibility of Theorem~\ref{theorem: main result} with the assumption of noise locality, which reduces the number of MCMs per characterization circuit for the CNOT ladder. Pauli fidelities are learned in the $x$-basis via the generalized post-processing scheme, and three types of characterization circuit, excluding MCMs noise characterization, are sufficient for the full extraction of $x$-basis fidelities and classical readout errors for the 3-qubit CNOT ladder benchmark. Under the realistic sparse Pauli--Lindblad noise model~\cite{temme_2023prob}, the generalized CB technique provides a new method for Pauli noise characterization of Clifford gates with limited use of MCMs resources.

\section{Experiment}
\label{section: Experiment}

We present two experimental studies where noisy MCMs are employed as tomography primitives in IBM devices based on the theoretical framework of generalized CB in Subsection~\ref{section: generalized cycle benchmarking} and binomial analysis in Subsection~\ref{subsection: binomial analysis}. In Subsection~\ref{section: unlearnable noise learning}, we measure the unlearnable noise of the CNOT gate through generalized CB, benchmarked against direct tomography of Pauli fidelity. In Subsection~\ref{subsection: experimental binomial analysis}, we investigate the dynamics of non-Markovianity of Pauli fidelities through binomial analysis. Specifically, the study investigates the binomial distribution of $P_p$ in Eq.~(\ref{label: invariance}) which can be expanded to the form of Eq.~(\ref{eqn: binom}). Both analyzes have been conducted on the CNOT gates of \texttt{ibm-aachen} and \texttt{ibm-pittsburgh}. The experiment on \texttt{ibm-aachen} with full twirling are conducted on \texttt{2025-12-28}. Meanwhile, experiments on \texttt{ibm-aachen} and \texttt{ibm-pittsburgh} with no twirling on the MCMs have been conducted on \texttt{2025-12-26}. MCMs were chosen with minimal real-time read-out assignment error rate given by latest calibration data. 

\subsection{Unlearnable Noise Learning}
\label{section: unlearnable noise learning}

\begin{figure*}
    \centering
    \includegraphics[width=\linewidth]{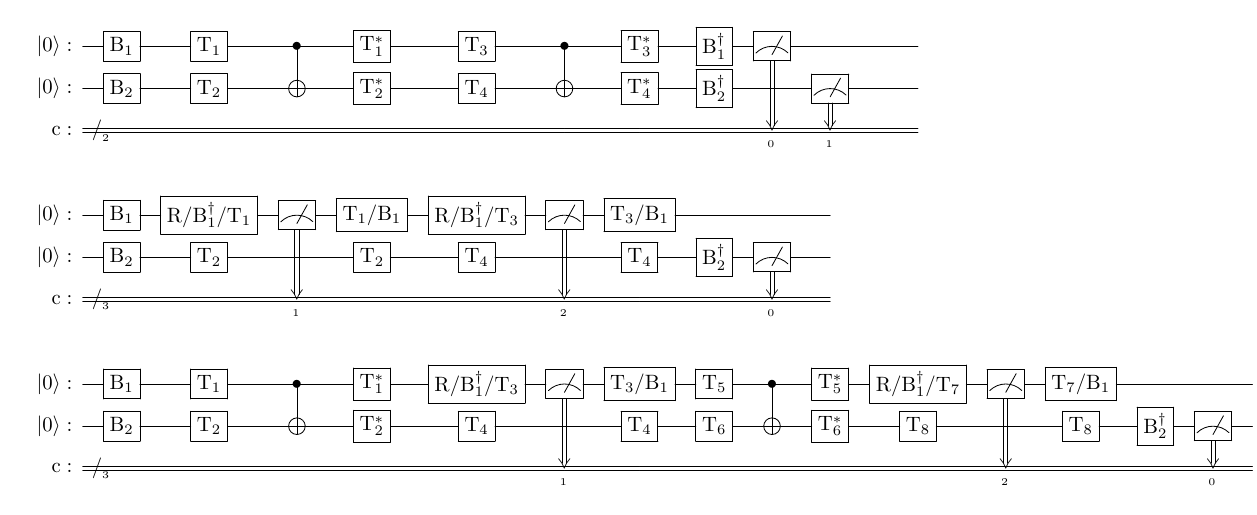}
    \caption{\justifying Three CB set for decoupling unlearnable coefficients using MCM noise with layer $L = 2$. The top to bottom circuits respectively represent conventional CB, CB for MCMs, and interleaving CNOT-MCMs. $B, T, R$ respectively denote basis transformation, Pauli twirling and single-qubit rotation. These circuits encompasses everything implemented in the two experiments in this work. }
    \label{fig:experimental circuits}
\end{figure*}

\begin{figure*}
    \centering
    \includegraphics[width=\linewidth]{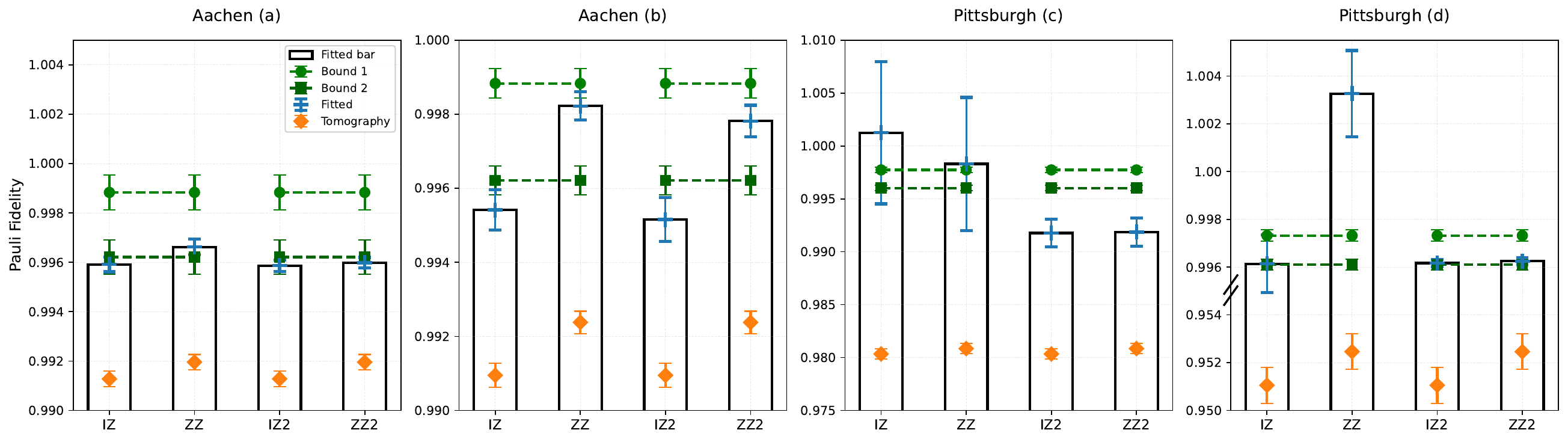}
    \caption{\justifying Cross quantum devices comparisons of CNOT Pauli noise learning utilizing MCMs-based CB (blue) with $R=64$ and $S=3000$ per cycle for the non-measuring neighbor of MCMs, and $R = 1920$ and $S = 100$ for CNOT gates. The total shot budget is $1.728\times 10^6$ for CNOT only CB, $1.344\times10^6$ for MCMs-based CB, and $1.344\times10^6$ for conventional tomography. All experiments have been benchmarked against -- the physical CPTP upper (lighter green) and lower (green) bounds, and regular fidelity tomography (orange). Data-set (a) and (b) represents the benchmarking results for different qubit pairs of \texttt{ibm-aachen}, and (c) and (d) of \texttt{ibm-pittsburgh}. In each subplot, noisy MCMs with increasing pre-measurement noise are used to study its effects on the unlearnable coefficients. Fidelity denoted as $\{$IZ, ZZ$\}$ are obtained from MCMs with single-qubit rotations. Meanwhile, $\{$IZ2, ZZ2$\}$ are obtained from MCMs without single-qubit rotation. Correctness of the results is assessed by how close it is to the CPTP bounds, and the relative difference between the unlearnable pairs compared to the width of the CPTP bounds.}
    \label{fig: Mid-circuit benchmarking}
\end{figure*}

\subsubsection{Unlearnable noise}

In the first experiment, the generalized CB procedure introduced in Section~\ref{section: algorithm} is applied directly to characterize CNOT gates, on two generations of quantum devices, \texttt{ibm-aachen} and \texttt{ibm-pittsburgh}. The task is to unravel the unlearnable noise pair $\{\lambda_{0z}, \lambda_{zz}\}$, which is only learnable as a product in standard CB. To assess the quality of MCM-based CB, three types of CB are implemented, summarized in Figure~\ref{fig:experimental circuits}, and regular fidelity tomography is performed. In the conventional CB benchmark, the CPTP conditions allow us to impose an upper and lower bound on the difference between $\{\lambda_{0z}, \lambda_{zz}\}$. We name these bounds the CPTP bounds. The cycle depth $[2, 18, 36, 52, 70, 88, 104, 122, 140]$ is chosen for \texttt{ibm-aachen} and $[2, 24, 46, 68, 90, 112, 134, 156, 180]$ for \texttt{ibm-pittsburgh}. Meanwhile, conventional tomography serves as the control of the experiment, where no cycles are implemented. This method assumes perfect state preparation, where measurement errors in the CNOT tomography are mitigated through a separate tomography on the measurement itself.

In this experimental background, we design the set-up for an MCMs-based CB protocol. Based on the fact that $\ket{1}$ is an order of magnitude less likely to appear compared to $\ket{0}$, we avoid twirling the MCMs of both \texttt{ibm-aachen} and \texttt{ibm-pittsburgh} to avoid exposure to the $\ket{1}$ state. Next, two types of MCMs are implemented, one with normal MCMs and MCMs preceded by a single-qubit rotation $\cos 12^\circ$. A single-qubit rotation that is followed by a measurement collapse is equivalent to adding a Pauli channel prior to a MCM, and the purpose is to show that pre-measurement errors can be mitigated in our characterization, according to the MCMs characterization algorithm of Subsection~\ref{subsection: subroutine 1}. Depending on the MCMs type, the choice of cycle depth varies. In this experiment, the cycle depth $[2, 21, 41, 61, 80, 100, 120]$ is chosen for MCMs without rotation and $[2, 18, 34, 51, 67, 83, 100]$ for MCMs with rotations. The final unlearnable pairs $\{\lambda_{0z}, \lambda_{zz}\}$ are extracted directly from data of MCMs-based CB, compared to the CPTP bounds and conventional tomography that are \textit{independently characterized}. 

The results are summarized in Figure~\ref{fig: Mid-circuit benchmarking}, with the data of the exponential fits available in Appendix~\ref{appendix: Exponential Fits of MCMs-based CB}. The progression of the figures, from left to right, represents the CB on qubit pairs $[103, 104]$ and $[64, 63]$ of \texttt{ibm-aachen}, and $[107, 108]$, and $[37, 45]$ of \texttt{ibm-pittsburgh}. These pairs are chosen from the set of qubits with the minimal measurement read-out errors, based on the IBM calibration on the day. Within each figure, we display on the left the learned Pauli fidelity through normal MCMs, and on the right through MCMs with single-qubit rotation. The final quality of the benchmarks, with the exception of (c) and the second bar of unlearnable noise in (d), agrees with the CPTP bounds and the fidelity difference extracted from conventional tomography. Although conventional tomography, illustrated in orange, produces fidelities that have a consistent relative difference compared to the width of the CPTP bound, overall MCMs-based CB outperform conventional tomography in terms of its closeness to the CPTP bound. 

Meanwhile, the large error bars observed in (c) correspond to the fact that the observed data points do not follow the exponential trend due to fluctuation. In the fittings of (c), some data points were removed from the fitting to obtain consistent fidelity estimations. Figure~\ref{fig: pittsburgh panel top} in Appendix~\ref{appendix: Exponential Fits of MCMs-based CB} visualizes the fittings that are employed to extracted the unlearnable fidelities. Meanwhile, the over-estimation in the second bar of unlearnable noise in (d) originates from the deviation of the exponential decay of expectation values $\langle ZZ\rangle$ toward $0$ as the number of cycles increases, leading to an under-estimation of the MCMs noise. Panel (h) of Figure~\ref{fig: pittsburgh panel bottom} in Appendix~\ref{appendix: Exponential Fits of MCMs-based CB} illustrates this effect, in contrast to (d) of Figure~\ref{fig: pittsburgh panel bottom} when no single-qubit rotations is implemented prior to MCM. 

Taken together, these results demonstrate that state-of-the-art superconducting MCMs, when calibrated, are robust against quantum noise for the purpose of computing Pauli fidelities. A comparison with conventional tomography shows that MCM-based CB exhibits partial SPAM immunity relative to the naive noise tomography discussed below Theorem~\ref{theorem: main result}. Given this accuracy, MCM-based CB is a practical tool for approximating the Pauli fidelities of $n$-qubit Clifford gates, at least in regimes where CPTP-like constraints are no longer available.

\subsubsection{Classical Read-out Error}
In our characterization protocol, classical read-out errors can be learnt as the statistical difference between post-processed and non-post-processed data. They can be learned independently from the MCMs CB set and the CNOT-MCMs set. Although the detected classical read-out errors appear to be small in our experiment, they are important to eliminate additional noise effects on the unlearnable noise fidelity that leads to the results in Figure~\ref{fig: Mid-circuit benchmarking}.

\subsection{Binomial Analysis}
\label{subsection: experimental binomial analysis}

\subsubsection{Binomial distribution}

In the second experiment, we apply the binomial analysis to detect non-Markovian quantum effects, a key challenge in quantum devices. Instead of performing binomial expansion on the unlearnable pairs $\{\lambda_{0z}, \lambda_{zz}\}$, in this study, we perform binomial analysis directly on the \textit{learnable} fidelity $\lambda_{z0}$, by expanding $\lambda_{z0}^N$ according to Eq.~(\ref{eqn: binom}). To characterize the possible types of non-Markovian effects, we perform an additional MCMs-based CB on \texttt{ibm-aachen} with/without twirling on the noisy MCMs.

\begin{figure*}[htbp]
\centering
\includegraphics[width=\linewidth]{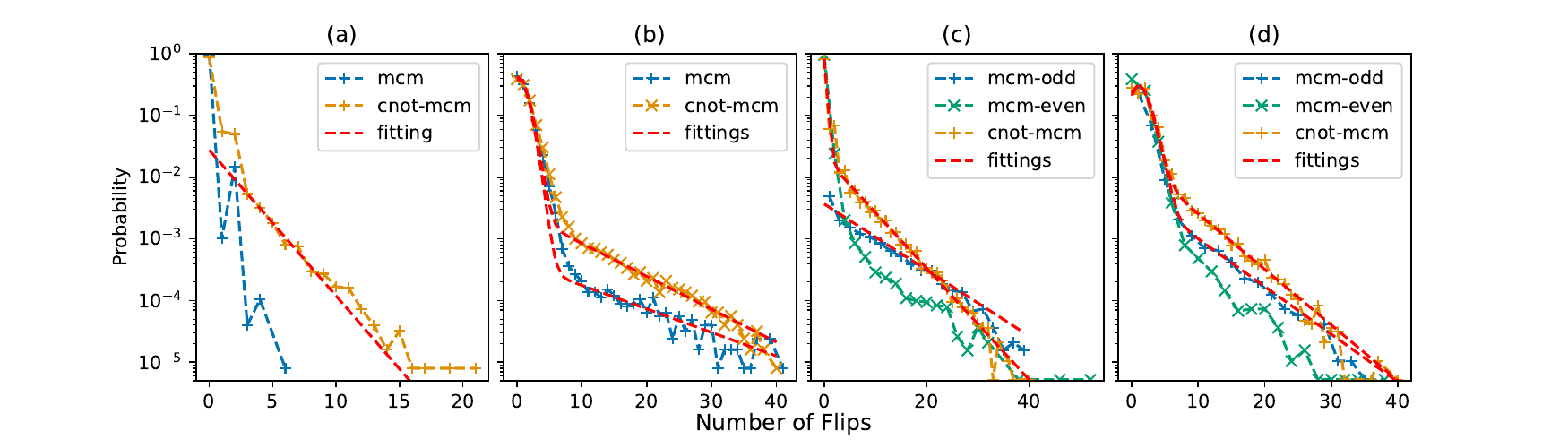}
\caption{\justifying
Binomial analysis on \texttt{ibm-aachen} and \texttt{ibm-pittsburgh} that demonstrates multiple modes of non-Markovian noise contributions. (a) and (b) represent binomial analysis of \texttt{ibm-aachen} with/without twirling with $R = 250$ and $S = 100$. (c) and (d) represent untwirled binomial analysis of $\texttt{ibm-pittsburgh}$ with/without single-qubit rotation with $R = 64$ and $S = 3000$ for the MCMs and $R = 1920$ and $S = 100$ for the CNOT gates. In (a) and (b) of \texttt{ibm-aachen}, the non-twirled result reveals a non-Markovian signature originating from classical read-out errors in the MCMs-only characterization, while the twirled result reveals a non-Markovian contribution that is attributed to leakage errors. In (c) and (d) of \texttt{ibm-pittsburgh}, the absence of twirling reveals how leakage dynamics bias toward the $\ket{1}$ state when CNOT is not involved, while the bias is eliminated when CNOT is applied, suggesting a non-Markovian effect from cross-talk rather than leakage from the Josephson junction alone. A single-qubit rotation is applied to (d) that induces a binomial peak, to demonstrate the failure of the binomial model in explaining the anomaly.}
\label{fig:non-Markovian-plot}
\end{figure*}

\begin{table*}[htbp]
\centering
\renewcommand{\arraystretch}{1.2}
\setlength{\tabcolsep}{4pt}

\caption{\justifying Fitted decay parameters for the non-Markovian signatures. The data correspond to the fitting in Figure~\ref{fig:non-Markovian-plot}. We observe the signature of a binomial statistic $p$, and non-Markovian contributions. The magnitude $A$ provides the order of magnitude of the deviation from the expected binomial statistics. Based on the model of classical read-out given leakage events, $\lambda_d$ provides the decay rate from the collective leakage state back to the computational basis.}
\label{tab:fitted_params}

\resizebox{\linewidth}{!}{%
  \begin{tabular}{lcccc}
    \toprule
    \textbf{Parameters} & \texttt{ibm-aachen} & \texttt{ibm-aachen} Twirled &
    \texttt{ibm-pittsburgh} ($\theta=0$) & \texttt{ibm-pittsburgh} ($\theta=12$) \\
    \midrule
    Binomial $p$ & -- & $0.988 \pm 2.65 \times 10^{-4}$ & $0.9991 \pm 9.14\times 10^{-5}$ & $0.986\pm 6.86\times 10^{-4}$ \\
    Decay Rate $\lambda_d$ & $0.580\pm 0.008$ & $0.884 \pm 3.94\times 10^{-3}$ & $0.812 \pm 9.31\times 10^{-3}$ & $0.813 \pm 5.12\times 10^{-3}$ \\
    Magnitude $A$ & $0.027 \pm 0.001$ & $2.90\times 10^{-3} \pm 1.80 \times 10^{-4}$ & $2.13\times 10^{-2}\pm 2.97\times 10^{-3}$ & $2.02\times 10^{-2}\pm 1.56\times 10^{-3}$ \\
    \midrule
    Binomial $p$ & -- & $0.989 \pm 2.06\times 10^{-4}$ & -- & $0.985 \pm 6.17\times 10^{-4}$ \\
    Decay Rate $\lambda_d$ & -- & $0.916 \pm 1.00\times 10^{-2}$ & $0.884 \pm 3.93\times 10^{-3}$ & $0.837 \pm 1.11\times 10^{-2}$ \\
    Magnitude $A$ & -- & $4.26\times 10^{-4} \pm 7.09\times 10^{-5}$ & $3.68\times 10^{-3} \pm 2.55\times 10^{-4}$ & $5.92 \times 10^{-3}\pm 1.12\times 10^{-3}$ \\
    \bottomrule
  \end{tabular}%
}
\end{table*}

In Figure~\ref{fig:non-Markovian-plot}, we collectively illustrate the binomial analysis of the generalized CB running on \texttt{ibm-aachen} on qubit pairs $[106, 107]$ and \texttt{ibm-pittsburgh} on $[107,108]$. These figures compare the types of binomial deviations, demonstrating the difference between with/without twirling in \texttt{ibm-aachen} and with/without single-qubit rotation in \texttt{ibm-pittsburgh}. In the data set of \texttt{ibm-aachen}, the binomial expansion of $\lambda_{z0}^{75}$ is illustrated in both (a) and (b); in \texttt{ibm-pittsburgh}, on the other hand, the binomial expansion $\lambda_{z0}^{120}$ is illustrated in (c) and $\lambda_{z0}^{100}$ in (d).

In panels (b) and (d), the binomial peak is observed: the twirling of \texttt{ibm-aachen} and the single-qubit rotation applied to \texttt{ibm-pittsburgh}, both in between the $10^{-1}$--$10^{-3}$ scales. Although both effectively implement a Markovian channel that dominates the noise process, the mechanisms behind these phenomena are different. The contrast between (b) and (a) without twirling verifies the fact that Pauli twirling converts a noise model to an effective Pauli channel. Meanwhile, the contrast between (d) and (c) without single-qubit rotation demonstrates that the dominating noise effect has been shifted from a non-Markovian source to a Markovian one, the single-qubit rotation. Nonetheless, in both cases, the quantum effects must obey the separability condition in Eq.~(\ref{eqn: separable 1}) and (\ref{eqn: separable 2}) to produce a binomial distribution.

Meanwhile, a zig-zag-like statistical composition is instead observed for the untwirled result in panel both (a) and (c) when the number of flips is less than or equal to $6$. Particularly for (a), this phenomenon is captured in our simulation in Figure~\ref{fig:asymmetry}. Below the $10^{-3}$ scale of panels (b), (c) and (d), we observe abnormal tails across, especially in the presence of the CNOT gate. According to the numerical illustration Figure~\ref{fig:asymmetry}, a binomial distribution does not predict a measurable contribution when $f$ is large, so the experimental observation is necessarily attributed to other quantum effects. The existence of the asymmetry and tails are violation of Markovianity. These anomalies motivates a further analysis of non-Markovian contributions in CB.

\subsubsection{Asymmetries}
\begin{figure}[htbp]
    \centering
    \includegraphics[width=0.9\linewidth]{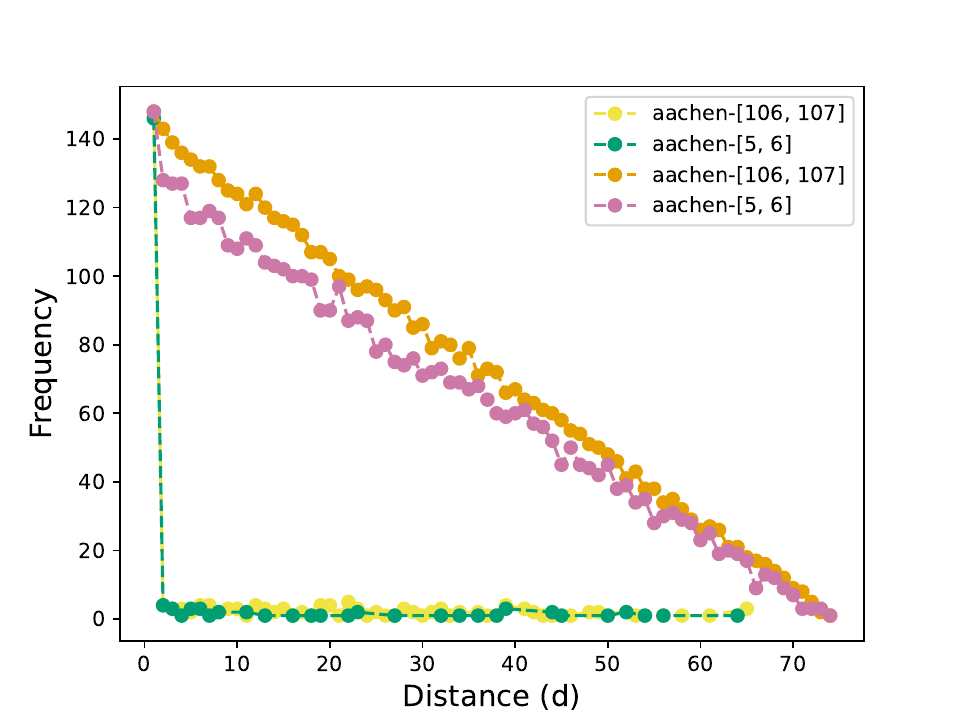}
    \caption{\justifying Length distribution of the MCMs read-out sequences that contribute to the statistic at $f = 2$ for \texttt{ibm-aachen} with/without twirling. $f=2$ is chosen since the number of classical read-out errors is constrained to one, if it ever occurs. This allows for a better isolation of the effect of read-out errors. The distribution of the twirled data set (orange and pink) displays an expected linear decay from binomial statistics. However, in the data set without twirling (yellow and green), $d=1$ dominates the distribution, suggesting the presence of classical read-out errors.}
    \label{fig:even-bias}
\end{figure}

In Figure~\ref{fig:non-Markovian-plot}, we illustrate two types of asymmetry that emerge in panels (a), (c), and (d), when CB is performed on MCMs only. The first type of asymmetry is the aforementioned zig-zag distribution that appears in panels (a) and (c). It is characterized by its bias toward the even flips. In Figure~\ref{fig:even-bias}, we simulate this effect by adding classical read-out errors to each of the MCMs. To understand the emergence of this zig-zag pattern, let us consider the flip statistical $f = 0$ given by a sequence of MCMs read-outs:
$$[0,0,0, ...,0].$$ Any read-out error on any one of the bits will simultaneously contribute to the transition of $0 \to 1$ and $1 \to 0$ in the middle of the read-out, such as:
$$[0,1,0, ...,0].$$
For $n$ events of independent read-out errors, post-processing will add an erroneous even contribution $2k$, upper bounded by $2n$, to the original read-out of $f = 0$. Therefore, some $f$ events are translated to $f+2k$. Since $f = 0$ and $f = 1$ effectively dominate the measurement statistics in the binomial distribution Eq.~(\ref{eqn: binom}) when $(1-\lambda_{P_p})/2$ is small, the zig-zag pattern emerges when part of $f = 0$ and $f = 1$ probabilities are translated to $f = 2k$ and $f = 2k+1$.

To verify that independent read-out errors contribute to the zig-zag pattern of the probability distribution, we further analyze the composition of read-out events that contribute to the probability statistics at $f$. In a read-out error dominated situation, the read-out sequence of MCMs will display disconnected values of $1$s due to statistical independence; while for a binomial distribution, the read-out sequence will display connected sequences of $1$s and $0$s due to the transition rules. By counting the length distributions of $1$s that make up the probability statistics at $f$, we pinpoint the effect that contributes to this zig-zag statistic. In Figure~\ref{fig:even-bias}, we illustrate the length composition of probability at $f=2$ of the binomial analysis for \texttt{ibm-aachen} in panels (a) and (b) of Figure~\ref{fig:non-Markovian-plot}. In contrast to the expected linear decay of length in the twirled case that generates the binomial distribution, the untwirled case exhibits an abrupt peak of the length distribution at $d=1$, verifying the presence of classical read-out error events.

A closer look at the MCMs benchmark in panels (c) and (d) of Figure~\ref{fig:non-Markovian-plot} reveals the second type of asymmetry that favors $\ket{1}$ in the MCM read-out over $\ket{0}$. To emphasize the bias, we intentionally colored the odd flips and even flips differently to contrast the different dynamics of the even and odd flip statistics. In panel (c), a transition of asymmetries takes place between $f = 5$ and $f = 6$, suggesting a shift of the dominating noise effect from classical read-out errors to a different type noise. The bias toward $\ket{1}$ strongly suggests the existence of a mechanism that induces excitations on $\ket{1}$ rather than $\ket{0}$. This is consistent with the current superconducting error phenomenology where the $\ket{1}$ state cross-talks with its neighboring qubits and leaks into the non-computational bosonic modes~\cite{Bengtsson2024willowmcm}. The bias persists in panel (d) despite implementing a single-qubit rotation $R_y(12^\circ)$ prior to each measurement, suggesting that this emerging pattern of bias cannot be explained away with a standard Markovian noise model in the computational state. In contrast, when we insert a CNOT gate, the time correlations remain, while the bias toward the $\ket{1}$ outcome is strongly suppressed. This points to a regime in which the non-Markovian correlation most likely does not originate from the Josephson junction alone, but rather is an effect of cross-talk with the neighboring qubits.

A key distinction between the first and second type of asymmetry is that the first type of asymmetry does not affect the expectation value computation. Taking Eq.~(\ref{eqn: binom}) as an example, to compute $\lambda_{I\otimes P_p}$, we perform the following summation:
\begin{equation}
    \lambda_{I\otimes P_p}=\sum_{f}(-1)^f\binom{N}{f}\left(\frac{1+\lambda_{I\otimes P_p}}{2}\right)^{N-f}\left(\frac{1-\lambda_{I\otimes P_p}}{2}\right)^f.
\end{equation}
but the effect of classical read-out errors $(-1)^{f+2k}$ is canceled under parity. However, the second asymmetry directly deviates from the binomial probability model. Unavoidably, this leads to a fundamental accuracy floor of the fidelity estimate.

\subsubsection{Non-Markovian Contributions}


We propose a leakage model that explains the abnormal exponential tail observed in the binomial analysis on the quantum gates of \texttt{ibm-aachen} and \texttt{ibm-pittsburgh}. Based on previous studies \cite{PhysRevLett.134.100601}, the emergence of exponential tails finds a coherent explanation with the read-out dynamics when a leakage event takes place. With superconducting qubits, the measurement output is a classical analog signal with two quadratures, so it is plotted in an IQ plane. The outcomes for $|0\rangle$ and $|1\rangle$ form two clouds, and then one chooses a decision boundary---a ``cut'' or threshold---to classify each shot as $0$ or $1$. Suppose that the leakage event induces an enlarged Markov process with probability $\lambda_d$ that the superconducting state remains in leakage, and $1-\lambda_d$ that it returns to the computational state. When the leakage event occurs, the IQ plane can register additional readout clusters between the $|0\rangle$ and $|1\rangle$ decision regions, which can cause substantial random bit-flips in the classification. Since the duration of the random bit-flip is proportional to the time when the quantum state remains in the leaked state, the length of the random bit-flip is proportional to the duration -- which occurs with probability proportional to $\lambda_d^f$:
\begin{equation}
    A^\prime\lambda_d^f(1-\lambda_{d}),
\end{equation}
with $A^\prime$ a normalization of the probability distribution.

When combining this empirical model with the signature binomial distribution, a new probability model emerges that describes the non-Markovianity:
\begin{align}
    p_{\mathrm{mod}}(f) &= (1-\epsilon)\binom{M}{f}\left(\frac{1-\lambda_{Z_p}}{2}\right)^f\left(\frac{1+\lambda_{Z_p}}{2}\right)^{M-f}\\&+ \epsilon A\lambda_d^f,
    \label{eqn: mod}
\end{align}
where $\epsilon$ represents the probability that a time-correlated event takes place, $A$ is a normalization, and $\lambda_d$ is the exponential decay rate. A linear sum between the binomial distribution and the deviation captures the fact that these two represent independent mechanisms of noise sources.

With this model, non-Markovian effects such as the decay rate and magnitude of leakage events become characterizable. The resulting fitting parameters are demonstrated in Table~\ref{tab:fitted_params}. The data show consistent relative increases of magnitude in the presence of the CNOT gate, but also an increase in the decay rate $\lambda_d$ with respect to the number of flips. These observations, combined with the asymmetries, suggest two types of non-Markovian events taking place depending on the presence of the CNOT gate. However, in both cases, the exponential behavior is preserved. The observation of a consistent non-Markovian behavior implies an accuracy floor for the measured Pauli fidelities. The benchmarks show that the deviations are relatively large and can lead to non-negligible errors in deep circuits. Potentially, these results imply new consistent error mitigation strategies that would make allowance for this effect and improve such computations. High-quality fidelity estimation is necessary for error mitigation and error correction. From the perspective of hardware optimization, the deviation from a binomial distribution is simple to measure and may be employed as a means to minimize non-Markovian effects under different settings. In Appendix~\ref{appendix: time correlation}, we further analyze time correlations between the flip statistics, to verify the existence of time correlation in the process of CB.

\section{Conclusion} 
In this paper, we present a framework for generalized CB using MCMs, addressing a fundamental limitation of standard CB: the inability to resolve Pauli fidelities that are coupled across Clifford-induced cycles in the Pauli transfer matrix. The framework operates at three levels of increasing generality, each corresponding to a different assumption about the quality of the MCM resource available on hardware.

At the first level, when MCMs realize high-fidelity $z$-basis projections with negligible post-projection quantum error, Theorem~\ref{theorem: exact} establishes that all previously unlearnable Pauli fidelities of both the MCM and the interleaved Clifford gate become exactly learnable. The key insight is that all classical feed-forward can be deferred entirely to post-processing via the QICF construction, requiring no real-time quantum control. The state preparation cost of breaking a given Clifford cycle is precisely quantified by Theorem~\ref{theorem: main result} in terms of the Hamming weight of the Pauli weight-pattern mismatch $\delta(P_\alpha, P_\beta)$, which simultaneously specifies the MCM insertion locations and measures the overhead relative to standard CB. Importantly, classical read-out assignment error does not enter the learnability condition --- it is a purely classical mislabelling that does not couple noise parameters across the projection, and can be characterized and divided out independently. When only $\ket{0}$ preparation is reliably noiseless, as is typical for superconducting platforms, Theorem~\ref{theorem: approx} shows that the unlearnable Pauli fidelities remain approximable for CNOT gate via a geometric-arithmetic mean decomposition enabled by post-selection on the $\ket{0}$ outcome.

At the second level, when the projection quality of the MCMs cannot be guaranteed to satisfy either learnability condition, the QICF framework remains useful through its compatibility with the consistent Pauli noise learning framework~\cite{chen2025disambiguatingpaulinoisequantum}. Because MCMs with classical feed-forward are operationally equivalent to Clifford gates at the level of Pauli noise propagation, they can be incorporated into the gate set tomography design matrix without modification. The gate set is thereby reduced to its minimal meaningful unit --- a Clifford gate paired with its associated MCM layer --- and the unified characterization ensures that residual unlearnable gauge ambiguity does not propagate into error mitigation, even on devices where projection quality is imperfect. This compatibility means that the generalized CB protocol is not restricted to idealized hardware conditions but extends to the full class of dynamic circuits available on current quantum processors.

At the third level, the introduction of MCMs into the CB protocol opens a direct diagnostic window onto noise dynamics that are invisible to final-measurement-only protocols. Under the Markovian separable USI model, the flip statistics of repeated MCM readouts follow a clean binomial distribution parameterized solely by $\lambda_{Z_p}$; any deviation constitutes direct evidence of temporal correlations beyond the Markovian assumption. Experimentally, this diagnostic reveals two distinct non-Markovian signatures on \texttt{ibm-aachen} and \texttt{ibm-pittsburgh}: a classical readout asymmetry favoring even flip counts, attributable to single-round readout errors and harmless to fidelity estimation, and a universal exponential tail at large flip counts, attributable to leakage-induced memory whose mean duration is directly captured by the decay rate $\lambda_d$ of the phenomenological model of Eq.~(\ref{eqn: mod}). The magnitude of this non-Markovian noise floor is non-negligible on current devices and sets a precision floor on any Pauli noise learning technique that rests on the Markovian assumption. In the best case --- the \texttt{ibm-pittsburgh} device --- the extracted unlearnable fidelity pair achieves within-CPTP-bound characterization, demonstrating that state-of-the-art superconducting MCMs already carry sufficient projection quality for physically consistent noise learning with partial SPAM immunity.

Overall, this work demonstrates that MCMs are not merely a passive diagnostic tool but an active resource for noise characterization: they resolve unlearnable gauge degrees of freedom in Clifford CB, enable consistent noise learning when projection quality is imperfect, and expose non-Markovian dynamics that standard CB cannot access. While the current benchmarks are limited to superconducting devices, the framework is hardware-agnostic. Extending it to trapped-ion, neutral-atom, and semiconductor platforms --- where the dominant non-Markovian mechanisms differ fundamentally from the leakage-dominated superconducting case --- is a natural next step, and one where the higher MCM fidelities available on trapped-ion hardware may enable exact noise learning under Theorem~\ref{theorem: exact} more reliably than current superconducting devices permit.

\bibliography{noise}

\appendix

\section{Mathematical Foundations of MCM Noise Learning}
\label{appendix: all proofs}

In this section, we establish key results in MCM noise learning for general $n$-qubit Clifford gates using Markovian noisy MCMs.

\subsection{Pauli Propagation of Stochastic CNOT}
\begingroup
\def\thetheorem{\ref{theorem: 1}}
\begin{theorem}[Pauli Propagation for Stochastic CNOT]
A stochastic CNOT, formed by tracing out the classical register of the QICF, replicates the Pauli transformation of a coherent CNOT gate on $Z$-type observables, up to a scalar prefactor from the classical readout error. Formally, tracing out the classical register gives:
\begin{equation}
    \text{CX}_{\text{sto}, p, q}(\cdot) = \sum_{r, k\in \mathbb{Z}_2}q_r\,\mathcal{CX}_{k,r; p,q}(\cdot).
\end{equation}
For $c \in \mathbb{Z}_2$, the stochastic CNOT preserves the Pauli transformation structure of a coherent CNOT in the $z$-basis:
\begin{align}
    \text{CX}_{\text{sto}, p, q}(Z_p^cZ_q) &= (1-2q_{1})\,\omega_{Z_p^{c+1}Z_q}\,Z_p^{c+1}Z_q,
    \\
    \text{CX}_{\text{sto}, p, q}(\mathcal{X}_p\mathcal{X}^c_q(\cdot)) &= \mathcal{X}_p\mathcal{X}^{c+1}_q\bigl(\text{CX}_{\text{sto}, p, q}(\cdot)\bigr),
\end{align}
where the classical readout error $(1-2q_1)$ factors out independently of the Pauli structure, and:
\begin{align}
    &\omega_{Z_p^{c+1}Z_q} = p_{00}\lambda_{00}+(-1)^cp_{10}\lambda_{10}
    -\bigl((-1)^{c}p_{01}\lambda_{01}+p_{11}\lambda_{11}\bigr), \nonumber\\
    &\Lambda_{ab}(Z^c_pZ_q) = \lambda_{ab}Z^c_pZ_q. \nonumber
\end{align}
\end{theorem}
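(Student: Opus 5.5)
The plan is a direct computation: expand $\text{CX}_{\text{sto},p,q}$ into its constituent terms, use the single-qubit USI structure of Eq.~(\ref{eqn: standard USI}) with $l=1$ (measured qubit $p$, unmeasured qubit $q$), and track how each piece acts on $Z_p^cZ_q$. By the definition in Eq.~(\ref{eqn: stochastic CNOT}),
\begin{equation}
\text{CX}_{\text{sto},p,q}(\cdot)=\sum_{r,k}q_r\,\mathcal{M}_k\bigl[\mathcal{X}_q^{k+r}(\cdot)\bigr].
\end{equation}
Here the feed-forward is written as preceding $\mathcal{M}_k$. Since it acts on the unmeasured qubit $q$, and $\mathcal{X}_q$ commutes with $\Lambda_{ab}$ up to the signs carried by Pauli channels on Pauli inputs, I will treat it as acting at the output whenever convenient.

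\textbf{Step 1: the $\mathcal{M}_k$ sandwich.} For a single term of $\mathcal{M}_k$, compute $\mathcal{X}_p^{b}\superproj{k}\mathcal{X}_p^{a}$ acting on $Z_p^c$.
\begin{itemize}
\item The pre-flip contributes $(-1)^{ac}Z_p^c$.
\item The projection gives $\dyad{k}Z_p^c\dyad{k}=(-1)^{kc}\dyad{k}$.
\item The post-flip maps $\dyad{k}$ to $\dyad{k\oplus b}$.
\end{itemize}
The Pauli channel $\Lambda_{ab}$ on qubit $q$ (and any Pauli part on $p$) contributes the eigenvalue $\lambda_{ab}$. Meanwhile $\mathcal{X}_q^{k+r}$ acting on $Z_q$ contributes $(-1)^{k+r}$.

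\textbf{Step 2: sum over $k$, then over $r$.} Write $\dyad{k\oplus b}=\tfrac12\bigl(I+(-1)^{k\oplus b}Z_p\bigr)$ and sum over $k$.
\begin{itemize}
\item The accumulated sign on the $Z_p$ component is $(-1)^{kc+k+k+b}=(-1)^{kc+b}$. For $c=0$ it survives the sum and yields $Z_pZ_q$.
\item The $I_p$ component carries $(-1)^{k(c+1)}$. It survives only for $c=1$ and yields $I_pZ_q$.
\item Together this is exactly the exponent shift $c\mapsto c+1$.
\end{itemize}
The residual signs are $(-1)^{ac}$ from the pre-flip and $(-1)^{b}$ from the post-flip. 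Together with the $\lambda_{ab}$ this gives the combination $p_{00}\lambda_{00}+(-1)^cp_{10}\lambda_{10}-\bigl((-1)^cp_{01}\lambda_{01}+p_{11}\lambda_{11}\bigr)$. The relative sign between the $p_{01}$ and $p_{11}$ terms must be fixed by careful bookkeeping, checking against Theorem~\ref{theorem: USI pauli propagation.}. Finally, the sum over $r$ gives $\sum_r q_r(-1)^r=q_0-q_1=1-2q_1$, which factors out.

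\textbf{Step 3: the bit-flip covariance.} I will show that conjugating the input by $\mathcal{X}_p$ commutes past $\superproj{k}$ by relabeling $k\to k\oplus1$. Under this relabeling the feed-forward $\mathcal{X}_q^{k+r}$ picks up one extra $\mathcal{X}_q$, which accounts for $c\mapsto c+1$. Since $\mathcal{X}_q^c$ commutes with everything acting on $q$ (Pauli channels commute with Pauli conjugations), the claimed identity follows. The same reindexing shows the post-flip structure is preserved, mirroring the proof of Eq.~(\ref{eqn: stochastic channel prop}).

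\textbf{Main obstacle.} I expect the main difficulty to be the sign bookkeeping in Step 2, in particular deciding whether the post-measurement flip $b$ enters as $(-1)^b$ uniformly or as $(-1)^{bc}$ combined with the feed-forward sign. I will first settle this by explicit evaluation for $c=0$ and $c=1$ on the four $(a,b)$ cases.
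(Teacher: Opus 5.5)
Your route is the same as the paper's. You expand $\text{CX}_{\text{sto},p,q}$ term by term, do the $k$-sum, pull out $\sum_r q_r(-1)^r=1-2q_1$, and get the bit-flip covariance by relabeling $k\to k\oplus 1$, which is the paper's Step 4. Your Step 1 and Step 3 are fine; Step 3 relies on $p_{ab}$ and $\Lambda_{ab}$ not depending on $k$, and on the classical register having been traced out.

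The problem is Step 2, which carries the whole content of the formula for $\omega$. The sentence saying the residual signs are $(-1)^{ac}$ from the pre-flip and $(-1)^b$ from the post-flip is false. A uniform sign $(-1)^{ac+b}$ would give $p_{00}\lambda_{00}+(-1)^cp_{10}\lambda_{10}-p_{01}\lambda_{01}-(-1)^cp_{11}\lambda_{11}$. That contradicts the claimed $\omega$ at $c=1$ unless $p_{01}\lambda_{01}=p_{11}\lambda_{11}$. As written, the proposal asserts the target combination without deriving it and leaves the decisive sign open.

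Your own bullets already contain the resolution. The factor $(-1)^b$ sits only on the $Z_p$ component, which survives only at $c=0$. The $I_p$ component, which survives at $c=1$, carries no $b$-dependence. To see this cleanly, combine the projection sign $(-1)^{kc}$ with the feed-forward sign $(-1)^k$ and substitute $j=k\oplus b$:
\[
\sum_{k}(-1)^{k(c+1)}|k\oplus b\rangle\langle k\oplus b| = (-1)^{b(c+1)}\sum_{j}(-1)^{j(c+1)}|j\rangle\langle j| = (-1)^{b(c+1)}Z_p^{c+1}.
\]
So the residual sign is $(-1)^{ac+b(c+1)}$. The pre-flip couples to the input exponent $c$, and the post-flip couples to the output exponent $c+1$. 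This is exactly the sign in the paper's Step 1, and the four $(a,b)$ cases then reproduce $\omega_{Z_p^{c+1}Z_q}$.

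Of the two options in your ``main obstacle'', the second is the correct one: the USI-type factor $(-1)^{bc}$ times an extra $(-1)^b$ that the feed-forward produces after reindexing. Be careful when checking against Theorem~\ref{theorem: USI pauli propagation.}. There the sign is $(-1)^{(a+b)c}$ because input and output exponents coincide. So $\omega$ is obtained from $\delta_{Z_p^cZ_q}$ by negating both $b=1$ terms, not by copying its $p_{01}/p_{11}$ sign pattern.
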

\addtocounter{theorem}{-1}
\endgroup

\begin{proof}
A stochastic CNOT channel is defined as a quantum instrument with classical feed-forward
on the neighboring qubit:
\begin{equation}
    \mathcal{CX}_{k, r;p, q}(\cdot) = \sum_{r, a, b \in \mathbb{Z}_2}q_rp_{ab}\Lambda_{ab}\mathcal{X}^{k+r}_q
    \otimes (\mathcal{X}^b\superproj{k}\mathcal{X}^a)_p(\cdot).
\end{equation}
Note that $\Lambda_{ab}(\cdot)$ is a general $(n-1)$-qubit channel with dependence on the measurement errors $a, b$ that occur at qubit $p$. We suppress the qubit indices $p, q$ when the context is clear. We define the
marginalized stochastic CNOT channel as:
\begin{equation}
    CX_{\mathrm{sto}}(\cdot) = \sum_{k,r\in\mathbb{Z}_2}\mathcal{CX}_{k,r}(\cdot),
\end{equation}
which averages over the classical readout $k$. We proceed to characterize the action of
$CX_{\mathrm{sto}}$ on the Pauli basis, before generalizing to arbitrary Pauli channels.

\textbf{Step 1: Action on $Z_p^cZ_q$.} Consider the action of $CX_{\mathrm{sto}}$
on $Z_p^cZ_q$:
\begin{align}
    CX_{\mathrm{sto}}(Z_p^cZ_q)
    &= \sum_{k,r, a, b \in \mathbb{Z}_2}q_rp_{ab}\left[\Lambda_{ab}\mathcal{X}_q^{k+r}\left[
    \mathcal{X}_p^b\superproj{k}\mathcal{X}_p^a(Z_p^cZ_q)\right]\right],\nonumber\\
    &= \sum_{r, a, b\in\mathbb{Z}_2}(-1)^{ac+b(c+1)}p_{ab}\left[\Lambda_{ab}q_r\mathcal{X}^r_q(Z_p^{c+1}Z_q)\right],
    \nonumber \\
    &= (1-2q_1)\omega_{Z_p^{c+1}Z_q}\, Z_p^{c+1}Z_q,
\end{align}
where the eigenvalue $\omega_{Z_p^{c+1}Z_q}$ is given by:
\begin{align}
    &\omega_{Z_p^{c+1}Z_q} = p_{00}\lambda_{00}+(-1)^cp_{10}\lambda_{10}-((-1)^{c}p_{01}\lambda_{01}+p_{11}\lambda_{11})\nonumber, \\
     &\Lambda_{ab}(Z^c_pZ_q)= \lambda_{ab}Z^c_pZ_q. \nonumber
\end{align}
The process is stochastic rather than coherent because $CX_{\mathrm{sto}}(PX_p) = 0$ for any Pauli $P$, meaning that quantum information in non-$z$-basis directions is destroyed
by the projective measurement.

\textbf{Step 2: Sequential application.} We now consider the behavior of
$CX_{\mathrm{sto}}$ under sequential composition. Statistical independence of successive
measurement outcomes, which is a consequence of Markovianity, implies:
\begin{equation}
    \sum_{k_{p, 1}, k_2\in \mathbb{Z}_2}\mathcal{CX}_{k_{p, 1}, p, q}
    \left(\mathcal{CX}_{k_2, p, q}(\cdot)\right)
    = CX_{\mathrm{sto}, p, q}\left(CX_{\mathrm{sto}, p, q}(\cdot)\right).
    \label{eqn: stats independence}
\end{equation}
When $CX_{\mathrm{sto}}$ is applied sequentially to the same pair of qubits for $2n$
cycles, it follows by induction that:
\begin{equation}
    CX_{\mathrm{sto}}^{2n}(Z_p^cZ_q) =
    (1-2q_1)^{2n}\left(\omega_{Z_p^cZ_q}\omega_{Z_p^{c+1}Z_q}\right)^n Z_p^cZ_q.
\end{equation}

\textbf{Step 3: Interleaving with a general quantum channel.} We now consider the case
where $CX_{\mathrm{sto}}$ is interleaved with a general quantum channel $\Lambda$. By
linearity of $\Lambda$:
\begin{align}
    \Lambda\left(\mathcal{CX}_{0}(\cdot)\right)+\Lambda\left(\mathcal{CX}_{1}(\cdot)\right)
    &= \Lambda\left(CX_{\mathrm{sto}}(\cdot)\right),\\
    \mathcal{CX}_{0}\left(\Lambda(\cdot)\right)+\mathcal{CX}_{1}\left(\Lambda(\cdot)\right)
    &= CX_{\mathrm{sto}}\left(\Lambda(\cdot)\right).
\end{align}
For Pauli operators $P \in \{I, X, Y, Z\}^{\otimes 2}$ not prepared in the $z$-basis,
the second identity vanishes due to the projective measurement. As a concrete example,
suppose that $\Lambda$ is a coherent CNOT gate satisfying $CX(Z_p^cZ_q) =
\gamma_{Z_p^{c+1}Z_q}Z_p^{c+1}Z_q$. Then, for $n$ interleaved cycles of
$CX_{\mathrm{sto}}$ and $CX$:
\begin{align}
    &\sum_{k_{p, 1},\ldots, k_n \in \mathbb{Z}_2}\mathcal{CX}_{k_{p, 1}}\left(CX\left(\cdots
    \mathcal{CX}_{k_n}\left(CX(Z_p^cZ_q)\right)\right)\right)\nonumber\\
    &= CX_{\mathrm{sto}}\left(CX\left(\cdots CX_{\mathrm{sto}}\left(CX(Z_p^cZ_q)
    \right)\right)\right)\nonumber\\
    &= (1-2q_1)^n\left(\omega_{Z_p^cZ_q}\gamma_{Z_p^{c+1}Z_q}\right)^n
    Z_p^cZ_q.
    \label{eqn: new coupling}
\end{align}

\textbf{Step 4: Action on a general Pauli channel.} Finally, we show that
$CX_{\mathrm{sto}}$ transforms a general two-qubit Pauli channel
$(X_q^NZ_q^n)\otimes(X_p^oZ_p^r)$ as follows:
\begin{align}
    &CX_{\mathrm{sto}}\circ \left(\mathcal{X}_q^{m}\mathcal{Z}_q^n\mathcal{X}_p^o\right)(\cdot)\nonumber \\
    &= \sum_{a, b \in \mathbb{Z}_2}p_{ab}(\Lambda_{ab}\mathcal{X}_q^{k}(\mathcal{X}_q^N\mathcal{Z}_q^n))
    \otimes (\mathcal{X}_p^b\ket{k}\bra{k}\mathcal{X}_p^a(\mathcal{X}_p^o\mathcal{Z}_p^r))(\cdot)\nonumber \\
    &= \sum_{a, b \in \mathbb{Z}_2}p_{ab}((\mathcal{X}_q^N\mathcal{Z}_q^n)\Lambda_{ab}\mathcal{X}_q^{k})
    \otimes (\mathcal{X}_p^b\ket{k}\bra{k+o}\mathcal{X}_p^a)(\cdot)\nonumber\\
    &= \sum_{a, b \in \mathbb{Z}_2}p_{ab}((\mathcal{X}_q^N\mathcal{Z}_q^n)\Lambda_{ab}\mathcal{X}_q^{k+o})
    \otimes (\mathcal{X}_p^b\ket{k+o}\bra{k}\mathcal{X}_p^a)(\cdot)\nonumber\\
    &= \left(\mathcal{X}_q^{m+o}\mathcal{Z}_q^n\mathcal{X}_p^o\right)\circ CX_{\mathrm{sto}}(\cdot).
    \label{eqn: Clifford transform}
\end{align}
This result shows that $CX_{\mathrm{sto}}$ preserves the transformation
$\mathcal{X}_q^N \mathcal{X}_p^o \rightarrow \mathcal{X}_q^{m+o}\mathcal{X}_p^o$ when acting on Pauli channels in the
$X$-direction, consistent with the coherent CNOT transformation. However, $Z$-string
information $(Z_p^r)$ is lost due to the projective measurement basis, and the
transformation of $Z$-type Paulis does not follow the coherent CNOT rule.

Combining Steps 1--4, we conclude that $CX_{\mathrm{sto}}$ partially preserves the
Clifford transformation structure of a coherent CNOT gate: it faithfully reproduces the
coherent CNOT action on $Z_p^cZ_q$ as a Pauli basis element and on
$\mathcal{X}_q\mathcal{X}_p^c$ as a Pauli channel, while the remaining transformation properties are
collapsed by the choice of projective measurement basis.

\textbf{Step 5: Generalization.} Since $\Lambda$ is a multi-qubit Pauli channel, adding Pauli operators that are not subject to the transformation of the stochastic CNOT does not alter the form of the results, namely:
\begin{align}
    &\text{CX}_{\mathrm{sto}, p, q}(Z_p^cZ_q\otimes P_{\gamma}) = (1-2q_{1})\omega_{Z_p^{c+1}Z_q\otimes P_{\gamma}}Z_p^{c+1}Z_q\otimes P_{\gamma},\\
    &\text{CX}_{\mathrm{sto}, p, q}(\mathcal{X}_p\mathcal{X}^c_q\otimes \mathcal{P}_{\gamma}(\cdot)) = \mathcal{X}_p\mathcal{X}^{c+1}_q\otimes \mathcal{P}_{\gamma}(\text{CX}_{\mathrm{sto}, p, q}(\cdot)).
\end{align}
where:
\begin{align}
    \omega_{Z_p^{c+1}Z_q\otimes P_{\gamma}} &= p_{00}\lambda_{00, P_{\gamma}}+(-1)^cp_{10}\lambda_{10, P_{\gamma}}\nonumber \\
    &-((-1)^{c}p_{01, }\lambda_{01, P_{\gamma}}+p_{11}\lambda_{11,P_{\gamma}})\nonumber, \\
     \Lambda_{ab}(Z^c_pZ_q\otimes P_\gamma) &= \lambda_{ab, P_{\gamma}}Z^c_pZ_q\otimes P_\gamma. \nonumber
\end{align}
\end{proof}
\label{subpendix: pauli propagation}
\begingroup
\def\thetheorem{\ref{theorem: USI pauli propagation.}}
\begin{theorem}[Pauli Propagation for USIs]
A uniform stochastic instrument (USI) measuring a single qubit $p$, when averaged over all measurement outcomes, acts as a Pauli channel on $Z$-type observables and commutes with bit-flip operations. Formally, define $\mathcal{M}(\cdot) = \sum_{k\in \mathbb{Z}_2}\mathcal{M}_{k}(\cdot)$ as the outcome-averaged USI. For $c\in \mathbb{Z}_2$:
\begin{align}
    \mathcal{M}(Z_p^cZ_q) &= \delta_{Z_p^cZ_q}\,Z_p^cZ_q,
    \\
    \mathcal{M}(\mathcal{X}_p^{c_1}\mathcal{X}_q^{c_2}(\cdot)) &= \mathcal{X}_p^{c_1}\mathcal{X}_q^{c_2}(\mathcal{M}(\cdot)),
\end{align}
where the fidelity coefficient is:
\begin{align}
    \delta_{Z_p^cZ_q} &= p_{00}\lambda_{00}+(-1)^cp_{10}\lambda_{10}+(-1)^{c}p_{01}\lambda_{01}+p_{11}\lambda_{11}, \nonumber\\
    \Lambda_{ab}(Z^c_pZ_q) &= \lambda_{ab}Z^c_pZ_q. \nonumber
\end{align}
\end{theorem}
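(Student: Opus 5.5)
The plan is to mirror Steps 1 and 4 of the proof of Theorem~\ref{theorem: 1}, with the classical feed-forward removed. I start from the explicit form of the outcome-averaged USI for a single measured qubit $p$, read off from Eq.~(\ref{eqn: standard USI}):
\begin{equation}
    \mathcal{M}(\cdot)=\sum_{k,a,b\in\mathbb{Z}_2}p_{ab}\left[\Lambda_{ab}\otimes\mathcal{X}_p^{b}\superproj{k}\mathcal{X}_p^{a}\right](\cdot).
\end{equation}
Because the classical register is traced out, the readout-error weights $q_r$ sum to one and disappear. This already explains why no factor $(1-2q_1)$ appears here, in contrast with Eq.~(\ref{eqn: ZZ propagation}).

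\textbf{Step 1: the $Z$-type eigenvalue relation.} I act on $Z_p^cZ_q$ and handle the measured qubit first. The map $\mathcal{X}_p^a$ sends $Z_p^c$ to $(-1)^{ac}Z_p^c$. Summing the projection over $k$ gives $\sum_k\dyad{k}Z_p^c\dyad{k}=Z_p^c$, since $Z^c$ is diagonal. The final $\mathcal{X}_p^b$ then contributes a further $(-1)^{bc}$. On the unmeasured qubits, $\Lambda_{ab}$ acts diagonally with eigenvalue $\lambda_{ab}$ by assumption. Collecting the four $(a,b)$ terms gives the overall factor $\sum_{ab}(-1)^{(a+b)c}p_{ab}\lambda_{ab}$. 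The relative sign between $p_{10}$ and $p_{01}$ needs checking at this point. For $c=0$ and $c=1$ the sign pattern is $(+,(-1)^c,(-1)^c,+)$, which matches $\delta_{Z_p^cZ_q}$ as stated. I would stress that, unlike the stochastic CNOT, there is no extra $Z_p$ from a feed-forward, so the operator is returned unchanged rather than shifted to $Z_p^{c+1}$.

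\textbf{Step 2: commutation with bit flips.} On qubit $q$ the claim is immediate: each $\Lambda_{ab}$ is a Pauli channel, and Pauli channels commute with the conjugation $\mathcal{X}_q$. On qubit $p$ I use the relation $\superproj{k}\circ\mathcal{X}_p=\mathcal{X}_p\circ\superproj{k+1}$. Relabelling $k\to k+1$ inside the sum over $k$ then shows that $\sum_k\superproj{k}$ commutes with $\mathcal{X}_p$. The maps $\mathcal{X}_p^a$ and $\mathcal{X}_p^b$ are themselves Pauli-$X$ conjugations, so they commute with $\mathcal{X}_p$ as well. Combining these gives Eq.~(\ref{eqn: stochastic channel prop}) for every $c_1,c_2$.

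The only genuinely delicate point is Step 2 on the measured qubit. There the commutation holds only after summing over outcomes, and it relies on the projector index being relabelled uniformly. For a single outcome $\mathcal{M}_k$, the flip instead maps $\mathcal{M}_k$ to $\mathcal{M}_{k+1}$. I would therefore state explicitly that the outcome average is what makes $\mathcal{M}$ commute with $\mathcal{X}_p$. The remaining work is bookkeeping of signs. As a side remark, the same computation with $\mathcal{Z}_p$ in place of $\mathcal{X}_p$ kills the off-diagonal part, which reproduces the stated loss $\mathcal{M}(X_pZ_q)=0$.
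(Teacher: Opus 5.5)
Your proposal is correct and follows the paper's route. The paper's proof simply says to repeat the stochastic-CNOT computation of Theorem~\ref{theorem: 1} (its Steps 1 and 4) with the feed-forward $\mathcal{X}_q^{k+r}$ removed, so that $(1-2q_1)$ disappears; you carry this out explicitly, with the sign factor $(-1)^{(a+b)c}$ on $Z_p^cZ_q$ and the relabelling $k\to k+1$ (valid because $p_{ab}$ and $\Lambda_{ab}$ do not depend on the outcome) giving commutation with $\mathcal{X}_p$, which is exactly the bookkeeping the paper leaves implicit.
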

\addtocounter{theorem}{-1}
\endgroup
\begin{proof}
    \textbf{Step 1: Follow-through.} From Eq.~(\ref{eqn: standard USI}) in the main text, by following the same steps of derivation in the proof of Theorem~\ref{theorem: 1} with the exception of removing $\mathcal{X}_q^{k+r}(\cdot)$ from the process, the result above follows immediately. In particular, $(1-2q_1)$ vanishes due to the absence of classical feed-forward. Employing the same $z$-basis linearity argument as in the previous proof, the quantum instrument acts exactly like a Pauli noise channel with $z$-basis input in the measurement.

    \textbf{Step 2: Generalization.} Similarly to the proof of Theorem~\ref{theorem: 1}, the generalized version of the theorem holds without changing the form of the expression:
    \begin{align}
        \mathcal{M}(Z_p^c\otimes P_\gamma) &=\delta_{Z_p^cZ_q}Z_p^c\otimes P_\gamma,\\
        \mathcal{M}(\mathcal{X}_p^{c_1}\otimes \mathcal{P}_\gamma(\cdot)) &=\mathcal{X}_p^{c_1}\otimes\mathcal{P}_\gamma(\mathcal{M}(\cdot)).
    \end{align}
    where:
    \begin{align}
        \delta_{Z_p^c\otimes P_{\gamma}} &= p_{00}\lambda_{00, P_{\gamma}}+(-1)^cp_{10}\lambda_{10,  P_{\gamma}}\nonumber \\
        &+(-1)^{c}p_{01}\lambda_{01,P_{\gamma}}+p_{11}\lambda_{11,P_{\gamma}}\nonumber,\\
        \Lambda_{ab}(Z^c_p\otimes P_{\gamma})&= \lambda_{ab, P_{\gamma}}Z^c_p\otimes P_{\gamma}. \nonumber
    \end{align}
\end{proof}

\textbf{Remarks:} The choice of measurement basis naturally induces a hierarchy between Pauli operators. In this example, the Pauli-$X$ operator and its transformation are promoted to a bit-flip channel, while the Pauli-$Z$ operator becomes the information basis for information bits. Intuitively, the above result states that in a fixed choice of basis, quantum probability behaves like classical probability.

\subsection{Learnable Degrees of Freedom}
\label{subpendix: learnable dof}
\begin{theorem}
    For cycles of quantum instruments, the classical read-out error is learnable SPAM-free if the pre-measurement and post-measurement errors are separable.
    \label{theorem: classical read-out error}
\end{theorem}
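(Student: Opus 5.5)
The plan is to compare two cycle-benchmarking experiments built from the same cycles of a single-qubit USI. The only difference between them is whether the deferred classical feed-forward of Lemma~\ref{lemma: generalized post-processing} is applied in post-processing. The classical read-out error should then appear as the ratio of the two decay rates, with no SPAM contribution.

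\textbf{Step 1: decay without feed-forward.} Under the separable model, Eqs.~(\ref{eqn: separable 1})--(\ref{eqn: separable 2}), the coefficients factorize as $p_{ab}\lambda_{ab}=(p_a\lambda_a)(p^\prime_b\lambda^\prime_b)$. Substituting into Theorem~\ref{theorem: USI pauli propagation.} gives
\[
\delta_{Z_p^cZ_q}=\bigl(p_0\lambda_0+(-1)^c p_1\lambda_1\bigr)\bigl(p^\prime_0\lambda^\prime_0+(-1)^c p^\prime_1\lambda^\prime_1\bigr)\equiv \mu_c\,\mu^\prime_c .
\]
So the measured qubit's contribution splits into a pre-measurement factor $\mu_c$ and a post-measurement factor $\mu^\prime_c$. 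Since the USI acts as a Pauli channel on $Z$-type inputs, $N$ cycles without feed-forward give
\[
f_{\mathrm{USI}}(N)=A(\mu_c\mu^\prime_c)^N+B .
\]
Fitting this in $N$ removes $A$ and $B$, which absorb the initial preparation and final measurement.

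\textbf{Step 2: decay with feed-forward.} Re-process the same shots with the feed-forward, turning each USI into the stochastic CNOT of Theorem~\ref{theorem: 1}. Under separability,
\[
\omega_{Z_p^{c+1}Z_q}=\bigl(p_0\lambda_0+(-1)^c p_1\lambda_1\bigr)\bigl(p^\prime_0\lambda^\prime_0-(-1)^c p^\prime_1\lambda^\prime_1\bigr)=\mu_c\,\mu^\prime_{c+1}.
\]
Each application also carries the factor $(1-2q_1)$, and the weight alternates between $c$ and $c+1$. Iterating as in Step 2 of the proof of Theorem~\ref{theorem: 1} gives, over $2N$ cycles,
\[
f_{\mathrm{QICF}}(2N)=A^\prime\bigl[(1-2q_1)^2\mu_c\mu^\prime_{c+1}\mu_{c+1}\mu^\prime_c\bigr]^N+B^\prime .
\]
The product in brackets equals $(1-2q_1)^2\delta_{Z_p^cZ_q}\delta_{Z_p^{c+1}Z_q}$, where both $\delta$'s are learnable SPAM-free from Step 1 by running it with $c=0$ and $c=1$. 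Hence $(1-2q_1)^2$, and therefore $q_1$ (with $q_1<1/2$), is obtained as a ratio of decay rates, exactly as in Eqs.~(\ref{eqn: CB USI})--(\ref{eqn: CB QICF}).

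\textbf{Main obstacle.} The delicate part is justifying the factorization across consecutive cycles. In a general USI, the post-measurement error of cycle $i$ and the pre-measurement error of cycle $i+1$ are correlated through the readout label. That correlation could mix with $q_r$, since the label flip and the quantum bit flip $\mathcal{X}^b$ both act on the same register. Separability is precisely what lets the post-measurement factor $\mu^\prime$ of one cycle be regrouped with the pre-measurement factor $\mu$ of the next, so that each group forms an independent Pauli channel.

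I would verify that both the product over $2N$ cycles and the statistical-independence identity~(\ref{eqn: stats independence}) hold after this regrouping. I would also check that the boundary terms (the first $\mu$ and the last $\mu^\prime$) only enter $A$ and $B$. Finally, I would argue that $q_r$ cannot be absorbed into any $\mu$. Without feed-forward, $q_r$ never enters, because the register is traced out. With feed-forward, it enters only multiplicatively as $(1-2q_1)$. This asymmetry between the two post-processings is what makes it identifiable.
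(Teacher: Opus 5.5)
Your proposal is correct and follows the paper's proof. Both factor $\delta$ and $\omega$ under separability so that $\omega_{IZ}\omega_{ZZ}=\delta_{IZ}\delta_{ZZ}$, then read off $(1-2q_1)^2$ as the ratio of the decay rates with and without deferred feed-forward. The cross-cycle regrouping you flag as the main obstacle is not needed here: the key relation is a scalar identity between per-cycle eigenvalues, which compose by Markovianity, and separability matters only because without it $\omega_{IZ}\omega_{ZZ}\neq\delta_{IZ}\delta_{ZZ}$ in general.
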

\begin{proof}
    Based on Theorems~\ref{theorem: USI pauli propagation.} and~\ref{theorem: 1}, the following can be learned via CB, isolating SPAM errors ($p$ and $q$ are dropped given the clear context):
    \begin{align}
    f_1 &= A(\omega_{Z^cZ}\omega_{Z^{c+1}Z})^n+B,\\ f_2 &= C\delta_{Z^cZ}^{2n}+D,
    \end{align}
    in one characterization circuit for $2n$ cycles. Note that in the proofs of those theorems, the structural forms of these Pauli fidelities still hold when we extend from two-qubit fidelities to $n$-qubit fidelities, provided that they do not interfere with the transformation.

    All learnable information on the Pauli fidelities of $IZ$ and $ZZ$ from both the quantum instruments with and without classical post-processing, in the $z$-basis, are listed as follows:
    \begin{align}
        \lambda_r\omega_{IZ}\omega_{ZZ} &= (1-2q_1)^2((p_{00}\lambda_{00}-p_{11}\lambda_{11})^2
        \label{eqn: constraint 1}
    \nonumber\\
    &-(p_{10}\lambda_{10}-p_{01}\lambda_{01})^2),\\
        (\delta_{IZ}+\delta_{ZZ})/2 &= p_{00}\lambda_{00}+p_{11}\lambda_{11},
        \label{eqn: constraint 2}\\
        (\delta_{IZ}-\delta_{ZZ})/2  &= p_{10}\lambda_{10}+p_{01}\lambda_{01}.
        \label{eqn: constraint 3}
    \end{align}
    We treat $p_{ab}\lambda_{ab}$ as one unknown for the reconstruction of the quantum instrument, as it suffices to reconstruct all of $p_{ab}\Lambda_{ab}$, where $p_{ab}$ is computable from the trace-preservation of $\Lambda_{ab}$. Effectively, we have three constraints for five unknowns, contributing to unlearnable coefficients in MCMs generalized CB. By treating the pre- and post-measurement errors as separable, we assume the following holds:
    \begin{equation}
        p_{ab} = p_a^1p_b^2, \quad \lambda_{ab} = \lambda_a^1\lambda_b^2.
    \end{equation}
    with the effective Pauli fidelity:
    \begin{align}
        \omega_{Z^{c+1}Z} &= (p^1_0\lambda ^1_{0}+(-1)^cp^1_1\lambda ^1_{1})(p^2_0\lambda ^2_{0}+(-1)^{c+1}p^2_1\lambda^2_{1}),\\
        \delta_{Z^cZ} &=(p^1_0\lambda ^1_{0}+(-1)^cp^1_1\lambda ^1_{1})(p^2_0\lambda ^2_{0}+(-1)^cp^2_1\lambda^2_{1}).
    \end{align}
    It immediately follows that:
    \begin{equation}
        (1-2q_1)\frac{\omega_{IZ}\omega_{ZZ}}{\delta_{IZ}\delta_{ZZ}} = (1-2q_1).
    \end{equation}
\end{proof}
\textbf{Remarks:} The separable case eliminates the classical read-out degrees of freedom, but it also reduces one constraint as the product of $\omega_{Z^cZ}$ is now the same as the product of $\delta_{Z^cZ}$. Since the degrees of freedom for a quantum instrument exceed those of the constraints by 2, approximation is required.

\begingroup
\def\thetheorem{\ref{theorem: exact}}
\begin{theorem}[Exact Noise Extraction Principle]
All Pauli fidelities of a single-qubit USI are exactly learnable from measurement data, provided that no quantum errors occur \emph{after} the measurement projection. Formally, given $\mathcal{M}(\cdot)$ that obeys Eq.~(\ref{eqn: stochastic Pauli channel})--(\ref{eqn: stochastic channel prop}), all Pauli fidelities are learnable if:
\begin{equation}
    p_{01} = p_{11} = 0.
\end{equation}
Classical readout assignment errors $q_{\vec{r}}$ do not affect learnability.
\end{theorem}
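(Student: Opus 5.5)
Under $p_{01}=p_{11}=0$ the single-qubit USI has only two surviving quantum branches, $(a,b)=(0,0)$ and $(1,0)$. The unknowns are therefore just $p_{00}\lambda_{00,P_\gamma}$, $p_{10}\lambda_{10,P_\gamma}$ for each spectator Pauli $P_\gamma$, together with the classical error $q_1$. I will show that the CB observables of the previous proof, Theorem~\ref{theorem: classical read-out error}, give a square, invertible system for these unknowns.

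Following that proof, I treat each product $x_a:=p_{a0}\lambda_{a0,P_\gamma}$ as a single unknown. The remaining step, recovering $p_{a0}$ from trace preservation of $\Lambda_{a0}$, is done exactly as there.

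The first step specializes the three constraints (\ref{eqn: constraint 1})--(\ref{eqn: constraint 3}). With $p_{01}=p_{11}=0$, the USI eigenvalues from Theorem~\ref{theorem: USI pauli propagation.} become $\delta_{Z_p^c\otimes P_\gamma}=x_0+(-1)^cx_1$. These are learned SPAM-free by ordinary CB on the USI alone, since the USI acts as a Pauli channel on $z$-type observables, so repeated application gives the decay $A\delta^d+B$. The sum and difference of $\delta_{I\otimes P_\gamma}$ and $\delta_{Z_p\otimes P_\gamma}$ then give $x_0$ and $x_1$ individually. Unlike the general case, there is no second pair $p_{01}\lambda_{01},p_{11}\lambda_{11}$ sharing those two equations.

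The second step recovers the classical error. The QICF data, by Theorem~\ref{theorem: 1}, give $(1-2q_1)^2\,\omega_{I\otimes P_\gamma}\omega_{Z_p\otimes P_\gamma}$. Under the hypothesis this equals $(1-2q_1)^2(x_0^2-x_1^2)=(1-2q_1)^2\delta_{I\otimes P_\gamma}\delta_{Z_p\otimes P_\gamma}$. Dividing by the already-learned product of the $\delta$'s isolates $(1-2q_1)^2$, and hence $q_1$, using $q_1<1/2$. This also shows the claim that classical readout error does not obstruct learnability: $q_1$ enters only as an overall factor that the non-feed-forward circuit does not carry. Because of the deferred feed-forward principle (Lemma~\ref{lemma: generalized post-processing}), both data sets come from one circuit family by post-processing alone. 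Ranging $P_\gamma$ over the unmeasured qubits' Paulis, and conjugating by single-qubit Cliffords and twirls so that any basis on the measured qubit can be brought to $z$ before projection, yields every Pauli fidelity of $\Lambda_{00}$ and $\Lambda_{10}$.

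The main obstacle is the status of the observable $X$/$Y$ components on the measured qubit. By Theorem~\ref{theorem: USI pauli propagation.}, $\mathcal{M}(X_p\cdots)=0$, so these fidelities are not independent parameters and are fixed by the model. I will argue this explicitly: under the USI form with $b=0$, the post-measurement state is an exact $z$-eigenstate, so the instrument is fully specified by $(x_0,x_1,q_1)$ per $P_\gamma$. The counting argument then closes with as many constraints as unknowns, and each step above is an explicit inversion.
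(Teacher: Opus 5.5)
Your proposal is correct and follows essentially the paper's route. The paper likewise notes that $p_{01}=p_{11}=0$ makes the instrument separable. It then uses the constraints (\ref{eqn: constraint 1})--(\ref{eqn: constraint 3}) of Theorem~\ref{theorem: classical read-out error} to isolate $q_1$ via $\omega_{IZ}\omega_{ZZ}=\delta_{IZ}\delta_{ZZ}$, and records $\omega_{Z^{c+1}Z}=\delta_{Z^cZ}$. Your explicit inversion for $x_0,x_1$ and the trace-preservation step simply spell out what the paper leaves implicit.

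The one step the paper adds is to use that identity term by term, rather than only in the product form you need. This lets it divide the instrument factor out of the interleaved decay (\ref{eqn: new coupling}) and isolate the gate's coupled fidelity $\gamma_{Z^{c+1}Z}$. That goes beyond the statement as worded, but it is what Theorem~\ref{theorem: main result} relies on.
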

\addtocounter{theorem}{-1}
\endgroup
\begin{proof}
    First, vanishing post-measurement bit-flip implies separable pre-measurement and post-measurement errors. Employing Theorem~\ref{theorem: classical read-out error}, the classical read-out error can be isolated. Meanwhile, the following also holds:
    \begin{equation}
        \omega_{Z^{c+1}Z} = \delta_{Z^cZ}.
    \end{equation}
From Eq.~(\ref{eqn: new coupling}), the pairs $\{\omega_{Z^cZ},\gamma_{Z^{c+1}Z}\}$ can be decoupled, isolating $\gamma_{Z^{c+1}Z}$, the unlearnable coefficients.
\end{proof}

\textbf{Remarks:} The condition of perfect state preparation, which matches properties in trapped ions or neutral atoms by projection, is only one possible condition for reducing the degrees of freedom appearing in the learnability constraints Eq.~(\ref{eqn: constraint 1}), (\ref{eqn: constraint 2}), and (\ref{eqn: constraint 3}). The real question boils down to which approximation is most suitable for modeling noise phenomena for approximate unlearnable noise extraction. For example, if quantum instruments have non-separable noise, one may instead wish to make the following approximation: $q_1 \approx0, p_{01}\approx0$. In this paper, we focus on learnability based on the quality of state preparation for the sake of practicality.

\subsection{Deferred Feed-Forward Principle for Generalized CB}
\label{subpendix: deferred feed-forward principle}

Building on the learnability results for Markovian MCM errors, we establish the second key result in this work --- that single-qubit Clifford gates, combined with MCMs, are sufficient to decouple all unlearnable degrees of freedom for arbitrary $n$-qubit Clifford gates.

\begingroup
\def\thelemma{\ref{lemma: generalized post-processing}}
\begin{lemma}[Deferred Feed-Forward Principle for CB]
In generalized CB interleaved with a Clifford gate and a QICF, no real-time quantum feed-forward is needed. All classical feed-forward operations that manipulate Pauli weights of the propagating Paulis can be deferred to classical post-processing without changing the measurement statistics.
\end{lemma}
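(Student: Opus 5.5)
The plan is to show that every classical feed-forward in a QICF acts as a conditional Pauli-$X$ on the unmeasured qubits. I will push each of these conditional bit-flips through the rest of the circuit until it reaches the final measurement. There it becomes a relabelling of the final outcomes, which can be carried out in post-processing. Throughout, I take all noise to be Markovian, so that each cycle is a fixed channel and the outcome sum factorizes as in Eq.~(\ref{eqn: stats independence}).

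\textbf{Step 1: the circuit and the building blocks.} I write the generalized CB circuit as $\mathcal{C}_N\circ\cdots\circ\mathcal{C}_1$. Each layer $\mathcal{C}_i$ is a twirled Clifford $\Lambda\circ\mathcal{G}$, followed by single-qubit Clifford basis changes, followed by a QICF. By Eq.~(\ref{eqn: QICF}), that QICF equals a USI $\mathcal{M}_{\vec{k}_i}$ preceded by the feed-forward $\mathcal{X}^{M(\vec{k}_i+\vec{r}_i)}$ on the unmeasured qubits. Together with the readout assignment error, the feed-forward is therefore a Pauli $X$-string determined entirely by the recorded classical bit string $\vec{e}_i=\vec{k}_i+\vec{r}_i$.

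\textbf{Step 2: commute the bit-flips to the end.} Each operation after the feed-forward is one of the following, and each lets the flip through as another Pauli.
\begin{itemize}
\item A Pauli channel $\Lambda$ commutes with conjugation by any Pauli.
\item A Clifford $\mathcal{G}$ maps $\mathcal{X}^{\vec{v}}$ to $\mathcal{G}(X^{\vec{v}})$, which is a Pauli $P_{\vec{v}}$ fixed by $\vec{v}$, up to a sign that is irrelevant under conjugation.
\item A later USI satisfies Eq.~(\ref{eqn: stochastic channel prop}) for $X$-type Pauli channels.
\end{itemize}
A $Z$-component that reaches a measured qubit becomes the identity there, by the relation $\mathcal{M}(\mathcal{Z}_p\,\cdot)=\mathcal{I}_p\,\mathcal{M}(\cdot)$ stated after Theorem~\ref{theorem: USI pauli propagation.}. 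An $X$-component that reaches a measured qubit just flips the later MCM outcome, which is again a classically known relabelling of that later bit.

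Doing this by induction on the layers, the whole stack of feed-forwards collapses to a single Pauli conjugation $\mathcal{P}_{F}$ applied just before the final measurement. Here $F$ is an affine function over $\mathbb{Z}_2$ of all the recorded bits. Because Pauli conjugations compose by XOR of their labels, the accumulated instruction is exactly the state of the ancillary bit $\prod_i X^{e_i}\ket{0}_0$ in the proof sketch. The order in which the flips are applied does not matter.

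\textbf{Step 3: convert to post-processing and compare statistics.} A Pauli conjugation placed just before a computational-basis (or rotated-basis) final measurement multiplies the estimator $(-1)^{\mathrm{pt}(P_\beta)\cdot\vec{y}}$ by the sign $\pm1$ that $P_F$ picks up against $P_\beta$. The correct statement is therefore the joint distribution identity
\begin{equation}
    p_{\mathrm{ff}}(\vec{y},\vec{e}_1,\ldots,\vec{e}_N)=p_{\mathrm{no\,ff}}(\vec{y}\oplus F(\vec{e}),\vec{e}_1,\ldots,\vec{e}_N),
\end{equation}
which I verify layer by layer using linearity of the outcome sum. Marginalizing this identity gives the expectation values in Eq.~(\ref{eqn: extended probability}), so every Pauli expectation used in CB, including those behind Theorem~\ref{theorem: 1}, is reproduced from the unconditioned circuit. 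The factor $(1-2q_1)$ appears automatically because $\vec{r}$ enters only through $\vec{e}$.

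\textbf{Main obstacle.} The main obstacle is Step 2 at measured qubits. There a propagated flip does not commute with the projection; it turns into a relabelling of later outcomes, and in the general USI the post-measurement bit-flips $p_{ab}$ do not commute with the feed-forward. I would handle this using the outcome-averaged USI together with the twirling (uniformity), which make the $\mathcal{X}^b$ and $\mathcal{X}^a$ events independent of the label. I would also need to check that deferring a later measurement's label shift keeps $F$ affine. If the general case resists, I would restrict to the stochastic-CNOT decomposition of Ctrl-$M$ and argue each factor separately.
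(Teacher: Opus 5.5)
Your argument is correct in substance, but it takes a different route from the paper.

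\textbf{How the two routes differ.} The paper never lets the feed-forward act on the system. It sends every conditional $X$ of $\text{Ctrl-}M_i$ (with $\vec M_i=[1,\ldots,1]$) to a single ideal ancillary qubit $s$, and writes this as stochastic CNOTs from each measured qubit onto $s$. It then uses Theorem~\ref{theorem: 1} on the extended observable $Z_s\otimes P$ to show that these toggle the Pauli weight at the measured sites. Finally it notes that $s$ never acts back on the system and only collects commuting $X$'s, so it can be replaced by a classical parity bit of the records. In that route the deferral is trivial and holds shot by shot for arbitrary system noise. The real content is that ancilla-targeted feed-forward already suffices for the weight manipulation used in Theorem~\ref{theorem: main result}.

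You instead keep the feed-forward on the unmeasured system qubits exactly as in Eq.~(\ref{eqn: QICF}), with arbitrary linear $M$, and defer it by Pauli-frame tracking. This covers the literal QICF. The cost is that you need, at every layer, what the paper's route avoids:
\begin{itemize}
\item Clifford layers throughout;
\item twirl-averaged Pauli noise, so your identity concerns randomization-averaged statistics rather than individual twirl instances;
\item outcome-independence of the instrument.
\end{itemize}
These are standing assumptions of the paper, so this is a loss of generality rather than an error. Your post-processing is also more involved, because frames pushed through $\mathcal{G}$ relabel later MCM records as well as the final outcomes.

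\textbf{Corrections.}
\begin{itemize}
\item Your displayed identity is false as written. A frame propagated through $\mathcal{G}$ generically acquires $X$-support on qubits measured in later cycles. For example, a feed-forward $X_1$ passed through a CNOT with control $1$ becomes $X_1X_2$ and flips the next $z$-basis MCM on qubit $2$. The right-hand side must therefore also replace each $\vec e_j$ by $\vec e_j\oplus G_j(\vec e_1,\ldots,\vec e_{j-1})$. This relabelling is triangular, hence invertible over $\mathbb{Z}_2$, and $F$ stays affine.
\item Your ``main obstacle'' is not a real one. The events $\mathcal{X}^{\vec a}$, $\mathcal{X}^{\vec b}$ and $\Lambda_{\vec a,\vec b}$ are all Pauli and commute with any frame. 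All you need is that $p_{\vec a,\vec b}$, $\Lambda_{\vec a,\vec b}$ and $q_{\vec r}$ do not depend on $\vec k$, which is exactly the USI definition. This gives $\mathcal{M}_k\circ\mathcal{X}_p=\mathcal{X}_p\circ\mathcal{M}_{k\oplus 1}$ and $\mathcal{M}_k\circ\mathcal{Z}_p=\mathcal{M}_k$ for the outcome-resolved instrument.
\item Do not pass to the outcome-averaged $\mathcal{M}$ at that step, since the post-processing needs the individual records.
\item Only the single bit recording whether $P_F$ anticommutes with $P_\beta$ enters the estimator. That bit, not the whole frame, is the analogue of the paper's ancilla.
\end{itemize}
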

\addtocounter{lemma}{-1}
\endgroup
\begin{proof}
    Consider the following $n$-qubit quantum circuit in the superoperator formalism:
    \begin{equation}
        \prod_{i=1}^N \text{Ctrl-}M_i\left[\Lambda_i\circ\mathcal{G}_i(\cdot)\right].
    \end{equation}
    consisting of an $n$-qubit noisy Clifford gate $\Lambda\circ \mathcal{G}_i(\cdot)$ at the $i$-th layer, and an $(n+1)$-qubit QICF $\text{Ctrl-}M_i$ defined in Eq.~(\ref{eqn: QICF with cerr}) that acts on an ancillary qubit named qubit $s$. The purpose of qubit $s$ is to receive all conditional Pauli-$X$ operations from the measurement read-outs on the $n$-qubit system. Specifically, each $M_i$ can be represented as an $l_i$-dimensional vector $\vec{M}_i$:
    \begin{equation}
        \vec{M}_i = [1,1,\ldots, 1].
    \end{equation}
    Each classical read-out per cycle, denoted $\vec{r}$, results in the conditional Pauli-$X$ operation on qubit $s$:
    \begin{equation}
        \mathcal{X}^{\vec{M}_i\cdot\vec{r}}_s(\cdot) = \prod_{c = 1}^{l_i} \mathcal{X}^{\vec{r}[c]}(\cdot). \label{eqn: X accumulation}
    \end{equation}
    In this expression, the conditional operation is decomposed into products of $\mathcal{X}^{\vec{r}[c]}$ operations. This step decouples measurement read-out operations into products of stochastic CNOT operations, where the transformation rule introduced in Theorem~\ref{theorem: 1} can be leveraged.

    With this in mind, let us consider an $l$-dimensional Pauli operator, as a subset of the $n$-qubit system, whose Pauli weight we wish to change:
    $$P_{i_1}^{v_1}\otimes P_{i_2}^{v_2}\otimes \cdots\otimes P_{i_l}^{v_l},$$
    where $i_k \in\{1,2, 3\}$ denotes the Pauli-$X$, $Y$, and $Z$ operators, and ${v}_k$ is the integer power of the Paulis. The ancillary qubit $s$ extends the Pauli operator to:
    $$Z_s\otimes P_{i_1}^{v_1}\otimes P_{i_2}^{v_2}\otimes \cdots\otimes P_{i_l}^{v_l},$$
    where the sequence of stochastic CNOTs acts between $Z_s$ and $P_{i_k}$ whose Pauli weight is altered. In Eq.~(\ref{eqn: ZZ propagation}), the measured qubit is where the change of Pauli weight occurs. With the appropriate choice of single-qubit Clifford gates, the Pauli weight can be manipulated statistically as follows:
    \begin{equation}
        Z_s\otimes P_{i_1}^{v_1+1}\otimes P_{i_2}^{v_2+1}\otimes \cdots\otimes P_{i_l}^{v_l+1}.
        \label{eqn: Pauli weight change}
    \end{equation}
    A QICF with single-qubit Clifford gates can induce arbitrary Pauli weight changes at the qubit locations where we perform the MCMs, and importantly all Pauli-$X$ operations are performed on qubit $s$. Since qubit $s$ does not interfere with the quantum circuit, we may replace qubit $s$ with a classical bit $s$. Since Eq.~(\ref{eqn: X accumulation}) consists of commuting Pauli-$X$ operations, we can implement all conditional Pauli-$X$ operations in post-processing, rather than as sequences of stochastic CNOTs, even in the setting of many cycles. This defers QICF to post-processing.
\end{proof}

\begingroup
\def\thetheorem{\ref{theorem: main result}}
\begin{theorem}[State Preparation Cost from Pauli Weight-Pattern Mismatch]
Let $\mathcal{G}$ be an $n$-qubit Clifford gate, and let $P_\alpha$, $P_\beta$
be a pair of Paulis satisfying $\mathcal{G}(P_\alpha) = P_\beta$.
Then the Pauli fidelity $\lambda_{P_\beta}$ is learnable provided that USIs
satisfying Theorem~\ref{theorem: exact} are inserted between successive applications
of $\mathcal{G}$ in the CB protocol at the qubit locations specified by
\begin{equation}
    \delta(P_\alpha, P_\beta).
\end{equation}
The number of MCMs required equals the Hamming weight $|\delta(P_\alpha, P_\beta)|$,
which also quantifies the overhead of generalized CB relative to standard CB.
\end{theorem}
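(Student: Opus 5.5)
The plan is to build, for each cycle, a reversal map $\mathcal{R}_{\delta(P_\alpha,P_\beta)}$ as in Eq.~(\ref{eqn: reversal map}) out of single-qubit Cliffords, one USI per nonzero entry of $\delta(P_\alpha,P_\beta)$, and the deferred classical feed-forward of Lemma~\ref{lemma: generalized post-processing}. The composite cycle $\mathcal{R}\circ\Lambda\circ\mathcal{G}$ then maps $P_\alpha$ back to itself with a scalar factor. CB on this cycle returns a decay constant that factorizes into $\lambda_{P_\beta}$ times MCM-only quantities, and those quantities are separately learnable by Theorem~\ref{theorem: exact}.

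\textbf{Step 1 (local classification).} Split the qubits into three classes using $\mathrm{pt}(P_\alpha)$ and $\mathrm{pt}(P_\beta)$.
\begin{itemize}
\item If both entries are $0$, there is nothing to do.
\item If both entries are $1$, the single-qubit Pauli of $P_\beta$ is mapped to that of $P_\alpha$ by a noiseless single-qubit Clifford. This is the standard CB case of Theorem~\ref{theorem: learnability}.
\item If the entries differ, the position lies in the support of $\delta$, and a weight change is needed there.
\end{itemize}

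\textbf{Step 2 (weight change with one MCM per site).} At each site $p$ in the support of $\delta$, rotate the local Pauli to $Z_p$ with a single-qubit Clifford. Then apply a USI on $p$ whose outcome feeds forward (as in Eq.~(\ref{eqn: QICF})) onto the classical ancilla $s$ introduced in Lemma~\ref{lemma: generalized post-processing}. By Theorem~\ref{theorem: 1} with the generalization in its Step~5, this stochastic CNOT toggles the presence of $Z_p$, i.e.\ $Z_p^c\to Z_p^{c+1}$ (cf.\ Eq.~(\ref{eqn: Pauli weight change})). It also attaches a scalar $(1-2q_p)\,\omega$ and leaves the rest of the string untouched. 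Sites where the weight must be removed are handled symmetrically, since $c$ ranges over $\mathbb{Z}_2$.

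For sites in the support of $\mathrm{pt}(P_\alpha)\wedge\mathrm{pt}(P_\beta)$, no measurement is needed. I will argue that fewer than $|\delta|$ MCMs cannot work: Theorem~\ref{theorem: learnability} together with Theorem~\ref{theorem: USI pauli propagation.} shows that single-qubit Cliffords, and non-fed-forward USIs, preserve $\mathrm{pt}$. Hence every bit flip of the weight vector costs one QICF. This gives the count $|\delta(P_\alpha,P_\beta)|$.

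\textbf{Step 3 (decay law and decoupling).} Compose $N$ cycles and use the Markovian independence of Eq.~(\ref{eqn: stats independence}), with the deferred feed-forward of Lemma~\ref{lemma: generalized post-processing} applied to the ancilla bit. This should give a decay of the form Eq.~(\ref{eqn: GCB}), with rate $\lambda_{P_\beta}\prod_{p}(1-2q_p)\,\delta_{P_\beta}$ and SPAM absorbed into $A,B$. Under $p_{01}=p_{11}=0$, Theorem~\ref{theorem: exact} gives $\omega=\delta$ and makes both $\delta_{P_\beta}$ and $q_p$ learnable from MCM-only CB via Eqs.~(\ref{eqn: CB USI})--(\ref{eqn: CB QICF}). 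Dividing these out isolates $\lambda_{P_\beta}$, following the argument around Eq.~(\ref{eqn: new coupling}).

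\textbf{Main obstacle.} The hardest step is Step~2 when several MCMs act in the same cycle. Theorem~\ref{theorem: 1} is stated for a single measured qubit, so I need the multi-qubit instrument to factor into a product of single-site stochastic CNOTs, each targeting $s$. I would first handle this under the independence assumption $q_{\vec r}=\prod q_{r[i]}$ together with the commuting-$X$ structure of Eq.~(\ref{eqn: X accumulation}), and then check that the product of the fidelity coefficients is still a learnable MCM quantity. I would also check that the $X$-type twirl components do not leak through; here I would use $\mathcal{M}(X_pZ_q)=0$, which kills them.
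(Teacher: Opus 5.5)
Your sufficiency argument is the paper's proof written out in more detail. The paper likewise builds $\mathcal{R}_{\delta(P_\alpha,P_\beta)}$ from Lemma~\ref{lemma: generalized post-processing} and single-qubit Cliffords at the support of $\delta(P_\alpha,P_\beta)$. It then divides out the MCM fidelity and $(1-2q_p)$ factors learned via Section~\ref{subsection: subroutine 1} under Theorem~\ref{theorem: exact}. The multi-MCM factorization you flag as the main obstacle is handled there as you propose: the conditional $X$ on the ancilla is split into commuting single-site stochastic CNOTs, Eq.~(\ref{eqn: X accumulation}).

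\textbf{The gap is in the necessity part of Step~2.} You claim that fewer than $|\delta(P_\alpha,P_\beta)|$ MCMs cannot work, and this does not follow. That single-qubit Cliffords and non-fed-forward USIs preserve $\mathrm{pt}$ only shows something narrower. A reversal completed within one gap, i.e.\ a map of the form $\mathcal{R}_{\delta(P_\alpha,P_\beta)}$, needs one fed-forward MCM per mismatched site.

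Learnability of $\lambda_{P_\beta}$ does not require returning to $\mathrm{pt}(P_\alpha)$ within a single gap. A CB period may contain several applications of $\mathcal{G}$, and the closed walk may pass through Paulis whose $\mathcal{G}$-fidelity is already known. Take SWAP with $P_\alpha=X_1$, $P_\beta=X_2$, so $|\delta|=2$, and use the following period:
\begin{itemize}
\item apply $\mathcal{G}$;
\item apply a fed-forward MCM on qubit~2 after rotating $X_2\to Z_2$, which removes the weight there;
\item apply $\mathcal{G}$ again, which acts on $I\otimes I$ with fidelity $1$;
\item apply a fed-forward MCM on qubit~1, which creates $Z_1$, then rotate it back to $X_1$.
\end{itemize}
This closes the loop $X_1\to X_2\to I\to I\to X_1$. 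It decays at rate $\lambda_{X_2}$ times MCM factors that are learnable under Theorem~\ref{theorem: exact}, yet uses only one MCM in each gap.

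The paper does not prove minimality either. Its proof only shows that the reversal placed at the support of $\delta(P_\alpha,P_\beta)$ costs $|\delta(P_\alpha,P_\beta)|$ MCMs. So either drop the lower-bound claim or state it explicitly as a statement about single-gap reversals; the rest of your argument is unaffected.
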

\addtocounter{theorem}{-1}
\endgroup

\begin{proof}
    \textbf{Pauli Weight Reversal:}
    Lemma~\ref{lemma: generalized post-processing} shows that a stochastic manipulation of the Pauli weight is possible with the insertion of MCMs at chosen locations in a Pauli basis of one's choice. Following from the machinery of the previous proof, a generalized CB satisfies the set-up
    $$\mathcal{G}_i = \mathcal{G},$$ and
    $$\text{Ctrl-}M_i = \text{Ctrl-}M.$$
    Suppose that we transform $P_\alpha\rightarrow P_\beta$ through $\mathcal{G}$. Consider the collection of Paulis of $P_\alpha$ where the Pauli weight changes: $P_{i_k}\rightarrow I$ and $I\rightarrow P_{i_j}$. The locations where Pauli weights are altered are given by the Pauli weight difference between $P_\alpha$ and $P_\beta$:
    $$\delta(P_\alpha, P_\beta).$$
    By Eq.~(\ref{eqn: Pauli weight change}), a Pauli weight reversal transformation can be implemented by measuring the selected set of Paulis. Finally, single-qubit Clifford gates are sufficient for reversing the Pauli-type transformation induced by $\mathcal{G}$. Therefore, it is sufficient to construct the reversal $\mathcal{R}_{\delta(P_\alpha, P_\beta)}: P_\beta\rightarrow P_\alpha$ with MCMs and single-qubit Clifford gates.

    \textbf{MCMs Noise Learning:} The next task is to show that MCMs noise parameters can be learned, so that the MCMs noise parameters and the Clifford gate noise can be decoupled. In Section~\ref{section: algorithm}, we gave the procedure for the extraction of the pre-measurement noise and classical read-out errors. Following through that procedure, and given the guarantee of state preparation quality in Theorem~\ref{theorem: exact}, MCMs noise becomes isolable, and therefore so does the noise of a general Clifford gate.
\end{proof}

\textbf{Remarks:} In quantum circuits consisting of Clifford gates and MCMs, the rules of classical feed-forward can be deferred to post-processing, through which Pauli weight is manipulated. Since measurement and reset are effectively a control-$X$ on the same qubit, it is possible to implement measurement and reset in post-processing. The correspondence between feed-forward and post-processing provides a guide to the implementation of dynamic circuits on quantum hardware.

\section{Algorithm 2: Error Suppression}
\label{appendix: scheme 2 proof}

\begingroup
\def\thetheorem{\ref{theorem: approx}}
\begin{theorem}
    With high-fidelity preparation of $\ket{0}$ alone, the Pauli fidelities of CNOT gate can still be approximated. Formally, given a USI that obeys Eq.~(\ref{eqn: stochastic Pauli channel})--(\ref{eqn: stochastic channel prop}) with $p_{01} \approx 0$, the Pauli fidelity pairs of CNOT gate are approximable via post-selection on the $\ket{0}$ outcome.
\end{theorem}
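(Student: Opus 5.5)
The argument combines two pieces of data: the product $\lambda_{i0}\lambda_{ij}$ that standard CB already gives, and a second, independent combination that post-selected MCMs give. From these two constraints we solve for the unlearnable pair $\{\lambda_{i0},\lambda_{ij}\}$ of the CNOT.

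\textbf{Step 1: the geometric mean.} Standard CB of the CNOT, Eq.~(\ref{eqn: dilemma}), yields
\[
\lambda_{\text{gm}}^2=\lambda_{i0}\lambda_{ij}
\]
free of SPAM errors. No MCMs are needed here, so this constraint is exact.

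\textbf{Step 2: the outcome-$0$ instrument.} Insert, after each CNOT, a basis rotation followed by an MCM on the second qubit. Retain only the $\mathcal{M}_0$ branch of Eq.~(\ref{eqn: standard USI}), i.e.\ post-select on the readout $0$. With $p_{01}\approx 0$, the post-projection state is essentially $\ket{0}$ up to the pre-measurement Pauli channel $\Lambda_{00}$ (and $\Lambda_{10}$). So $\mathcal{M}_0$ acts as a projection $\dyad{0}$, which maps both $I$ and $Z$ on the measured qubit to $(I+Z)/2$.

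Propagate the coupled Pauli through one CNOT followed by this instrument, using Theorem~\ref{theorem: USI pauli propagation.}. The superposition $\tfrac12(\lambda_{i0}P_{i0}+\lambda_{ij}P_{ij})$ collapses onto $\tfrac12(\lambda_{i0}+\lambda_{ij})(P_{i0}+P_{ij})$. Measuring $P_i$ on the first qubit after $d$ cycles then gives
\[
f_{\text{approx.}}(d)=A\,\lambda_{\text{am}}^d+B,\qquad \lambda_{\text{am}}=\tfrac12(\lambda_{i0}+\lambda_{ij})\,\kappa ,
\]
where $\kappa$ collects the MCM pre-measurement fidelity and the normalization from the post-selection probability. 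Two points need care:
\begin{itemize}
\item The post-selection probability must be absorbed into $A$, $B$ and $\kappa$. Because we condition each cycle on outcome $0$, the normalization per cycle should be a $d$-independent factor, provided Markovianity as in Eq.~(\ref{eqn: stats independence}) holds.
\item $\kappa$ must be known. It is calibrated separately by MCM-only CB of the type in Eq.~(\ref{eqn: CB USI}), restricted to the $0$ branch. This is the same procedure as in Subsection~\ref{subsection: subroutine 1}.
\end{itemize}

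\textbf{Step 3: inversion.} Once $\kappa$ is divided out, $\lambda_{i0}$ and $\lambda_{ij}$ are the two roots of
\[
x^2-2\lambda_{\text{am}}x+\lambda_{\text{gm}}^2=0 .
\]
The assignment of the two roots is fixed by the known Pauli weights. When $\lambda_{i0}\approx\lambda_{ij}$ the discriminant is nearly zero and the roots are ill-conditioned. We handle this with the regulator: a single-qubit $y$-rotation applied before the MCM rescales $\lambda_{ij}\to\lambda_{ij}\lambda_{\text{reg}}$ and leaves $\lambda_{i0}$ unchanged. We then solve Eq.~(\ref{eqn: arithmetic and geometric mean}) with a nonzero gap.

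\textbf{Step 4: error control (main obstacle).} We bound the error introduced by $p_{01}\neq 0$ and by $\ket{0}$-preparation infidelity. Write $\mathcal{M}_0=\mathcal{M}_0^{\text{ideal}}+E$ with $\|E\|=O(p_{01}+F(\ket{0}))$. A first-order expansion then shows
\[
\lambda_{\text{am}}=\tfrac12(\lambda_{i0}+\lambda_{ij})\kappa+O(p_{01}) .
\]
Propagating this through the root formula gives an error of order $p_{01}/|\lambda_{i0}-\lambda_{ij}\lambda_{\text{reg}}|$.

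This step is where we expect the difficulty. The bound is only useful if the regulator-induced gap dominates the perturbation, which explains why the result is stated as an approximation rather than an exact identity. We would first try to state this as an explicit first-order bound.
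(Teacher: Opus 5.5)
\textbf{Genuine gap: the MCM-noise removal step.} Your architecture matches the paper's: the geometric mean from standard CB, a regulated arithmetic mean from CB with outcome-$0$ post-selected MCMs, an MCM-only post-selected reference run, and inversion of $x^2-2\lambda_{\text{am}}x+\lambda_{\text{gm}}^2=0$ with a regulator to open the gap. The problem is Step 2. It rests on the claim that MCM noise and post-selection normalization combine into one multiplicative factor $\kappa$, which MCM-only CB can calibrate. In the paper's model this is false.

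The two coupled components pick up different MCM fidelities, $\gamma_1(\theta)=\lambda_{i0}\lambda^{10}_{\text{pre}}\lambda_{\text{post}}+\lambda_{\text{Reg}}(\theta)\lambda_{ij}\lambda^{11}_{\text{pre}}$. The MCM-only run measures a different combination, $\gamma_0(\theta)=\lambda^{10}_{\text{pre}}+\lambda_{\text{Reg}}(\theta)\lambda^{11}_{\text{pre}}\lambda_{\text{post}}$. So there is no common $\kappa$ to divide out.

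The normalization is worse. The per-cycle success probability of the CNOT run is $p_1\propto 1+\lambda_{\text{Reg}}(\theta)\lambda_{0k}\lambda^{01}_{\text{pre}}\lambda_{\text{post}}$, which contains the CNOT fidelity $\lambda_{0k}$. The reference run has $p_0\propto 1+\lambda_{\text{Reg}}(\theta)\lambda^{01}_{\text{pre}}\lambda_{\text{post}}$ and cannot see $\lambda_{0k}$. Your plan transfers the normalization from MCM-only data. That leaves a bias of order $1-\lambda_{0k}$ in $\lambda_{\text{am}}$, even for ideal MCMs. Each run's normalization has to come from its own success rate, which is why the paper tracks $p_1$ and $p_0$ separately.

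\textbf{Missing idea: first-order cancellation in the ratio.} First recover $\gamma_1$ and $\gamma_0$ from the two post-selected runs, each using its own success probability; the readout factor $(1-\epsilon_0)$ then cancels exactly. Then form $\gamma_1(\theta)/\gamma_0(\theta)$, rescaled by the known noiseless value of the reference run. Assume all MCM fidelities are $1-O(\delta)$. The estimate $\frac{1+m(1-\delta)}{1+n(1-\delta)}=\frac{1+m}{1+n}+O(\delta^2)$ for $m-n=O(\delta)$ then shows the rescaled ratio equals the regulated arithmetic mean up to $O(\delta^2)$. Pre-measurement errors cancel at first order, and the post-measurement error enters only weakly.

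This second-order accuracy is what makes the approximation meaningful. The inversion amplifies any error in $\lambda_{\text{am}}$ by roughly $2/|\lambda_{i0}-\lambda_{\text{Reg}}\lambda_{ij}|$, which is large at the small regulator angles used. An $O(\delta)$ error in $\lambda_{\text{am}}$ is all a generic first-order expansion like your Step 4 gives. After amplification, it becomes comparable to or larger than the splitting $\lambda_{i0}-\lambda_{ij}$ you are trying to resolve.

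Your residual is also misattributed. Take $p_{01}=0$ but unequal pre-measurement fidelities on the two components: the ratio still deviates at second order. The controlling parameters are the overall MCM infidelity scale $\delta$ and the closeness of the compared fidelities, not $p_{01}$ alone.
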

\addtocounter{theorem}{-1}

\begin{proof}
We start with a noisy MCMs system with an unusable $\ket{1}$ state, classical read-out assignment error rate $\epsilon_0$, and the remaining errors from MCMs defined as: $\lambda_{\text{post}}$ for the error fidelity of $\ket{0}$ state preparation, and $\lambda_{\text{pre}}^{01}, \lambda_{\text{pre}}^{10}, \lambda_{\text{pre}}^{11}$ for the noise coefficients from the pre-measurement two-qubit error channel of the Pauli operators $P_iI, IZ, P_iZ$ respectively. The classical read-out error corresponding to $0\rightarrow 1$ events is denoted $\epsilon_0$. Meanwhile, a control parameter:
\begin{equation}
    \lambda_{\text{Reg}}(\theta) = \cos(\theta),
\end{equation}
is injected into the circuit prior to each MCM to control the read-out fidelity. For the noise learning of the CNOT gate, we consider the circuit set-up in Figure~\ref{fig: Mid-circuit benchmarking} in the main text, and derive their corresponding learned coefficients accordingly. When we are guaranteed only the $\ket{0}$ state, we prefer to post-select the quantum statistics with sequential MCMs read-outs of $\ket{0}$. This corresponds to the sequential operation of the $m = l = 1$ quantum instrument $\mathcal{M}_0(\cdot)$ on the two-qubit Pauli operator. Define the pair of coupling Pauli operators as $\{\lambda_{i0}, \lambda_{ij}\}$, and the learnable coefficient as $\lambda_{0j}$. With alternating CNOT and noiseless MCMs corresponding to the third circuit of Figure~\ref{fig:experimental circuits}, the unnormalized density matrix of the unmeasured qubit upon the post-selected CB is (Appendix~\ref{appendix: Read-out error derivation}):
\begin{equation}
    \rho_{\text{noiseless}}=\frac{(1+\lambda_{0k})^NI+(\lambda_{i0}+\lambda_{\mathrm{Reg}}(\theta)\lambda_{ij})^NP_i}{2^N}.
\end{equation}
Meanwhile, if instead a noisy quantum instrument $\mathcal{M}_0$ is applied to the quantum state, the terms such as $\lambda_{i0}+\lambda_{\mathrm{Reg}}(\theta)\lambda_{ij}$ receive additional noise contributions:
$$\gamma_1(\theta) =\lambda_{i0}\lambda_{\text{pre}}^{10}\lambda_{\text{post}}+\lambda_{\text{Reg}}(\theta)\lambda_{ij}\lambda_{\text{pre}}^{11}.$$
If a noisy CNOT gate and $\mathcal{M}_0$ are applied alternately, it results in the following density matrix at the end of the quantum circuit:
\begin{equation}
    \rho_{\text{noisy}} = \left(I+\left(\frac{\gamma_1(\theta)}{p_1/(1-\epsilon_0)}\right)^N P_i\right)/2,
    \label{eqn: noisy M0}
\end{equation}
with probability
\begin{equation}
    p_1^N= \left(\frac{1+\lambda_{\text{Reg}}(\theta)\lambda_{0k}\lambda_{\text{pre}}^{01}\lambda_{\text{post}}}{2}\right)^N(1-\epsilon_0)^N.
\end{equation}
The additional $(1-\epsilon_0)^N$ term emerges due to the fact that some $\ket{0}$ states are lost from classical bit-flip errors.

Next, we consider the benchmarking of purely noisy MCMs, the second circuit in Figure~\ref{fig:experimental circuits}. Similarly to the previous argument, the expectation value of $P_i$ is expected to be $\frac{1+\lambda_{\text{Reg}}(\theta)}{2}$ given perfect MCMs, but $\mathcal{M}_0$ contributes additional noise terms to the final measurements:
$$\gamma_0(\theta) = \lambda^{10}_{\text{pre}}+\lambda_{\text{Reg}}(\theta)\lambda_{\text{pre}}^{11}\lambda_{\text{post}}.$$
A sequential operation of noisy MCMs thus results in the following effective density matrix:
\begin{equation}
    \rho_{\text{mcm}} = \left(I+\left(\frac{\gamma_0(\theta)}{p_0/(1-\epsilon_0)}\right)^NP_i\right)/2,
    \label{eqn: MCMs M0}
\end{equation}
with probability
\begin{equation}
    p_0^N = \left(\frac{1+\lambda_{\text{Reg}}(\theta)\lambda_{\text{pre}}^{01}\lambda_{\text{post}}}{2}\right)^N(1-\epsilon_0)^N.
\end{equation}
At first glance, Eq.~(\ref{eqn: noisy M0}) and (\ref{eqn: MCMs M0}) appear to be irrelevant to the learning of the arithmetic mean $\lambda_{\text{am}}(\theta) = \frac{\lambda_{i0}+\lambda_{\text{Reg}}(\theta)\lambda_{ij}}{2}$. However, there is a remarkable approximation through which $\lambda_{\text{am}}$ can be estimated to high precision. Given $m-n = \mathcal{O}(\delta)$, the following approximation holds:
\begin{equation}
    \frac{1+m(1-\delta)}{1+n(1-\delta)} = \frac{1+m}{1+n}+\mathcal{O}(\delta^2),
    \label{eqn: suppression}
\end{equation}
which leads to the approximation by combining Eq.~(\ref{eqn: noisy M0}) and Eq.~(\ref{eqn: MCMs M0}):
\begin{equation}
\frac{\gamma_1(\theta)(1+\lambda_{\text{Reg}}(\theta))}{\gamma_0(\theta)}=\lambda_{\text{am}}(\theta)+\mathcal{O}(\delta^2).
\end{equation}
Here we assume that all MCMs error fidelities are of the same order of magnitude $1-\mathcal{O}(\delta)$. This leads to first-order error cancellation of all pre-measurement errors $\lambda_{\text{pre}}^{01}, \lambda_{\text{pre}}^{10}, \lambda_{\text{pre}}^{11}$ and robustness to the post-measurement error $\lambda_{\text{post}}$. The classical read-out error $\epsilon_0$ is also canceled during the division. Combined with the geometric mean $\lambda_{\text{gm}}(\theta) = \sqrt{\lambda_{i0}\lambda_{ij}\lambda_{\text{Reg}}(\theta)}$, we can estimate the unlearnable coefficients:
\begin{align}
    \lambda_{i0} &= \lambda_{\text{am}}(\theta)+\sqrt{\lambda_{\text{am}}(\theta)^2-\lambda_{\text{gm}}(\theta)^2},\\
    \lambda_{ij} &= \left(\lambda_{\text{am}}(\theta)-\sqrt{\lambda_{\text{am}}(\theta)^2-\lambda_{\text{gm}}(\theta)^2}\right)/\lambda_{\text{Reg}}(\theta).
\end{align}
The concept of a regulator (angle $\theta$) is central to this characterization. Without it, the difference between $\lambda_{\text{am}}(0)$ and $\lambda_{\text{gm}}(0)$ is too small to be distinguishable. Using the scaling difference, the term $\lambda_{\text{am}}(\theta)^2-\lambda_{\text{gm}}(\theta)^2$ is magnified as the regulator angle $\theta$ increases. The resolution lower bound of our approximation scheme is shown in Figure~\ref{fig:regulator}. Since $\lambda_{\text{post}}$ has a strong impact on the fidelity difference crossover, an estimate of the state preparation fidelity allows one to prepare the ideal $\lambda_{\text{Reg}}(\theta)$ value for high-precision unlearnable noise learning.
\end{proof}

\begin{figure}
    \centering
    \includegraphics[width=\linewidth]{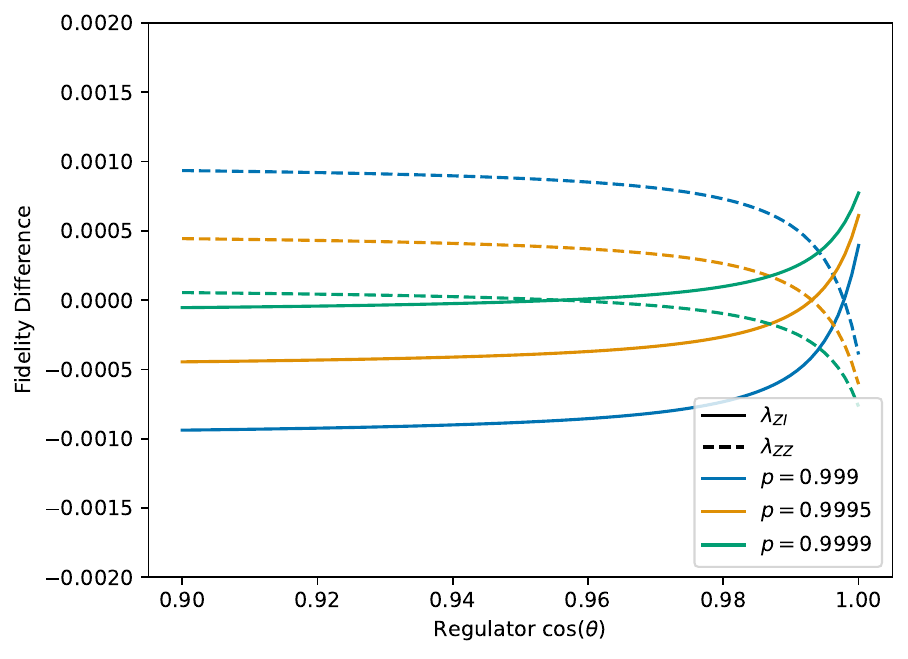}
    \caption{\justifying Fidelity difference between the approximation calculated analytically and the exact solution of Pauli fidelities. The unlearnable Pauli fidelities are set as $\lambda_{ZI} = 0.995, \lambda_{ZZ} = 0.991$. The fidelity difference decreases as the state preparation fidelity $p = \lambda_{\rm post}$ of the noisy MCM improves.}
    \label{fig:regulator}
\end{figure}

\section{Binomial Analysis and Classical Read-out Error from MCMs}
\label{appendix: Read-out error derivation}

We algebraically derive unlearnable noise extraction using perfect MCMs, without directly referring to the stochastic CNOT gate in Eq.~(\ref{eqn: stochastic CNOT}). We then further expand the derivation to the accumulation of classical read-out error:
\begin{align}
    \lambda_{r} &= (1-2q_1),\\
    q_1 &= \frac{\epsilon_0+\epsilon_1}{2},
\end{align}
for $\epsilon_{0/1}$ the read-out assignment error rate of $\ket{0/1}$.

\subsection{Binomial Distribution}

Consider a more restricted form of USI --- a separable $l = 1$ USI, acting on qubit $p$:
\begin{equation}
\mathcal{N}_{k}(\cdot) = \sum_{a,b \in \mathbb{Z}_2}p_{a}p_{ b}^\prime\left[\Lambda_{a}\Lambda_{b}^\prime\otimes \mathcal{X}^{b}_p\superproj{k}_p\mathcal{X}^{a}_p\mathcal{K}\right](\cdot).
\end{equation}
We adopt this representation to isolate a single-qubit measurement event from a parallel operation of MCMs governed by Theorem~\ref{theorem: main result}. In this definition, the events $a$ and $b$ are separable. The notation $\mathcal{K}$ denotes the set-up consisting of the Clifford gate $\mathcal{G}$ and the reversal channel $\mathcal{R}_{\delta(P_\alpha, P_\beta)}$ (see Section~\ref{subpendix: deferred feed-forward principle}), except that we leave one of the stochastic CNOTs unprocessed. We implicitly assume that $\mathcal{G}$ is prepared in the $z$-basis at qubit $p$ through single-qubit Clifford gates.

To analyze Markovianity, we sequentially apply $\mathcal{N}_{k_p}$ to a quantum circuit and use $P_\alpha$ as a probe to detect time correlations:
\begin{equation}
    \sum_{k_{p, 1},\cdots,k_{p,N}\in \mathbb{Z}_2} \prod_{i=1}^N\mathcal{N}_{k_{p,i}}(P_\alpha),
    \label{eqn: binomial inital Pauli}
\end{equation}
where $P_\alpha$ is implicitly rotated to the $z$-basis at qubit $p$. Due to the sequential operation, if we account for the finite systematic error that is accounted for by the initial and final SPAM, we may effectively move the previous post-measurement error $p_b\Lambda_b$ of $\mathcal{N}_{k_{i-1}}$ to the next channel $\mathcal{N}_{k_{i}}$, resulting in the new definition:

\begin{align}
    \mathcal{N}_{k}^\prime(\cdot) &= \superproj{k}_p\sum_{a,b \in \mathbb{Z}_2} \left[(p_{a}\Lambda_{a}\otimes \mathcal{X}^{a}_p)\mathcal{K}(p^\prime_b\Lambda^\prime_b\otimes\mathcal{X}^{b}_p)\right](\cdot),\\
    \mathcal{N}_{k}^\prime(\cdot) &= \superproj{k}_p\mathcal{L}(\cdot).
\end{align}
That is, the definition of a separable USI allows us to rearrange the channel into a combination of a noisy Clifford gate $\mathcal{K}$ contaminated with Pauli channels $p_a\Lambda_a\otimes\mathcal{X}^{a}_p$ and $p^\prime_b\Lambda^\prime_b\otimes\mathcal{X}^{b}_p$, and a perfect measurement $\superproj{k}_p$. In this form, we may treat $\mathcal{L}$ as a whole Markovian quantum channel, whose Markovian properties may now be studied with respect to the projection $\superproj{k}_p$.

In Eq.~(\ref{eqn: binomial inital Pauli}), $P_\beta$ is the initial Pauli that propagates through the CB of $\mathcal{N}_{k_i}$. When all MCMs are processed as QICF, we expect both the Clifford gate and the reversal to yield $\mathcal{R}\circ\mathcal{G}(P_\beta) = \lambda_{P_\beta}P_\beta$. However, we have left one MCM unprocessed at qubit $p$. As mentioned in Section~\ref{subsection: binomial analysis}, this results in the pair of Paulis $\{P_\gamma\otimes I_p, P_\gamma\otimes Z_p\}$ that are coupled to each other. As we will show below, the existence of such a pair leads to a binomial distribution. Consider the following sum of operators that will propagate through the Markovian channel:
\begin{equation}
    J_{k_{p, i}} = P_\gamma\otimes \frac{(1+(-1)^{k_{p, i}}Z)_p}{2}.
\end{equation}

Provided that:
\begin{equation}
    \mathcal{L}(I\otimes Z_p)= I\otimes Z_p,
    \label{eqn: control gate restriction}
\end{equation}
the Pauli propagation rule of $\mathcal{L}$ has the following property:
\begin{align}
    \mathcal{L}(P_\gamma\otimes I_p) &= \lambda_{P_\gamma\otimes Z_p}P_\gamma\otimes Z_p,\\
    \mathcal{L}(P_\gamma\otimes Z_p) &= \lambda_{P_\gamma\otimes I_p}P_\gamma\otimes I_p.
\end{align}
The restriction of Eq.~(\ref{eqn: control gate restriction}) is common in control gates and Pauli channels, which means binomial analysis is applicable for the characterization of both Pauli channels and control quantum gates.

Upon one application of $\mathcal{N}^\prime_{k_{p, 1}}(J_{0, k_{p, 0}})$, we obtain:
\begin{align}
    \mathcal{N}^\prime_{k_{p, 1}}(J_{k_{p, 0}}) &= \lambda_{P_\gamma}(k_{p, 0},k_{p, 1})J_{k_{p, 1}},\\
    \lambda_{P_\gamma}(k_{p, 0},k_{p, 1}) &= (-1)^{k_{p, 0}}\frac{\lambda_{P_\gamma\otimes I_p}+(-1)^{k_{p,0}+k_{p, 1}}\lambda_{P_\gamma\otimes Z_p}}{2}, \label{eqn: binomial eigenvalue}
\end{align}
where we use $k_{p,i}$ to denote the $i$-th read-out of MCMs at qubit $p$. Recalling Eq.~(\ref{eqn: the flip variable}), where the flip variable $f_{p,i} = k_{p, i-1}\oplus k_{p, i}\in \mathbb{Z}_2$ is introduced, and given a sequential operation of Eq.~(\ref{eqn: binomial inital Pauli}) without the summation and $f = \sum_{i}f_{p,i}$, we can establish the following identity:
\begin{equation}
    \prod_{i=1}^N\mathcal{N}_{k_{p,i}}(J_{0}) = (-1)^{N_{\vec{k}_p}}(\lambda_{P_{\gamma}}(0, 1))^f(\lambda_{P_{\gamma}}(0,0))^{N-f}J_{k_{p, N-1}}.
\end{equation}
where $N_{\vec{k}_p} = \sum_{i = 0}^{N-1}k_{p,i}$ is the sum of the signs of the values in Eq.~(\ref{eqn: binomial eigenvalue}). Define $\lambda_{P_\gamma}(f)$ as the expectation value measured when $f$ flips are detected in the MCM read-out. Since there are $\binom{N}{f}$ equivalent read-outs for which the event may be registered as an $f$-flip, expanding the distribution of expectation values with respect to $f$ yields the following result:
\begin{equation}
    \lambda_{P_\gamma}(f) = (-1)^{N_{\vec{k}_p}}\binom{N}{f}(\lambda_{P_{\gamma}}(0, 1))^f(\lambda_{P_{\gamma}}(0,0))^{N-f}.
    \label{eqn: generalized binomial distribution}
\end{equation}
When we finally post-process the MCMs read-out with the stochastic CNOT, the sign $(-1)^{N_{\vec{k}_p}}$ can be removed, because at each iteration the sign $(-1)^{k_{p,0}}$ of Eq.~(\ref{eqn: binomial eigenvalue}) can be canceled with the sign in $J_{k_{p,1}}$. Interestingly, the following identity holds:
\begin{equation}
    (-1)^{\sum_{i}f_{p, i}} = (-1)^f =(-1)^{k_{p,N-1}}.
\end{equation}
Therefore, measuring the following quantity and averaging over the read-outs recovers the desired Pauli fidelities:
\begin{align}
    \sum_{f}\mathrm{Tr}\{|\lambda_{P_\gamma}(f)|J_{k_{p,N-1}}P_{\gamma}\otimes I_p\} &= \lambda_{P_\gamma\otimes I_p}P_{\gamma}\otimes I_p,\label{eqn: binomial fidelity 1}\\
    \sum_{f}\mathrm{Tr}\{|\lambda_{P_\gamma}(f)|J_{k_{p,N-1}}P_{\gamma}\otimes Z_p\} &= \lambda_{P_\gamma\otimes Z_p}P_{\gamma}\otimes Z_p.
    \label{eqn: binomial fidelity 2}
\end{align}
These are precisely the fidelities of $P_\beta$. As a special case when $P_\gamma = I^{\otimes(n-1)}$, we recover the binomial distribution.

Therefore, MCMs provide a powerful tool that probes the binomial spectrum given Markovian Clifford gates and separable USI. Each Pauli fidelity learned in the quantum hardware admits a binomial expansion, so that we may read out the Markovian contributions to the fidelities. Conversely, when a deviation from the binomial statistics is observed, we can also faithfully detect it via this method. Some deviations, such as the classical read-out errors in Figure~\ref{fig:asymmetry}, do not affect the final computation of the fidelities. However, others, such as the results in Figure~\ref{fig:non-Markovian-plot}, reveal an accuracy bound where the Markovian assumption is replaced by new non-Markovian dynamics. This analysis is particularly useful in the sense that the deviations provide a new tool for assessing the accuracy lower bound of current Pauli noise learning techniques that rely on the Markovian assumption.

\subsection{Classical Read-out Error}

In Eq.~(\ref{eqn: stochastic CNOT}), it is already assumed that the read-out errors are averaged due to randomized compilation. However, we expect that the classical read-out assignment rate differs depending on the collapsed quantum basis state $0$ or $1$. In this section, we derive the effective read-out error fidelity due to randomized compilation.

Let us define the bit-flip probability $1\rightarrow 0$ as $\epsilon_0$ and $0\rightarrow 1$ as $\epsilon_1$. When considering noise learning with classical feed-forward, the asymmetry $\epsilon_0/\epsilon_1$ is naturally correlated with each classical read-out $(-1)^{k_{p, i}}$ of the noisy MCMs, and this contributes to a non-linear noise effect. This leads to a non-systematic characterization of classical read-out errors that depends on the read-out. Via randomized compilation, the classical read-out errors are decoupled from the classical feed-forward, since the physical read-outs are randomized while the digital read-outs are not. In the following, we show how randomized compilation allows the extraction of classical read-out errors from the classical feed-forward.

Consider the form of Eq.~(\ref{eqn: generalized binomial distribution}). When we compute the final Pauli fidelity of Eq.~(\ref{eqn: binomial fidelity 1}) or (\ref{eqn: binomial fidelity 2}), we would like to sample the following expectation values with equal weight:
\begin{equation}
    S_1 = \{|\lambda_{P_\gamma}(0)|, |\lambda_{P_\gamma}(1)|,\ldots, |\lambda_{P_\gamma}(N)|\}.
\end{equation}
The sum or the alternating sum will then give us the desired fidelities. However, imperfect classical feed-forward also forces us to sample from the opposite set:
\begin{equation}
    S_2 = \{-|\lambda_{P_\gamma}(0)|, -|\lambda_{P_\gamma}(1)|,\ldots, -|\lambda_{P_\gamma}(N)|\}.
\end{equation}
This set subsequently computes the negative of Eq.~(\ref{eqn: binomial fidelity 1}) or (\ref{eqn: binomial fidelity 2}). Suppose that, in the untwirled case, the number of $1$s in the read-out is the exponent of the sign $N_1 =N_{\vec{k}_p}$, and the number of $0$s is $N_0 = N-N_{\vec{k}_p}$. The probability, after all errors in the classical feed-forward are accounted for, that the sampling remains in $S_1$ rather than flipping to the undesirable $S_2$ is:
\begin{align}
    p_{S_1}(N_0) 
    &= \left[\sum_{k \text{ even}}\binom{N_0}{k}\epsilon_0^k(1-\epsilon_0)^{N_0-k}\right]\nonumber\\
    &\quad\times\left[\sum_{k \text{ even}}\binom{N_1}{k}\epsilon_1^k(1-\epsilon_1)^{N_1-k}\right]\nonumber\\
    &\quad+\sum_{k \text{ odd}}(\dots)\times \sum_{k \text{ odd}}(\dots)\nonumber\\[6pt]
    &= \frac{(1+(1-2\epsilon_0)^{N_0})(1+(1-2\epsilon_1)^{N_1})}{4}\nonumber\\
    &\quad+\frac{(1-(1-2\epsilon_0)^{N_0})(1-(1-2\epsilon_1)^{N_1})}{4}\nonumber\\[6pt]
    &= \frac{1+(1-2\epsilon_0)^{N_0}(1-2\epsilon_1)^{N_1}}{2}.
\end{align}

Next, we invoke bit-flip averaging. This allows $N_0$ to be randomly scrambled in the physical qubits while the logical read-out remains $N_0$. The averaging process removes the dependence on $N_0$ and leads to the effective read-out probability:
\begin{align}
    p_{S_1} &= \frac{1}{2^N}\sum_{N_0}\binom{N}{N_0}p_{S_1}(N_0),\nonumber\\
    &=\frac{1+(1-\epsilon_0-\epsilon_1)^N}{2}.
\end{align}
In post-processing, the effective error channel becomes:
\begin{align}
    \mathcal{C}(\cdot) &= \frac{1+(1-\epsilon_0-\epsilon_1)^N}{2}I(\cdot)I\nonumber\\
    &+\frac{1-(1-\epsilon_0-\epsilon_1)^N}{2}Q_p(\cdot)Q_p.
\end{align}
acting on $P_\gamma$ with $\{P_\gamma, Q_p\} = 0$ (due to the classical feed-forward).
Since we sample from the set of eigenvalues of $S_1$ with probability $p_{S_1}$ and from $S_2$ with probability $1-p_{S_1}$, the new effective density matrix with average error $q_1 = \frac{\epsilon_0+\epsilon_1}{2}$ between $S_1$ and $S_2$ is given by:

\begin{align}
    \mathcal{C}\left(\sum_{\lambda_k\in S_1}\lambda_kP_\gamma\otimes I\right) &= p_{S_1}\sum_{\lambda_k\in S_1}\lambda_kP_\gamma\otimes I\nonumber \\
    &-p_{S_2}\sum_{\lambda_k\in S_1}\lambda_kP_\gamma\otimes I\nonumber, \\
    &= (1-2q_1)^N\lambda_{P_\gamma\otimes I}P_\gamma\otimes I.
    \label{eqn:corrected-res}
\end{align}
If we define
\begin{align}
    S_1^\prime = &\{|\lambda_{P_\gamma}(0)|, -|\lambda_{P_\gamma}(1)|,\ldots, \nonumber\\
    &(-1)^f|\lambda_{P_\gamma}(f)|,\ldots, (-1)^N|\lambda_{P_\gamma}(N)|\}
\end{align}
as the alternating sum of the unnormalized binomial distribution, we instead derive:
\begin{align}
    \mathcal{C}\left(\sum_{\lambda_k\in S_1^\prime}\lambda_kP_\gamma\otimes Z\right) &= p_{S_1}\sum_{\lambda_k\in S_1^\prime}\lambda_kP_\gamma\otimes Z\nonumber \\
    &-p_{S_2}\sum_{\lambda_k\in S_1^\prime}\lambda_kP_\gamma\otimes Z,\nonumber \\
    &= (1-2q_1)^N\lambda_{P_\gamma\otimes Z}P_\gamma\otimes Z.
\end{align}
Therefore, twirling averages out classical read-out errors.

\subsection{Impact on Binomial Analysis}
In the previous sections, we derived the binomial expression for binomial analysis directly from propagating the Paulis through the channels, and computed the impact of classical read-out errors on the fidelity. Here, the effect of classical read-out errors on the binomial distribution is further explored. From Section~\ref{section: Experiment}, asymmetry is observed in the binomial analysis, and Figure~\ref{fig:even-bias} points to the fact that the majority of the asymmetry contributions originate from single read-out error events from $f = 0$ events that populate the $f = 2$ events. In Figure~\ref{fig:asymmetry}, we simulate, with the USI model, the emergence of asymmetry in the presence of classical read-out assignment errors, verifying the experimental observation.

\section{Exponential Fits of MCMs-based CB}
\label{appendix: Exponential Fits of MCMs-based CB}

Figures~\ref{fig: aachen panel top}, \ref{fig: aachen panel bottom}, \ref{fig: pittsburgh panel top}, and \ref{fig: pittsburgh panel bottom} present the exponential interpolations of the MCM-based CB corresponding to panels (a)--(d) of Figure~\ref{fig: Mid-circuit benchmarking}, respectively. The fitting and post-processing techniques are taken directly from the exact unlearnable noise learning method described in Subsection~\ref{subsection: CNOT gate}, and the resulting fits provide additional information on fitting quality that is reflected in Figure~\ref{fig: Mid-circuit benchmarking}. The blue dashed lines show CB data without post-processing, while the orange lines show data obtained from the deferred measurement principle of Theorem~\ref{lemma: generalized post-processing}. The panel layout is consistent across all four figures: left half of the columns correspond to CB on CNOT gates interleaved with MCMs, right halves to CB on MCMs alone; top rows correspond to benchmarking without single-qubit rotations, bottom rows to benchmarking with single-qubit rotations applied prior to the MCM. Post-processing on the top rows yields the unlearnable fidelity pairs labelled $\{IZ2, ZZ2\}$ in Figure~\ref{fig: Mid-circuit benchmarking}, while post-processing on the bottom rows yields the pairs labelled $\{IZ, ZZ\}$. Sub-panels (a), (c), (e), and (g) provide the data needed to extract the Pauli fidelity $\lambda_{IZ}$, while (b), (d), (f), and (h) provide the data for $\lambda_{ZZ}$.

The quality of the exponential interpolation can be used to infer the quality of the unlearnable noise extraction. For example, the irregular fluctuation of the data points in the lower panels of Figure~\ref{fig: pittsburgh panel top} accounts for the large error bar in panel (c) of Figure~\ref{fig: Mid-circuit benchmarking}. Similarly, the deviation from the linear trend in panel (h) of Figure~\ref{fig: pittsburgh panel bottom} leads to an underestimation of the MCM noise, which in turn inflates the estimated $\lambda_{ZZ}$ shown in the second bar of panel (d) in Figure~\ref{fig: Mid-circuit benchmarking} and pushes it outside the range allowed by the CPTP bound.

\begin{figure*}
    \centering
    \includegraphics[width=1\linewidth]{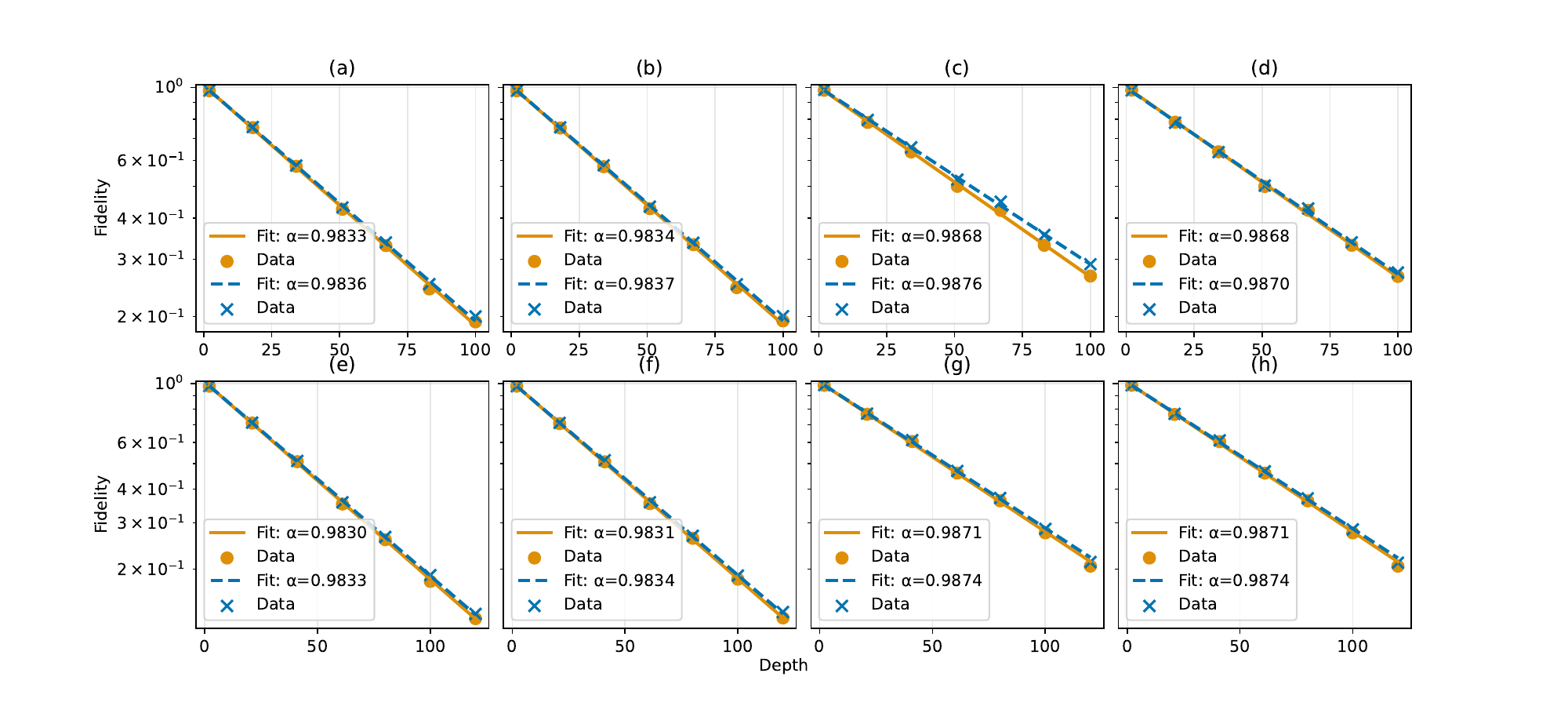}
    \caption{MCM-based CB interpolation for unlearnable fidelities of qubit pair [103, 204] of \texttt{ibm-aachen}. Rows from the top to the bottom respectivly represent data that is learnt from MCMs with/without single-qubit rotations. Each row contains all necessary information for extracting classical read-out errors and the unlearnable fidelities. Within each row, the left/right half of the panels are CB over MCMs with/without interleaving CNOT. (a)/(b), (c)/(d), (e)/(f), and (g)/(h) correspond to the expectation values measured for $IZ$/$ZZ$ at the end of the quantum circuit, through which the unlearnable pairs are extracted.}
    \label{fig: aachen panel top}
\end{figure*}
\begin{figure*}
    \centering
    \includegraphics[width=1\linewidth]{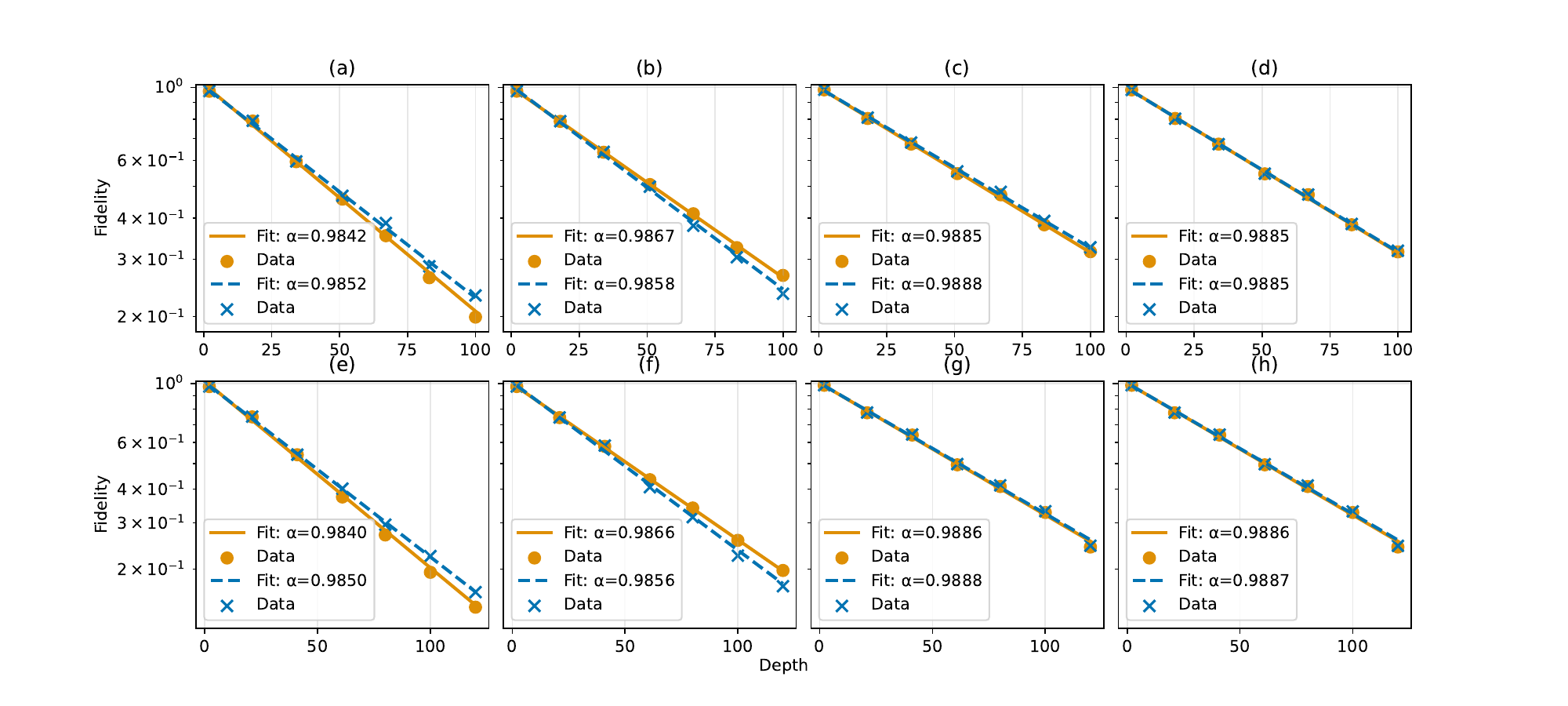}
    \caption{MCM-based CB interpolation for unlearnable fidelities of qubit pair [64, 63] of \texttt{ibm-aachen}. Panel layout as in Figure~\ref{fig: aachen panel top}.}
    \label{fig: aachen panel bottom}
\end{figure*}

\begin{figure*}
    \centering
    \includegraphics[width=1\linewidth]{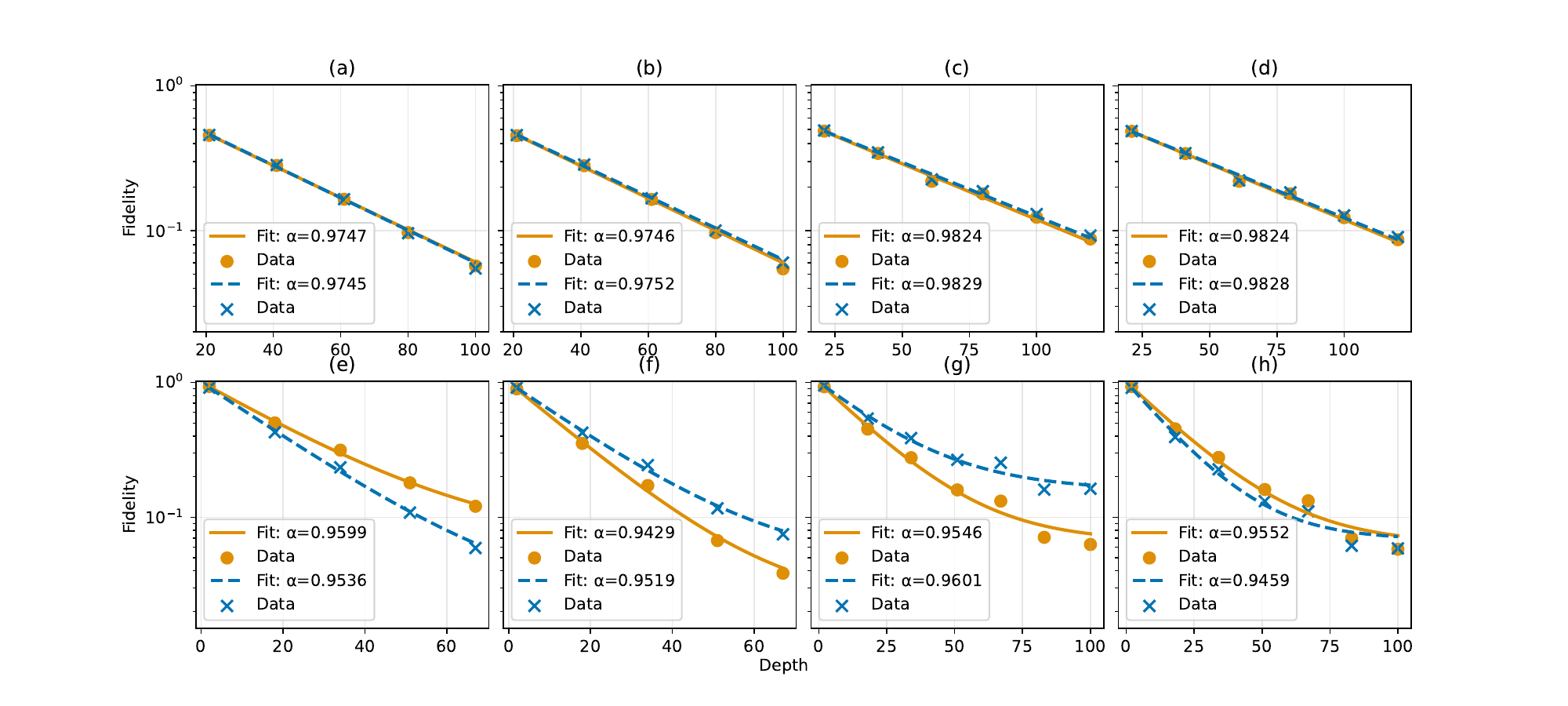}
    \caption{MCM-based CB interpolation for unlearnable fidelities of qubit pair [107, 108] of \texttt{ibm-pittsburgh}. Panel layout as in Figure~\ref{fig: aachen panel top}. Some data points were removed in the fitting process to obtain a consistent characterization. }
    \label{fig: pittsburgh panel top}
\end{figure*}

\begin{figure*}
    \centering
    \includegraphics[width=1\linewidth]{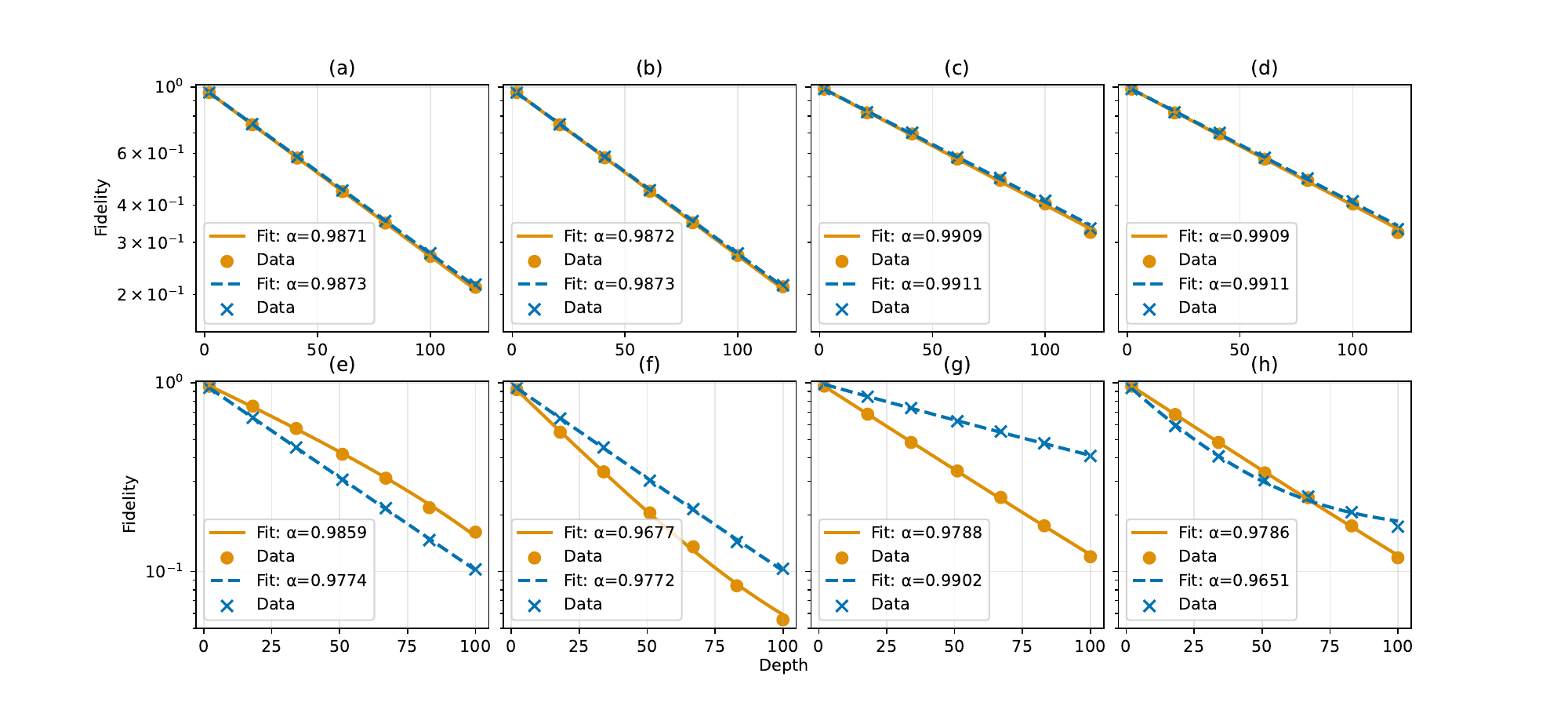}
    \caption{MCM-based CB interpolation for unlearnable fidelities of qubit pair [37, 45] of \texttt{ibm-pittsburgh}. Panel layout as in Figure~\ref{fig: aachen panel top}.}
    \label{fig: pittsburgh panel bottom}
\end{figure*}

\section{Non-Markovian Time Correlations}
\label{appendix: time correlation}
\begin{figure*}
    \centering
    \includegraphics[width=1\linewidth]{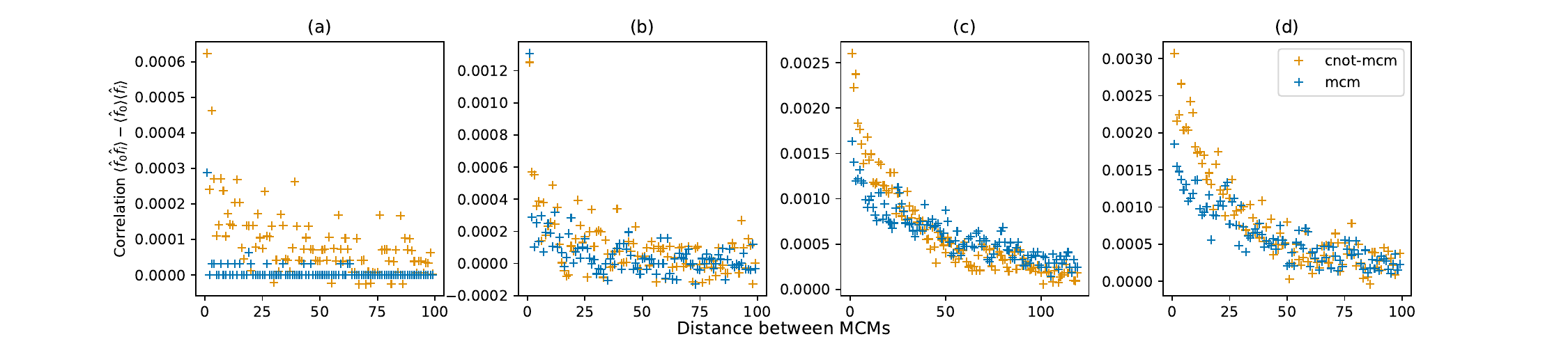}
    \caption{\justifying Time correlations between flips $\hat{f}_i = Z_{i-1}Z_i$ in mid-circuit measurements, corresponding to the binomial data set in Figure~\ref{fig:non-Markovian-plot} in the main text. Time correlation decay is captured in \texttt{ibm-pittsburgh} in (c) and (d) as the distance from the nearest MCMs to state preparation increases. Time correlation decay is also captured when the CNOT gate is employed in \texttt{ibm-aachen} in (a) and (b), but otherwise remains constant.}
    \label{fig:time-correlate}
\end{figure*}

Non-Markovianity is demonstrated in the binomial analysis of \texttt{ibm-aachen} and \texttt{ibm-pittsburgh} in Figure~\ref{fig:non-Markovian-plot}. Recall the flip variable introduced in Eq.~(\ref{eqn: flip observable}). We further compute the time correlations between each flip observable, $\langle \hat{f}_i\hat{f}_j\rangle-\langle \hat{f}_i\rangle\langle \hat{f}_j\rangle$, to verify the existence of time correlations in the experiment. The final result is illustrated in Figure~\ref{fig:time-correlate}.

Similarly to Figure~\ref{fig:non-Markovian-plot}, the progression of this figure from left to right represents the experiment on \texttt{ibm-aachen} without twirling, \texttt{ibm-aachen} with twirling, \texttt{ibm-pittsburgh} without rotation, and \texttt{ibm-pittsburgh} with rotation. The time correlations observed in the figure agree with the analysis conducted in the main text. Across all plots, the first pattern that emerges is that the presence of the CNOT gate \textit{increases} the time correlations. This is consistent with the increase in magnitude of the exponential decays in Figure~\ref{fig:non-Markovian-plot} with CNOT. Meanwhile, in panel (a), the observation of near-zero correlation in the MCMs experiment is consistent with the fact that classical read-out errors that induce asymmetries are time-independent events. Compared to panel (b), the time correlation in (c) and (d) is more pronounced, consistent with the idea that the magnitude of non-Markovianity in (c) and (d) is greater, as shown in Table~\ref{tab:fitted_params}.
\end{document}